\def\BibTeX{{\rm B\kern-.05em{\sc i\kern-.025em b}\kern-.08em
    T\kern-.1667em\lower.7ex\hbox{E}\kern-.125emX}}
\setlist[enumerate]{itemsep=0mm}
\newcommand{\Rmnum}[1]{\expandafter\@slowromancap\romannumeral #1@}
\newcommand{\CD}{\ensuremath{\mathcal{D}}}
\newcommand{\CN}{\ensuremath{\mathcal{N}}}
\newcommand{\CR}{\ensuremath{\mathcal{R}}}
\newcommand{\CS}{\ensuremath{\mathcal{S}}}
\newcommand{\CT}{\ensuremath{\mathcal{T}}}
\newcommand{\CU}{\ensuremath{\mathcal{U}}}
\newcommand{\CW}{\ensuremath{\mathcal{W}}}
\newcommand{\CX}{\ensuremath{\mathcal{X}}}
\newcommand{\CY}{\ensuremath{\mathcal{Y}}}
\newcommand{\FB}{\ensuremath{\mathbf{B}}}
\newcommand{\FU}{\ensuremath{\mathbf{U}}}
\newcommand{\BR}{\ensuremath{\mathbb{R}}}
\newcommand{\BP}{\ensuremath{\mathbb{P}}}
\newcommand{\BZ}{\ensuremath{\mathbb{Z}}}
\newcommand{\rd}{\ensuremath{\mathrm{d}}}
\newcommand{\rad}{\ensuremath{\mathtt{RAD}}}
\definecolor{darkblue}{RGB}{48, 84, 151}
\definecolor{lightblue}{RGB}{222, 235, 246}
\definecolor{ao(english)}{rgb}{0.0, 0.5, 0.0}
\definecolor{applegreen}{rgb}{0.55, 0.71, 0.0}
\useunder{\uline}{\ul}{}
\DeclareMathOperator*{\argmax}{arg\,max}
\newtheorem{theorem}{Theorem}
\newtheorem{lemma}{Lemma}
\newtheorem{proposition}{Proposition}
\newtheorem*{claim}{Claim}
\theoremstyle{definition}
\newtheorem{definition}{Definition}  
\begin{document}

\title{Text-CRS: A Generalized Certified Robustness Framework against Textual Adversarial Attacks\thanks{\textsuperscript{\S}Corresponding Authors: Yuan Hong and Zhongjie Ba}
}


\author{
\IEEEauthorblockA {Xinyu Zhang\textsuperscript{$*\dagger$}, Hanbin Hong\textsuperscript{$\dagger$}, Yuan Hong\textsuperscript{$\dagger$}\textsuperscript{\S}, Peng Huang\textsuperscript{$*$}, Binghui Wang\textsuperscript{$\ddagger$}, Zhongjie Ba\textsuperscript{$*$}\textsuperscript{\S}, Kui Ren\textsuperscript{$*$}}
\IEEEauthorblockA{
\textsuperscript{$*$}Zhejiang University, \textsuperscript{$\dagger$}University of Connecticut, \textsuperscript{$\ddagger$}Illinois Institute of Technology \\
\{xinyuzhang53, penghuang, zhongjieba, kuiren\}@zju.edu.cn, \{hanbin.hong, yuan.hong\}@uconn.edu, bwang70@iit.edu
}
}

\maketitle

\begin{abstract}

The language models, especially the basic text classification models, have been shown to be susceptible to textual adversarial attacks such as synonym substitution and word insertion attacks. To defend against such attacks, a growing body of research has been devoted to improving the model robustness. However, providing provable robustness guarantees instead of empirical robustness is still widely unexplored. In this paper, we propose Text-CRS, a generalized certified robustness framework for natural language processing (NLP) based on randomized smoothing. To our best knowledge, existing certified schemes for NLP can only certify the robustness against $\ell_0$ perturbations in synonym substitution attacks. Representing each word-level adversarial operation (i.e., synonym substitution, word reordering, insertion, and deletion) as a combination of permutation and embedding transformation, we propose novel smoothing theorems to derive robustness bounds in both permutation and embedding space against such adversarial operations. To further improve certified accuracy and radius, we consider the numerical relationships between discrete words and select proper noise distributions for the randomized smoothing. Finally, we conduct substantial experiments on multiple language models and datasets. Text-CRS can address all four different word-level adversarial operations and achieve a significant accuracy improvement. We also provide the first benchmark on certified accuracy and radius of four word-level operations, besides outperforming the state-of-the-art certification against synonym substitution attacks. 
%



\end{abstract}




\vspace{-0.1in}
\section{Introduction}
\vspace{-0.1in}

With recent advances in natural language processing (NLP), large language models (e.g., ChatGPT \cite{chatgpt:online} and Chatbots \cite{zhang2020dialogpt,shuster2020dialogue,roller2021recipes}) have become increasingly popular and widely deployed in practice. Wherein, text classification plays an important role in language models, and it has a wide range of applications, including content moderation, sentiment analysis, fraud detection, and spam filtering~\cite{9TextCla71:online,AIDocume25:online}. Nevertheless, text classification models are vulnerable to word-level adversarial attacks, which imperceptibly manipulate the words in input text to alter the output~\cite{ren2019generating,maheshwary2021generating,jin2020bert,garg2020bae,lee2022query,li2021contextualized,feng2018pathologies}. These attacks can be exploited maliciously to spread misinformation, promote hate speech, and circumvent content moderation~\cite{wu2019misinformation}.


To defend against such attacks, numerous techniques have been proposed to improve the robustness of language models, especially for text classification models. For instance, adversarial training~\cite{goodfellow2014explaining,madrytowards,dongtowards} retains the model using both clean and adversarial examples to enhance the model performance; feature detection~\cite{yoo2022detection,mozes2021frequency} checks and discards detected adversarial inputs to mitigate the attack; and input transformation~\cite{wang2021natural, yang2022robust} processes the input text to eliminate possible perturbations. However, these empirical defenses are only effective against specific adversarial attacks and can be broken by adaptive attacks~\cite{jin2020bert}.

One promising way to win the arms race against unseen or adaptive attacks 
is to provide provable robustness guarantees for the model. This line of work aims to develop certifiably robust algorithms that ensure the model's predictions are stable over a certain range of adversarial perturbations. Among different certified defense methods, randomized smoothing~\cite{cohen2019certified,zhang2020black,li2022sok} does not impose any restrictions on the model architecture and achieves acceptable accuracy for large-scale datasets. This method injects random noise, sampled from a smoothing distribution, into the input data during training to smoothen the classifier. The smoothed classifier will make a consistent prediction for a perturbed test instance (with noise) as the original class label. 
Despite the successful application of randomized smoothing in protecting vision models, applying these methods to safeguard language models remains fairly challenging.

\begin{table*}[]
\centering
\setlength\tabcolsep{3pt}
\caption{Comparison of certified defense methods for NLP robustness against textual adversarial attacks.}
\vspace{-0.07in}
\resizebox{\linewidth}{!}{
\begin{threeparttable}
\begin{tabular}{@{}lcccccccc@{}}
\toprule
\multirow{2}{*}{Method} & \multirow{2}{*}{Model architecture} & \multicolumn{4}{c}{Adversarial operations (smoothing distribution / $\ell_p$ perturbation)} & Certified & Uni- & Accuracy \\
 &  & Substitution & Reordering & Insertion & Deletion & radius / $\rad$ & versality & (large-scale data) \\
\midrule
IBP-trained~\cite{jia2019certified} & LSTM/Att. layer & \ding{51}& \ding{51}& \ding{51}& \ding{51} & \ding{55} & \ding{55} & Low \\
POPQORN~\cite{ko2019popqorn} & RNN/LSTM/GRU & \ding{51}& \ding{51}& \ding{51}& \ding{51} & \ding{55} & \ding{55} & Low \\
Cert-RNN~\cite{du2021cert} & RNN/LSTM & \ding{51}& \ding{51}& \ding{51}& \ding{51} & \ding{55} & \ding{55} & Low \\
DeepT~\cite{bonaert2021fast} & Transformer & \ding{51}& \ding{51}& \ding{51}& \ding{51} & \ding{55} & \ding{55} & Low \\
SAFER~\cite{ye2020safer} & Unrestricted & \ding{51} (Uniform / $\ell_0$) & \ding{55} & \ding{55} & \ding{55} & \ding{51} Practical & \ding{55} & High \\
RanMASK~\cite{zeng2021certified} & Unrestricted & \ding{51} (Uniform / $\ell_0$) & \ding{55} & \ding{55} & \ding{55} & \ding{51} & \ding{55} & High\\
CISS~\cite{zhao2022certified} & Unrestricted & \ding{51} (Gaussian / $\ell_2$) & \ding{55}& \ding{55}& \ding{55} & \ding{51} & \ding{55} & High \\
\midrule
Text-CRS (Ours) & Unrestricted & \ding{51} (Staircase / $\ell_1$) & \ding{51} (Uniform / $\ell_1$) & \ding{51} (Gaussian / $\ell_2$) & \ding{51} (Bernoulli / $\ell_0$) & \ding{51} Practical & \ding{51} & $>$  SOTA \\
\bottomrule
\end{tabular}
\begin{tablenotes}
    \item {1. The model architectures applicable to the first four methods have size restrictions, i.e., the number and size of layers cannot be too large.}
    \item {2. "Practical" means that the certified radius can correspond to the $\rad$-word level of perturbation. (We propose four practical certified radii.)}
\end{tablenotes}
\end{threeparttable}
}
\label{tab:intro}
\vspace{-0.23in}
\end{table*}

First, due to the discrete nature of the text data, numerical $\ell_1$ or $\ell_2$-norms cannot be directly used to measure the distance between texts. Without considering word embeddings,
previously certified defenses in the NLP domain, such as SAFER~\cite{ye2020safer} and WordDP~\cite{wang2021certified2}, are limited to certifying robustness against $\ell_0$ perturbations generated by synonym substitution attacks. Also, their assumption of uniformly distributed synonyms is impractical, leading to relatively low certified accuracy. Second, text classification models are vulnerable to a range of word-level operations that result in various perturbations. For instance, the word insertion operation introduces random words in the lexicon, while the word reordering operation causes positional permutation. These diverse perturbations can deceive text classification models successfully~\cite{garg2020bae,jin2020bert,lee2022query,li2021contextualized,feng2018pathologies}. To our best knowledge, there exist 
no certified defense methods 
against these word-level perturbations. 
Third, significant absolute differences between adversarial and clean texts may exist due to word-level operations, while conventional  
randomized smoothing can only ensure the model's robustness against perturbations within a small radius. 
Prior works~\cite{fischer2020certified,li2021tss,alfarra2022deformrs,hao2022gsmooth, liu2021pointguard} for images address this challenge by providing robustness guarantees against semantic transformations (e.g., rotation, scaling, shearing). However, they cannot be directly applied to the NLP domain because texts and words have a more heterogeneous discrete domain, and word insertion and deletion are new semantic transformations not involved in the image domain. 

\begin{figure}[]
  \centering
  \includegraphics[width=0.95\linewidth]{./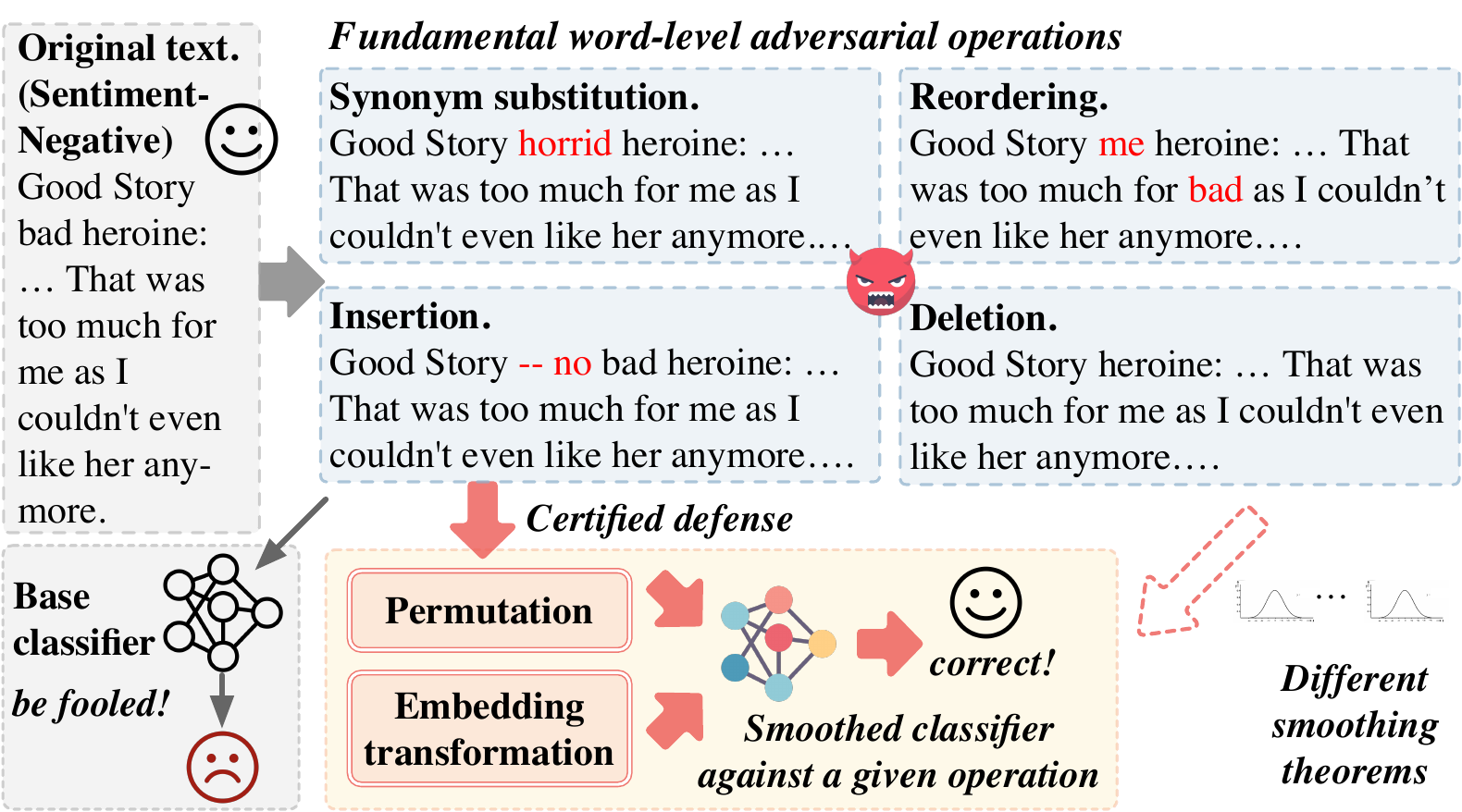}
  \vspace{-0.01in}
  \caption{Text-CRS is a robustness certification framework based on randomized smoothing of permutation and embedding transformations against word-level adversarial attacks.} 
  \label{fig:Text-CRS}
  \vspace{-0.25in}
\end{figure}

In this paper, we present Text-CRS, the first generalized certified robustness framework against common word-level textual adversarial attacks (i.e., synonym substitution, reordering, insertion, and deletion) via randomized smoothing (see Figure\,\ref{fig:Text-CRS}). Text-CRS 
\emph{certifies the robustness in the word embedding space} without imposing restrictions on model architectures, and demonstrates \emph{high universality} and \emph{superior accuracy} compared to state-of-the-art (SOTA) methods (see Table\,\ref{tab:intro}). 
Specifically, we first model word-level adversarial attacks as combinations of permutations and embedding transformations. 
{For instance, synonym substitution attacks transform the original words' embeddings with the synonyms' embeddings, while word reordering attacks transform the word orderings with certain word permutations.} 
Then, the {word-level adversarial attacks} can be guaranteed to be certified robust (``\emph{practical}'') if their corresponding permutation and embedding transformation are certified. 

To this end, we develop customized theorems (see Section~\ref{sec:theorem}) to derive the certified robustness against each attack.
In each theorem, we use an 
appropriate 
noise distribution for our smoothing distributions in order to fit different word-level attacks. 
For instance, unlike existing works~\cite{ye2020safer,huang2022word} that use the uniform distribution, we propose to use the Staircase-shape distribution~\cite{geng2015staircase} to simulate the synonym substitutions, which ensures that semantically similar synonyms are more likely to be substituted. 
Moreover, we use a uniform distribution to simulate the word reordering. 
For the word insertion with a wide range of inserted words, we inject Gaussian noise into the embeddings. For the word deletion, the embedding vector of each word is either kept or deleted (i.e., set to be zero). 
Hence, we use the Bernoulli distribution to simulate the status of each word. 
Our certified radii for the four attacks are then derived based on the corresponding noise distributions. 


To further improve the certified accuracy/radius, 
we also propose a training toolkit incorporating three optimization techniques designed for training. For instance, instead of using isotropic Gaussian noises that can lead to distortion in the word embedding space, we propose to use an anisotropic Gaussian noise and optimize it to enlarge the certified radius.  



Thus, our major contributions are summarized below: 

\begin{itemize}[leftmargin=*]
\setlength\itemsep{0.3em}
    \item To our best knowledge, we propose the first generalized framework Text-CRS to certify the robustness for text classification models against four fundamental word-level adversarial operations, covering most word-level textual adversarial attacks. Also, the certification against word insertion can be \emph{universally} applied to other operations. 

    \item We provide novel robustness theorems based on Staircase, Uniform, Gaussian, and Bernoulli smoothing distributions against different operations. We also theoretically derive certified robustness bounds for each operation. 


    \item To study the deceptive potential of adversarial texts, we apply ChatGPT to assess whether adversarial texts can be crafted to be semantically similar to clean texts.

    \item We conduct extensive experiments to evaluate Text-CRS, including our enhanced training toolkit, on three real datasets (AG’s News, Amazon, and IMDB) with two NLP models (LSTM and BERT). The results show that Text-CRS effectively handles five representative adversarial attacks and achieves an average certified accuracy of $81.7\%$, which is a $64\%$ improvement over SOTA methods. Text-CRS also significantly outperforms SOTA on the substitution operation. Besides, it provides new benchmarks for certified robustness against the other three operations. 
\end{itemize}

\vspace{-0.1in}
\section{Preliminaries} 
\label{sec:prelim}
\vspace{-0.1in}



\subsection{Text Classification} \label{sec:textclass}
\vspace{-0.1in}
The objective of text classification is to map texts to labels. A text consisting of $m$ words is denoted as $x=\{x_1,...,x_m\}$, where $x_i$ is the $i$th word, aka., a token. 
Text classification involves three key components: data processing, embedding layer, and classification model. Given a text $x$ labeled $y$, data processing first pads it to a fixed length $n$. When $m<n$, the processing inserts $n-m$ \texttt{<pad>} tokens to the end of the text, and when $m>n$, the processing drops the $m-n$ tokens. 
For brevity, we will denote a text $x$ with a fixed length of $n$ after data processing, i.e., $x=\{x_1,...,x_n\}$. 
Then, the embedding layer converts each token $x_i$ 
into a high-dimensional embedding vector $w_i$. After deriving the embedding matrix $w=\{w_1, \cdots, w_n\}$, the text classification model $h$ learns the relationship between the embedding matrix $w$ and the label $y$. For a tuple $(w, y)$, the model uses a loss function $\ell$ to derive the loss $\ell(h(w), y)$ and updates the model $h$, e.g., using stochastic gradient descent. 

In text classification tasks, the embedding layer is essential for representing words in a text and often has a large number of parameters. In practice, the embedding layer  is usually a pre-trained word embedding, such as Glove~\cite{pennington2014glove} or 
a pre-trained language model, such as BERT~\cite{devlin-etal-2019-bert}. The parameters in the embedding layer are frozen and 
only parameters in the classification model are updated during training. The classification model typically uses a deep neural network of different architectures, such as recurrent neural network (RNN)~\cite{schuster1997bidirectional}, convolutional neural network (CNN)~\cite{kim-2014-convolutional}, and Transformer model~\cite{vaswani2017attention}. 


\vspace{-0.1in}
\subsection{Adversarial Examples for Text Classification} \label{sec:adv_attacks}
\vspace{-0.1in}

Adversarial examples are well-crafted inputs that lead to the misclassification of machine learning models via slightly perturbing the clean data. In text classification, an adversarial text fools the text classification model, and is also semantically similar to the clean text for human imperceptibility. 
To generate adversarial texts, there are three main approaches based on different perturbation strategies. (1) \textit{Character-level adversarial attacks}~\cite{li2018textbugger, ebrahimi2017hotflip, gao2018black} substitute, swap, insert, or delete certain characters in a word and generate carefully crafted typos, such as changing \textit{``terrible''} to \textit{``terrib1e''}. However, these typos can be detected and corrected by spell checker easily~\cite{pruthi2019combating}. 
(2) \textit{Word-level adversarial attacks}~\cite{jin2020bert, moradi2021evaluating, morris2020textattack, garg2020bae, feng2018pathologies} alter the words within a text through four primary operations: substituting words with their synonyms, reordering words, inserting descriptive words, and deleting unimportant words. This attack is widely exploited because only a few words of perturbations can lead to a high attack success rate~\cite{wang2adversarial}.
(3) \textit{Sentence-level adversarial attacks}~\cite{iyyer2018adversarial, ribeiro2020beyond} adopt another perturbation strategy that paraphrases the whole sentence or inserts irrelevant sentences into the text. This approach affects predictions by disrupting the syntactic and logical structure of sentences rather than specific words. However, this attack makes it difficult to preserve the original semantics due to rephrasing or inserting irrelevant sentences~\cite{yang2022robust}.

As discussed above, the word-level adversarial attacks have widespread and severe impacts. Thus, in this paper, we focus on the certified defenses against \emph{word-level adversarial attacks}. We distill and consolidate such attacks into four fundamental adversarial operations: substitution, reordering, insertion, and deletion (see Figure \ref{fig:Text-CRS}). We offer provable robustness guarantees against these operations. The proposed theorems and defense methods can be extended to the other two types of adversarial attacks on text classification with the similar operations. 

\vspace{0.03in}

\noindent\textbf{Synonym Substitution:} 
this operation generates adversarial texts by replacing certain words in the text with their synonyms, thereby preserving the text's semantic meanings. 
For instance, to minimize the word substitution rate, \textit{TextFooler}~\cite{jin2020bert} picks the word crucial to the prediction, i.e., when this word is removed, the prediction undergoes a significant deviation; then, it selects synonyms with high cosine similarity to the original embedding vector for substitution. 

\vspace{0.03in}

\noindent\textbf{Word Reordering:} 
this operation selects and randomly reorders several consecutive words in the text while keeping the words themselves unchanged. For instance, 
\textit{WordReorder}~\cite{moradi2021evaluating} investigates the sensitivity of NLP systems to such adversarial operation and shows that reordering leads to an average decrease in the accuracy of 18.3\% and 15.4\% for the LSTM-based model and BERT on five datasets. 

\vspace{0.03in}

\noindent\textbf{Word Insertion:} 
this operation generates adversarial text by inserting new words into the clean text. For instance, the NLP adversarial example generation library TextAttack~\cite{morris2020textattack} includes a basic insertion strategy \textit{SynonymInsert} that inserts synonyms of words already in the text to maintain semantic similarity between adversarial and clean texts. 
\textit{BAE-Insert}~\cite{garg2020bae} uses masked language models (e.g., BERT) to predict newly inserted \texttt{<mask>} tokens in the text. The predicted words are then used to replace the \texttt{<mask>} tokens for the adversarial text. Compared to SynonymInsert, BAE-Insert improves syntactic and semantic consistency. 

\vspace{0.03in}

\noindent\textbf{Word Deletion:} 
this operation generates adversarial texts by removing several words from clean text. 
\textit{InputReduction}~\cite{feng2018pathologies} iteratively removes the least significant words from the clean text, and demonstrates that specific keywords play a critical role in the prediction of language models. Table~\ref{tab:attack_acc} shows that it can lead to an average of $51.78\%$ accuracy reduction. 

\vspace{-0.1in}
\subsection{Threat Model} \label{sec:threatmodel}
\vspace{-0.1in}

We consider a threat model similar to that of other randomized smoothing and certified methods~\cite{ye2020safer, zhao2022certified, li2021tss}, which guarantees robustness as long as perturbations remain within the certified radius. They provide effective defense against both white-box and black-box attacks, irrespective of the specific attack types and adopted methods.
Specifically, we assume that the adversary can launch the evasion attack on a given text classification model by intercepting the input and perturbing the input with a wide variety of \emph{word-level adversarial attacks}~\cite{jin2020bert,moradi2021evaluating,morris2020textattack,garg2020bae}. 
Given a text classification model $h$ and a text $x$ with label $y$, the goal of the adversary is to craft an adversarial text $x'$ from $x$ to alter its prediction, i.e., $h(x') \neq h(x)=y$. 
While generating the adversarial texts, the adversary can choose any single aforementioned word-level adversarial operation or a combination thereof, as all textual adversarial attacks can be unified as the transformation of the embedding vector. These four types of operations encompass almost all possible modifications to texts in adversarial attacks~\cite{moradi2021evaluating} and their 
formal definitions are provided as below: 

\vspace{0.03in}

\noindent\textbf{Synonym Substitution:} we replace certain words in the text $x$ with synonyms of the original word. Specifically, we convert each word $x_i$ to $x_i'$, where $x_i'$ may be a synonym of $x_i$ or $x_i$ itself. The operation can be represented as:
$$x=\{{x_1}, \cdots, {x_n}\}\to x'=\{x_1', \cdots, x_n'\},$$
where $x$ and $x'$ are of the same length.

\vspace{0.03in}

\noindent\textbf{Word Reordering:} to reorder certain words in the text $x$, we move the word at position $i$ to position $r_i$, where $r_i$ may be equal to $i$. The operation can be expressed as:
$$x=\{{x_1}, \cdots, {x_n}\} \to x'=\{{x_{r_1}}, \cdots, {x_{r_n}}\},$$
where $x$ and $x'$ are of the same length. 

\vspace{0.03in}

\noindent\textbf{Word Insertion:} 
we insert a word $x_\mathrm{In}^1$ into the $j_1$th position, a word $x_\mathrm{In}^2$ into the $j_2$th position, $\cdots$, and a word $x_\mathrm{In}^{m'}$ into the $j_{m'}$th position to $x$, where $m'$ is the total number of inserted words. The operation can be expressed as:
\begin{equation}
\begin{aligned}
x= &\{x_1,\cdots, x_{j_1-1}, x_{j_1}, \cdots,  x_{j_2-1}, x_{j_2}, \cdots, x_n\} \to \\
x'= &\{x_1,\cdots, x_{j_1-1}, x_\mathrm{In}^1, x_{j_1}, \cdots, x_{j_2-1}, x_\mathrm{In}^2, x_{j_2}, \cdots, x_n\},
\nonumber
\end{aligned}
\end{equation}
where $x'$ includes $m'$ more words than $x$.

\vspace{0.03in}

\noindent\textbf{Word Deletion:} 
we delete $m'$ words at position $j_1, \cdots, j_{m'}$ from $x$. The operation can be represented as:
\begin{equation}
\begin{aligned}
x=&\{x_1,\cdots, x_{j_1-1}, x_{j_1}, x_{j_1+1}, \cdots, x_{j_2-1}, x_{j_2}, \cdots, x_n\} \to \\
x'=&\{x_1,\cdots, x_{j_1-1}, x_{j_1+1}, \cdots, x_{j_2-1}, x_{j_2+1}, \cdots, x_n\},
\nonumber
\end{aligned}
\end{equation}
where $x'$ includes $m'$ words less than $x$.


\vspace{-0.1in}
\subsection{Randomized Smoothing for Certified Defense}
\vspace{-0.1in}

Randomized smoothing~\cite{lecuyer2019certified,cohen2019certified} is a widely adopted certified defense method that offers state-of-the-art provable robustness guarantees for classifiers against adversarial examples. 
It has two key advantages: applicable to any classifier and scalable to large models. 
Given a testing example ${x}$ with label $y$ from a label set $\mathcal{Y}$, 
randomized smoothing has three steps: 1) 
define a (\emph{base}) classifier $h$; 2) build a \emph{smoothed classifier} $g$ based on $h$, ${x}$, and a noise distribution;  
and 3) derive certified robustness for the smoothed classifier $g$. 
Under our context, the base classifier $h$ can be any trained text classifier and $x$ is a testing text.
Let  $\epsilon$ be a random noise drawn from an \emph{application-dependent} {noise distribution}. Then the smoothed classifier $g$ is defined as $g({x}) = \argmax_{l\in \mathcal{Y}}\text{Pr}(h({x}+\epsilon)=l)$.
Let $p_A, p_B\in[0,1]$ be the probability of the most ($y_A$) and the second most probable class ($y_B$) outputted by $\text{Pr}(h({x}+\epsilon))$, respectively, i.e., $p_A= \max_{l} \text{Pr}(h({x}+\epsilon)=l)$ and $p_B=\max_{l \neq y_A} \text{Pr}(h({x}+\epsilon)=l)$. 
Then $g$ provably predicts the same label $y_A$ for ${x}$ once the adversarial perturbation $\delta$ is bounded, i.e., $ g({x}+\delta) = y_A, \forall ||\delta||_p \leq R$, where  $||\cdot ||_p$ is an $\ell_p$ norm and $R$ is called \emph{certified radius} that depends on $p_A, p_B$. 
For example, when the noise distribution is an isotropic Gaussian distribution with mean 0 and standard deviation $\sigma$, \cite{cohen2019certified} adopted the Neyman-Person Lemma~\cite{neyman1933ix} and derived a \emph{tight} robustness guarantee against $l_2$ perturbation, i.e., 
$g({x}+\delta) = y_A, \forall ||\delta||_2 \leq R=\frac{\sigma}{2}(\Phi^{-1}({p_A})-\Phi^{-1}({p_B}))$, 
where $\Phi^{-1}$ is the inverse of the standard Gaussian {CDF}. 
This property implies that the smoothed classifier $g$ maintains constant predictions if the norm of the perturbation $\delta$ is smaller than the certified radius $R$.

\vspace{-0.05in}
\section{The Text-CRS Framework}
\vspace{-0.1in}

This section introduces Text-CRS, a novel certification framework that offers provable robustness against adversarial word-level operations. We first outline the new challenges in designing such certified robustness. Next, we 
formally define the permutation and embedding transformations that correspond to each adversarial operation. We then define the perturbation that Text-CRS can certify for each permutation and embedding transformation. Finally, we conclude with a summary of our framework and its defense goals.

\vspace{-0.1in}
\subsection{New Challenges in Certified Defense Design} 
\vspace{-0.1in}

Previous studies on certified defenses in text classification, including SAFER~\cite{ye2020safer} and CISS~\cite{zhao2022certified}, only provide robustness guarantees \emph{against substitution operations}. 
Certified defenses against other word-level operations, such as word reordering, insertion, and deletion, are 
unexplored. 
We list below the weaknesses of existing defenses as well as several technical challenges: 

\begin{itemize}[leftmargin=*]

\setlength\itemsep{0.3em}

      

    \item \textbf{Measuring the perturbation and certified radius}. Words are unstructured strings and there is no numerical relationship among discrete words. This makes it challenging to measure the $\ell_1$ and $\ell_2$ distance between words, as well as the perturbation distance between the original and adversarial text (while deriving the certified radius).
    
    \item \textbf{Customized noise distribution for randomized smoothing against different word-level attacks}.
           The assumption of a uniform substitution distribution among synonyms is unrealistic, as different synonyms exhibit varying substitution probabilities. However, previous works on the certified robustness against word substitution almost assume a uniform distribution within the set of synonyms. Such an assumption makes these works yield relatively low certified accuracy (see Section~\ref{sec:results}). Hence, we need to 
           construct customized noise distributions best suited for the certification against the synonyms substitution attack as well as the other three word-level attacks.

         \item \textbf{Inaccurate representation of distance}. The absolute distance between operation sequences is typically high for word reordering, insertion, and deletion operations. Although studies, such as TSS~\cite{li2021tss} and DeformRS~\cite{alfarra2022deformrs}, have investigated the pixel coordinate transformation in the image domain, the word reordering transformation has not been studied in the NLP domain. Moreover, word insertion and deletion are unique transformations specific to NLP which are not applicable in the image domain. 

\end{itemize}

To address these challenges, we employ the 
numerical word embedding matrix\cite{pennington2014glove, devlin-etal-2019-bert} as inputs to our model instead of the word sequence. We can then use embedding matrices to measure the $\ell_1$ and $\ell_2$ distance between word sequences. We also introduce a permutation space to solve the problem of high absolute distance. 
Moreover, we design a customized noise distribution for randomized smoothing w.r.t. each attack and derive the corresponding certified radius, shown in Theorems in Section~\ref{sec:theorem}. 


\vspace{-0.1in}
\subsection{Permutation and Embedding Transformation} \label{sec:notation}
\vspace{-0.1in}

\begin{table}[]
\centering
\caption{Frequently Used Notations}
\vspace{-0.05in}
\begin{tabular}{@{}ll@{}}
\toprule
Term      & Description               \\
\midrule
$\CX$ & Input text space \\
$\CW\subseteq \BR^{n\times d}$ & Embedding space \\
$\CU\subseteq \BR^{n\times n}$ & Permutation space \\
$\CU\cdot \CW \subseteq \BR^{n\times d} $ & Embedding space after applying permutation \\ 
$\CY$ & Label space \\ 
$\theta(u, r): \CU \times \CR$ & Permutation with parameter $r$ on $u$ \\ 
$\phi(w, t): \CW \times \CT$ & Embedding transformation with parameter $t$ on $w$ \\ 
$L_{emb}:\CX\to \CW$ & The pre-trained embedding layer \\
$h:\CU\cdot \CW\to \CY$ & Classification model (i.e., base classifier) \\
$n$ & Constant maximum length of input sequence \\
$d$ & Dimension of each embedding vector \\
$\delta$ & Perturbations of permutation or embedding space \\
\bottomrule
\end{tabular}
\label{tab:notation}
\vspace{-4mm}
\end{table}

\subsubsection{Notations}
We denote the space of input text as $\CX $, the space of corresponding embedding as $\CW\subseteq \BR^{n\times d}$ (where $n$ is the constant maximum length of each input sequence and $d$ is the dimension of each embedding vector), and the space of output as $\CY=\{1, \cdots C\}$ where $C$ is the number of classes. We denote the space of embedding permutations as $\CU \subseteq \BR^{n\times n}$. For instance, given a word sequence $x$, its embedding matrix is $w=\{w_1, \cdots , w_n\}$, its permutation matrix is $u=\{u_1, \cdots, u_n\}$, and the input to the classification model will be $u\cdot w$. The position of $w_i$ is denoted by $u_i=[0, \cdots, 0, 1, 0, \cdots, 0]$, a standard basis vector represented as a row vector of length $n$ with a value of $1$ in the $i$th position and a value of $0$ in all other positions.
%
%

We model the \emph{permutation transformation} as a deterministic function $\theta: \CU \times \CR \to \CU$, where the permutation matrix $u\in \CU$ is permuted by a $\CR$-valued parameter $r$. Vector $u_i$ is replaced with $u_{r}$ by applying $r$, and then the word embedding ($w_i$) is moved from position $i$ to $r$. 
Moreover, we model the \emph{embedding transformation} as a deterministic function $\phi: \CW\times \CT \to \CW$, where the original embedding $w\in \CW$ is transformed by a $\CT$-valued  parameter $t$. Based on such transformation, we can define all the operations in Section~\ref{sec:threatmodel}. For example, $\theta_I(u, r)\cdot \phi_I(w, t)$ represents the word insertion on the original input $u\cdot w$ with a permutation parameter $r$ and an embedding parameter $t$. Here, we denote $\cdot$ as applying the permutation $u$ to the embedding $w$, and $\times$ as the operations of the parameters applied to the permutation or the embedding matrices.

For simplicity of notations, we denote the classification model as $h: \CU\cdot \CW \to \CY$. Then, we adopt the pre-trained embedding layer ($L_{emb}$) for the text classification task, freeze its parameters, and only update parameters in the classification model. Essentially, the input space of the model ($\CU\cdot \CW \subseteq \BR^{n\times d}$) is the same as $\CW$, and the training process of $h$ is identical to that of a classical text classification model.
Table~\ref{tab:notation} shows our frequently used notations.



\vspace{-0.1in}
\subsubsection{Permutation and Embedding Transformation} \label{sec:transform_detail}
Given the above permutation and embedding transformations, synonym substitution and word reordering are single transformations while word insertion and deletion are composite transformations. Our transformations of the input tuple $(u=\{u_1, \cdots, u_n\}, w=\{w_1, \cdots, w_n\})$ are further represented as follows (no change to the sizes of $u$ and $w$).


\vspace{0.03in}






\noindent\textbf{Synonym Substitution}
replaces the original word's embedding vector with the synonym's embedding vector. 

\vspace{-0.05in}

\begin{equation}
\small
\begin{aligned}
    (\theta_S(u,\textrm{null}), \phi_S(w,& \{a_1, \cdots, a_n\})) \\
    \qquad = (u, {w'})=& (u, \{w_1^{a_1}, \cdots, w_n^{a_n}\})
\nonumber
\end{aligned}
\end{equation}

\vspace{-0.05in}

where $w_j^{a_j}$ ($a_j$ are nonnegative integers) is the embedding vector of the $a_j$th synonym of the original embedding $w_j$, and $a_j=0$ indicates 
$w_j$ itself: $w_i^0=w_i$. The permutation $\theta_S(u,\textrm{null})$ does not modify any entries of the permutation matrix $u$.

\vspace{0.03in}

\noindent\textbf{Word Reordering} does not modify the embedding vector but modifies the permutation matrix.

\vspace{-0.05in}

\begin{equation}
\small
\begin{aligned}
    (\theta_R(u, \{r_1, \cdots, r_n\}),&\ \phi_R(w,\textrm{null})) \\
     \qquad = (u', {w}) =& (\{{u_{r_1}}, \cdots, {u_{r_n}}\}, w)
\nonumber
\end{aligned}
\end{equation}

\vspace{-0.05in}

where $\{r_1, \cdots, r_n\}$ is the reordered list of $\{1, \cdots, n\}$. The transformation $\phi_R(w,\textrm{null})$ does not modify the elements of the embedding matrix $w$.

\vspace{0.03in}

\noindent\textbf{Word Insertion} first inserts $m'$ embedding vectors of the specified words at the specified positions ($j_1, \cdots, j_{m'}$). Then, it removes the last $m'$ embedding vectors to maintain the constant length $n$ of the text. (see Figure~\ref{fig:insert_delete}) 

\vspace{-0.15in}

\begin{equation}
\small
\begin{aligned}
    &(\theta_I(u, \{r_1, \cdots, r_n\})), \phi_I(w,\{w_\mathrm{In}^{1}, \cdots, w_\mathrm{In}^{m'}\})) \\
    =&(u'_0, w'_0)
    =\!(\{u_1, \cdots, u_{j_1-1}, u_{j_1}, u_{j_1+1}, \cdots, u_{n}\}, \\
    &\qquad \quad \{w_1, \cdots, w_{j_1-1}, w_\mathrm{In}^{1}, w_{j_1}, \cdots, w_{n-m'}\}) \\
    =&(u', w')
    =\!(\{u_1, \cdots, u_{j_1-1}, u_{j_1+1}, \cdots, u_{n}, u_{j_1}, \cdots u_{j_{m'}}\}, \\
    &\qquad \quad \{w_1, \cdots, w_{j_1-1}, w_{j_1}, \cdots, w_{n-m'}, w_\mathrm{In}^{1}, \cdots, w_\mathrm{In}^{m'}\})
\nonumber
\end{aligned}
\end{equation}


where $w_\mathrm{In}^{i}$ is the embedding vector of the inserted word at position $j_i$, $i\in [1,m']$. To minimize the distance between $w$ and $w'_0$, $w_\mathrm{In}^{i}$ and its corresponding position vector $u_{j_i}$ are shifted to the end of the sequence to obtain $(u', w')$.

\vspace{-2mm}
\begin{figure}[!h] 
  \centering
  \includegraphics[width=1.0\linewidth]{./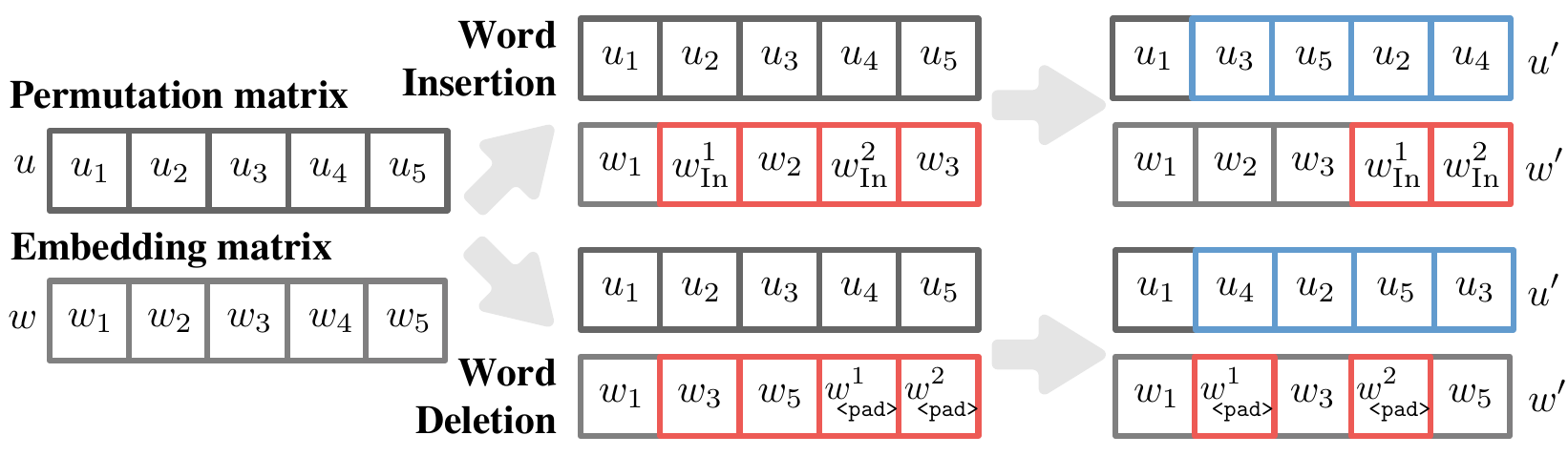}
  \caption{Word insertion and deletion. Blue and red indicate the changes to the permutation and embedding matrices.}
  \label{fig:insert_delete}
   \vspace{-1mm}
\end{figure}

\noindent\textbf{Word Deletion}
first replaces $m'$ embedding vectors with all-zero 
vectors (of \texttt{<pad>}) at position $j_1, \cdots, j_{m'}$. Then it moves the positions (i.e., permutation vector) of these all-zero vectors to the end of the sequence. (see Figure~\ref{fig:insert_delete})  

\vspace{-0.1in}

\begin{equation}
\small
\begin{aligned}
    &(\theta_D(u,\{r_1, \cdots, r_n\})), \phi_D(w,\{j_1, \cdots, j_{m'}\})) \\
    =&(u', w')=(\{u_1, \cdots, u_{j_1-1}, {u_{n-m'+1}}, u_{j_1}, \cdots, u_{n-m'}\}, \\
    &\qquad \qquad \quad \{w_1, \cdots, w_{j_1-1}, {w_{\texttt{<pad>}}^{1}}, w_{j_1+1}, \cdots, w_n\})
\nonumber
\end{aligned}
\end{equation}

\vspace{-0.05in}

where $w_\texttt{<pad>}^i$ is the embedding vector of \texttt{<pad>} (i.e., the all-zero vector), and it replaces the original embedding vector $w_{j_i}$, $i\in [1, m']$. The position vector of $w_\texttt{<pad>}^i$ is $u_{n-m'+i}$, such that $u'\cdot w'$ corresponds to the embedding matrix generated after deleting $m'$ words at position $j_1, \cdots, j_{m'}$ in the text $x$.


\vspace{-0.1in}
\subsection{Framework Overview}
\vspace{-0.1in}


\subsubsection{Perturbations of Adversarial Operations} \label{sec:perturbation}
Since different operations involve different permutations and embedding transformations, we first define their perturbations and will certify against these perturbations in Section ~\ref{sec:theorem}.

\vspace{0.03in}

\noindent\textbf{Synonym Substitution} involves only the embedding substitution, which can be represented as $w \oplus \delta_S = \{w_1 \oplus a_1, \cdots, w_n \oplus a_n\}$. Here, $w_j \oplus a_j$ denotes the replacement of the word embedding $w_j$ with any of its synonyms $w_j^{a_j}$, while the original embedding can be $w_j=w_j \oplus 0$. Thus, the perturbation is defined as $\delta_S = \{a_1, \cdots, a_n\}$.  

\vspace{0.03in}

\noindent\textbf{Word Reordering} involves only the permutation of embeddings, which can be represented as $u \oplus \delta_R = \{u_1 \oplus r_1, \cdots, u_n \oplus r_n\}$, where $u_j \oplus r_j = u_{r_j}$ indicates that the embedding originally at position $j$ is reordered to position $r_j$. The reordering perturbation is $\delta_R=\{r_1-1, \cdots, r_n-n\}$.

\vspace{0.03in}

\noindent\textbf{Word Insertion} includes permutation and embedding insertion, with the permutation perturbation, $\delta_R$, being identical to word reordering. Embedding insertion preserves the first $n-m'$ embeddings while replacing only the last $m'$ embeddings with new ones. The perturbation of insertion is defined as $\delta_I=\{w_\mathrm{In}^1-w_{n-m'+1}, \cdots, w_\mathrm{In}^{m'}-w_n\}$.

\vspace{0.03in}

\noindent\textbf{Word Deletion} involves permutation and embedding deletion, where permutation perturbation is equivalent to $\delta_R$. We model the embedding deletion that converts any selected embedding to $w_{\texttt{<pad>}}$ as an embedding state transition from $b=1$ to $b=0$. The deletion perturbation is therefore defined as $\delta_D=\{1-b_1, \cdots, 1-b_n\}$, where all $b_j, j\in[1, n]$ are equal to 1, except for $b_{j_1}=0, \cdots, b_{j_{m'}}=0$, which represents the deleted embeddings at positions $j_1, \cdots, j_{m'}$.


\vspace{-0.1in}
\subsubsection{Framework and Defense Goals}
Figure~\ref{fig:system} summarizes our Text-CRS framework. The input space is partitioned into a permutation space $\CU$ and an embedding space $\CW$, by representing each operation as a combination of permutation and embedding transformation (see Section~\ref{sec:transform_detail}). We analyze the characteristics of each operation and select an appropriate smoothing distribution to ensure certified robustness for each of them (see Section~\ref{sec:theorem}).


\begin{figure}[!h] 
  \centering
  \includegraphics[width=1.0\linewidth]{./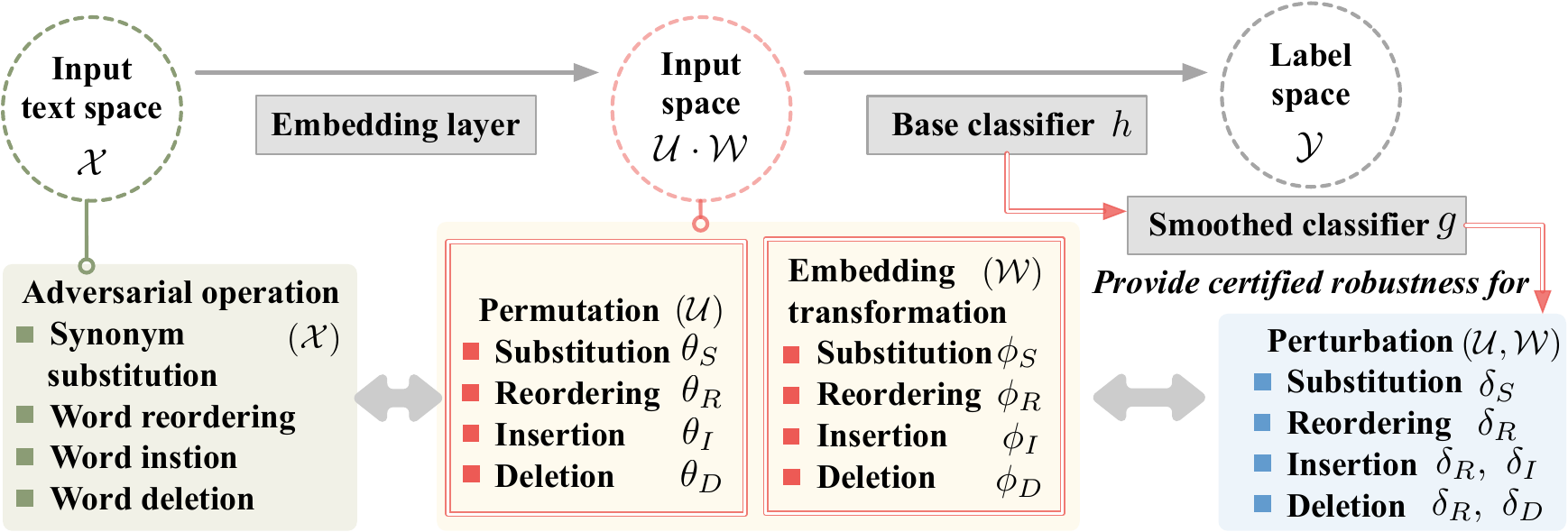}
  \caption{An overview of Text-CRS. 
  }
  \label{fig:system}
  \vspace{-4mm}
\end{figure}

Since each word-level operation is equivalent to a combination of the permutation and embedding transformation, any adversary that perturbs the text input is indeed adjusting the parameter tuple $(r, t)$ of permutation and embedding transformation. Our goal is to guarantee the robustness of the model under the attack of a particular set of parameter tuple $(r, t)$. 
Specifically, we aim to find a set of permutation parameters $S_{adv}^r \subseteq \CR$ and a set of embedding parameters $S_{adv}^t \subseteq \CT$, such that the prediction results of the model $h$ remain consistent under any $(r, t) \in S_{adv}^r \times S_{adv}^t$, i.e., 
\begin{equation}
\small
    h(u\cdot w)=h(\theta(u, r)\cdot \phi(w, t)), \forall{r} \in S_{adv}^r, \forall{t} \in S_{adv}^t
\end{equation}


\vspace{-0.1in}
\section{Permutation and Embedding Transformation based Randomized Smoothing} \label{sec:theorem}
\vspace{-0.1in}


 

In this section, we design the randomized smoothing for Text-CRS. 
We construct a new transformation-smoothing classifier $g$ from an arbitrary base classifier $h$ by performing random permutation and random embedding transformation. Specifically, the transformation-smoothing classifier $g$ predicts the top-1 class returned by $h$ when the permutation and embedding transformation perturb the input embedding $u\cdot w$. Such a smoothed classifier can be defined as below.

\begin{definition}[$(\rho, \varepsilon)$-Smoothed Classifier]
\label{def:smoothg}
Let $\theta: \CU\times \CR \to \CU$ be a permutation, $\phi: \CW\times \CT \to \CW$ be an embedding transformation, and let $h:\CU\cdot\CW\to \CY$ be an arbitrary base classifier. Taking random variables $\rho\sim \BP_\rho$ from $\CR$ and $\varepsilon\sim \BP_\varepsilon$ from $\CT$, respectively, we define the $(\rho, \varepsilon)$-smoothed classifier $g:\CU\cdot\CW\to \CY$ as
\begin{equation}
\small
    g(u\cdot w)=\argmax_{y\in \CY}\BP(h(\theta(u, \rho)\cdot \phi(w, \varepsilon)))
\label{eq:smoothg}
\end{equation}
Given a constant permutation matrix, only the embedding transformation is performed. Thus, we have

\begin{equation}
\small
    g(u\cdot w)=\argmax_{y\in \CY}\BP(h(u\cdot \phi(w, \varepsilon)))
\label{eq:smoothg1}
\end{equation}
Similarly, given a constant embedding matrix, only the permutation is performed. Thus, we have
\begin{equation}
\small
    g(u\cdot w)=\argmax_{y\in \CY}\BP(h(\theta(u, \rho)\cdot w))
\label{eq:smoothg2}
\end{equation}
\end{definition}

To certify the classifiers against various word-level attacks, adopting an appropriate permutation $\theta$ and embedding transformation $\phi$ is necessary. For instance, to certify robustness against synonym substitution, using the same (substitution) transformation in the smoothed classifier is reasonable. Nevertheless, this strategy may not yield the desired certification for other types of operations. 
Next, we illustrate the transformations and certification theorems corresponding to the four word-level operations.

\vspace{-0.1in}
\subsection{Certified Robustness to Synonym Substitution} 
\vspace{-0.1in}

Synonym substitution only transforms the embedding matrix without changing the permutation matrix. 
Previous works~\cite{ye2020safer, zeng2021certified} assume a uniform distribution over a set of synonymous substitutions, i.e., the probability of replacing a word with any synonym is the same. \emph{However, this assumption is unrealistic since the similarity between each synonym and the word to be substituted would be different.} For instance, when substituting the word \emph{good}, \emph{excellent} and \emph{admirable} are both synonyms, but the cosine similarity between the embedding vector of (\emph{good, excellent}) is higher than that of (\emph{good, admirable})~\cite{pennington2014glove}. Hence, the likelihood of selecting \emph{excellent} as a substitution should be higher than choosing \emph{admirable}. 
To this end, 
we design a smoothing method based on the Staircase randomization~\cite{geng2015staircase} . 
\vspace{-0.1in}
\subsubsection{Staircase Randomization-based Synonym Substitution}
The Staircase randomization mechanism originally uses a staircase-shape distribution to 
replace the standard Laplace distribution for improving the accuracy of differential privacy~\cite{geng2015staircase}. It consists of partitioning an additive noise into intervals (or steps) and adding the noise to the original input, with a probability proportional to the width of the step where the noise falls. The one-dimensional staircase-shaped probability density function (PDF) is defined as follows:

\begin{definition}[Staircase PDF~\cite{geng2015staircase}]
\label{def:staircase}
Given constants $\gamma, \epsilon \in[0,1]$, we define the PDF $f_\gamma^\epsilon(\cdot)$ at location $\mu$ with $\Delta>0$ as
\small
\begin{align}
    \label{eq:f_gamma} f_\gamma^\epsilon(x\ |\ \mu, \Delta)&=\exp({-l_\Delta(x\ |\ \mu)\epsilon})a(\gamma) \\
    \label{eq:l_delta} l_\Delta(x\ |\ \mu)&=\lfloor \frac{\|x-\mu\|_1}{\Delta}+(1-\gamma) \rfloor
\end{align}
\normalsize
where the normalization factor $a(\gamma)$ ensures $\int_{\BR} f_\gamma^\epsilon(x)\rd x=1$ and $\lfloor \cdot \rfloor$ is the floor function. 
Note that we unify and simplify the segmentation function in the original definition. 
\end{definition} 

In Text-CRS, we use the Staircase PDF to model the relationship between a word and its synonyms.  
Specifically, given a target word, we first compute the cosine similarity between the embedding of itself and its synonyms. 
Then, we define the noise intervals and the substitution probability, where the number of intervals equals the number of synonyms, and the synonyms are symmetrically positioned on the intervals based on their similarities. 
Table~\ref{tab:synonym} shows an example target word ``\emph{good}'' and it has two synonyms \emph{excellent} and \emph{admirable} (the total number of synonyms is $s = 5/\epsilon=5$). 
Thus, the synonym with the highest cosine similarity, i.e., \textit{good} itself, is placed on the $[-\Delta, \Delta)$ interval, while the synonym with the lowest cosine similarity, i.e., \textit{admirable}, is placed on the $[-5\Delta, -4\Delta)$ and $[4\Delta, 5\Delta)$ intervals. 
Figure~\ref{fig:synonymcorpus} shows that \textit{good} is replaced by \textit{excellent} with  probability $\exp(-\epsilon)a(\gamma)$, while by \textit{admirable} with probability $\exp(-4\epsilon)a(\gamma)$-- closer relationship between \textit{good} and \textit{excellent} is captured.

\vspace{-0.05in}
\begin{table}[!h]
  \centering
  \setlength\tabcolsep{3pt}
  \caption{Staircase-based synonym substitutions for \textit{good}}
  \vspace{-0.05in}
    \begin{tabular}{llll}
    \toprule
    \begin{tabular}[c]{@{}l@{}}Synonym\\word\end{tabular} & \begin{tabular}[c]{@{}l@{}}Cosine\\similarity\end{tabular}  & \begin{tabular}[c]{@{}l@{}}Noise interval\\(or step)\end{tabular} & \begin{tabular}[c]{@{}l@{}}Substitution\\probability\end{tabular} \\
    \midrule
    \textit{admirable}  &  $0.223$ & $[-5\Delta,-4\Delta)$ & $\exp({-4\epsilon})a(\gamma), \exp({-5\epsilon})a(\gamma)$  \\
    \textit{excellent}  &  $0.788$ & $[-2\Delta,-\Delta)$ & $\exp({-\epsilon})a(\gamma), \exp({-2\epsilon})a(\gamma)$ \\
    \textit{good} &  $1.000$ & $[-\Delta,\Delta)$ & $a(\gamma), \exp({-\epsilon})a(\gamma)$ \\
    \textit{excellent}  &  $0.788$ & $[\Delta,2\Delta)$ & $\exp({-\epsilon})a(\gamma), \exp({-2\epsilon})a(\gamma)$ \\
    \textit{admirable}  &  $0.223$  & $[4\Delta, 5\Delta)$ & $\exp({-4\epsilon})a(\gamma), \exp({-5\epsilon})a(\gamma)$\\
    \bottomrule
    \end{tabular}
    \vspace{-0.1in}
  \label{tab:synonym}%
\end{table}%

\begin{figure}[!h] 
\centering
\includegraphics[width=0.95\linewidth]{./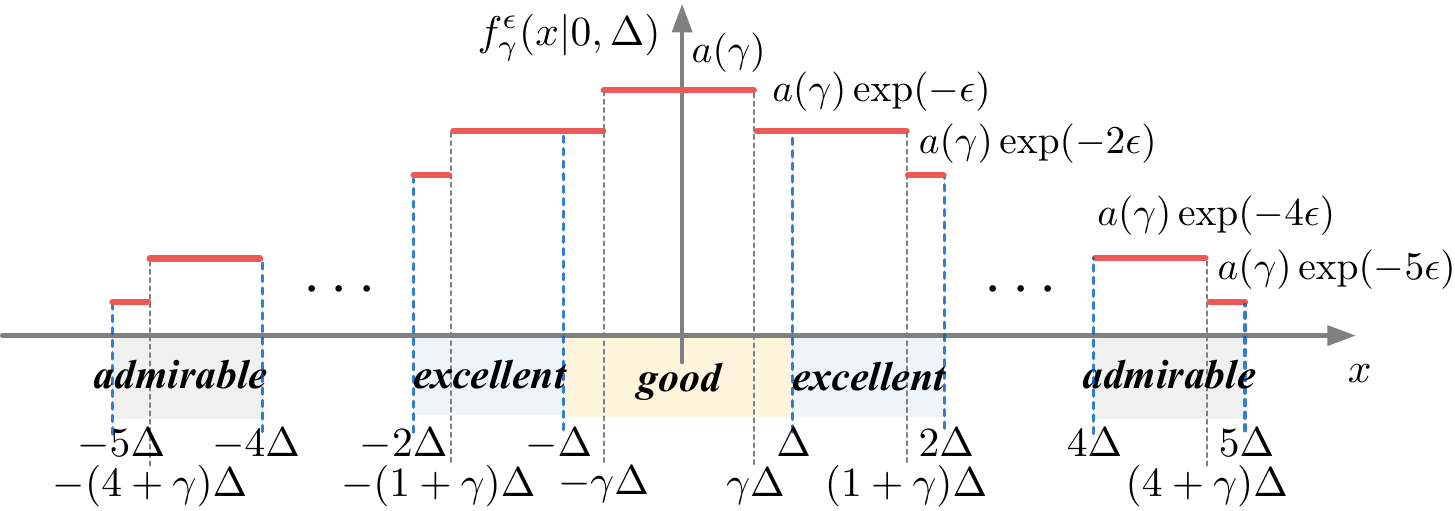} 
\caption{PDF of synonym substitution for the word ``\textit{good}''. The horizontal axis represents the embedding vector of each synonym. The vertical axis shows their probability. 
}
\vspace{-0.15in}
\label{fig:synonymcorpus} 
\vspace{-0.05in}
\end{figure}

\vspace{-0.1in}
\subsubsection{Certification for Synonym Substitution}
The embedding transformation $\phi_S$ is defined by substituting each word $x_i$'s embedding $w_i$ with its $a_i$th synonym's $w_i^{a_i}$. 
$a_i$ decides the substitution of $x_i$, e.g., a closer synonym $w_i^{a_i}$ to $x_i$ has a smaller $a_i$. 
We assume $a_i$ follows a Staircase PDF, in which the probability of each synonym being selected is defined as Table~\ref{tab:synonym}. 
Then, we provide the below robustness certification for the synonym substitution perturbation $\delta_{S}$.


\begin{theorem} \label{thm:wss_1}
Let $\phi_S: \CW \times \BR^n \to \CW$ be the embedding substituting transformation based on a Staircase distribution $\varepsilon \sim \CS_\gamma^\epsilon(w, \Delta)$ with PDF $f_\gamma^\epsilon(\cdot)$, and let $g_S$ be the smoothed classifier from any deterministic or random function $h$, as in (\ref{eq:smoothg1}). Suppose $y_A, y_B\in \CY$ and $\underline{p_A}, \overline{p_B} \in [0,1]$ satisfy:
\begin{equation}
\small
\begin{aligned}
    \BP(h(u\cdot \phi_S(w,\varepsilon))=y_A) \geq \underline{p_A} &\geq \overline{p_B} \geq\\
    \max_{y_B\neq y_A}& \BP(h(u\cdot \phi_S(w,\varepsilon))=y_B) 
\nonumber
\end{aligned}
\end{equation}
then $g_S(u\cdot \phi_S(w, \delta_{S}\cdot \Delta))=y_A$ for all $\|\delta_{S}\|_1 \leq \rad_S$, where 
%
\begin{equation} \label{eq:rad_s}
\small
\rad_S = \max \Big \{\frac{1}{2\epsilon}\log({\underline{p_A}}/{\overline{p_B}}), -\frac{1}{\epsilon}\log(1-\underline{p_A}+\overline{p_B}) \Big \}. %
\end{equation}
\end{theorem}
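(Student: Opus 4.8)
The plan is to follow the standard randomized-smoothing recipe, in which the whole certification reduces to a pointwise bound on the likelihood ratio between the smoothing distribution centered at the clean input and the one centered at the perturbed input. Concretely, applying $g_S$ to the perturbed input $u\cdot\phi_S(w,\delta_S\cdot\Delta)$ draws a fresh Staircase noise $\varepsilon$ via (\ref{eq:smoothg1}), so that, using that $\phi_S$ composes additively in the embedding-displacement parameter, the class probabilities at the perturbed point are governed by the Staircase density centered at $w+\delta_S\cdot\Delta$, whereas the hypotheses on $\underline{p_A},\overline{p_B}$ concern the density centered at $w$. The first step is therefore to control $L(x)=f_\gamma^\epsilon(x\mid w+\delta_S\cdot\Delta)/f_\gamma^\epsilon(x\mid w)$.

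From the Staircase PDF (\ref{eq:f_gamma})--(\ref{eq:l_delta}) this ratio equals $\exp(-\epsilon(l_\Delta(x\mid w+\delta_S\cdot\Delta)-l_\Delta(x\mid w)))$, so I need to bound the change in the step index $l_\Delta$. By the $\ell_1$ triangle inequality, shifting the center by $\delta_S\cdot\Delta$ changes $\|x-\mu\|_1$ by at most $\|\delta_S\cdot\Delta\|_1=\Delta\|\delta_S\|_1$; dividing by $\Delta$ and using that $\|\delta_S\|_1$ is integer-valued (each $a_j$ counts synonym steps) bounds the floor difference by $\|\delta_S\|_1$. This yields the two-sided bound $\exp(-\epsilon\|\delta_S\|_1)\le L(x)\le\exp(\epsilon\|\delta_S\|_1)$ for every $x$, which is exactly the pure differential-privacy guarantee of the Staircase mechanism restricted to our integer lattice of substitutions.

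Given this pointwise bound, I would translate it into bounds on the perturbed class probabilities, with no need for the full Neyman--Pearson optimization. Writing $k=\|\delta_S\|_1$ and letting $p_A',p_B'$ be the probabilities that $h$ outputs $y_A,y_B$ at the perturbed point, integrating $L(x)$ against the clean density over the decision regions gives two useful inequalities: (i) $p_A'\ge e^{-\epsilon k}\,\underline{p_A}$ from the lower bound on $L$ over the $y_A$ region; (ii) $p_A'\ge 1-e^{\epsilon k}(1-\underline{p_A})$ from applying the upper bound on $L$ to the complement of the $y_A$ region; and in both cases every runner-up class $y\neq y_A$ satisfies its perturbed probability $\le e^{\epsilon k}\,\overline{p_B}$ from the upper bound on $L$ over its decision region, so it suffices to force $p_A'>e^{\epsilon k}\,\overline{p_B}$. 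Under (i) this reads $e^{-\epsilon k}\underline{p_A}>e^{\epsilon k}\overline{p_B}$, i.e.\ $k<\tfrac{1}{2\epsilon}\log(\underline{p_A}/\overline{p_B})$; under (ii) it reads $1-e^{\epsilon k}(1-\underline{p_A})>e^{\epsilon k}\overline{p_B}$, i.e.\ $k<-\tfrac{1}{\epsilon}\log(1-\underline{p_A}+\overline{p_B})$. Since either sufficient condition already keeps $y_A$ on top, the certified radius is the maximum of the two thresholds, which is precisely (\ref{eq:rad_s}).

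The main obstacle I expect is the second paragraph: making the likelihood-ratio bound rigorous. The floor function in $l_\Delta$ must be handled so that the step-index change is controlled by $\|\delta_S\|_1$ without an off-by-one slack, which relies on the perturbation being an integer multiple of $\Delta$ along each coordinate and on the symmetric placement of synonyms about the origin (Table~\ref{tab:synonym}). I would also need to confirm whether the Staircase noise is injected jointly (through $\|x-\mu\|_1$) or coordinatewise, since the former route uses the $\ell_1$ triangle inequality while the latter uses the product structure of the per-word densities; both give the same exponent $\epsilon\|\delta_S\|_1$, but the bookkeeping differs. The remaining steps are routine manipulations of the resulting exponential inequalities.
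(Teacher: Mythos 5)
Your proposal is correct, but it takes a genuinely different route from the paper. The paper first establishes a Neyman--Pearson lemma for the staircase mechanism (Lemma~\ref{lem:np_staircase}), showing that level sets of the likelihood ratio $\mu_V/\mu_W$ coincide with sets of the form $\{z:\|z\|_1-\|z-\delta\|_1\lessgtr\beta\}$ via the computation in Eq.~(\ref{eq:stair_ratio_3}); it then picks NP-optimal sets $A,B$ with $\BP(W\in A)=\underline{p_A}$, $\BP(W\in B)=\overline{p_B}$, transfers from $h$'s decision regions to $A,B$ via the lemma, and finally bounds $\BP(V\in A)$ and $\BP(V\in B)$ using only the triangle-inequality bound $-\|\delta\|_1\le T(z)\le\|\delta\|_1$ (its Claims, i.e., Eqs.~(\ref{eq:YinA1}), (\ref{eq:YinA2}), (\ref{eq:YinB})). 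You skip the NP machinery entirely: your pointwise group-privacy-style bound $e^{-\epsilon\|\delta_S\|_1}\le L(x)\le e^{\epsilon\|\delta_S\|_1}$, integrated directly against $h$'s decision regions (valid also for randomized $h$, since the class-conditional kernel is a nonnegative weight in $[0,1]$), yields exactly the paper's three Claims without the detour through $A$ and $B$, and the concluding pair of exponential inequalities and the resulting maximum in Eq.~(\ref{eq:rad_s}) are identical. What each buys: your argument is shorter and more elementary (essentially the Lecuyer-style DP certification), and it is sound here because the paper's NP step adds no tightness --- after invoking the lemma, the paper also only applies the crude pointwise bound on $T(z)$ over $A$ and $B$, so both proofs certify the same radius; the paper's NP framework would only pay off if one computed the level-set probabilities of the staircase ratio exactly, which it does not. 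Your flagged obstacle is also handled the same way in the paper: the floor-difference in $l_\Delta$ is controlled without off-by-one slack precisely because the paper assumes the perturbation and noise are lattice-valued ($\delta=k_1\Delta$, $z=k_2\Delta$), matching your observation that $\|\delta_S\|_1$ is an integer so that $\lfloor b+c\rfloor=\lfloor b\rfloor+c$ applies.
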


\begin{proof}
Proven in Appendix~\ref{proof:wss_1}.
\end{proof}



Theorem~\ref{thm:wss_1} states that 
\emph{any} synonym substitution would not succeed as long as the $\ell_1$ norm of $\delta_{S}$ is smaller than $\rad_S$. 
We can observe that the certified radius $\rad_S$ is large under the following conditions: 1) the noise level $\epsilon$ is low, indicating a larger synonym size; 2) the probability of the top class $y_A$ is high, and those of other classes are low. 




\vspace{-0.1in}
\subsection{Certified Robustness to Word Reordering}
\vspace{-0.1in}

\subsubsection{Uniform-based Permutation}
We assume that each position of the word is equally important to the prediction, and add a uniform distribution to the permutation matrix ($u$) to model permutation. Specifically, we simulate the uniform distribution by grouping the row vectors of $u$, then randomly reordering vector positions within the groups. For example, given a permutation matrix with $n$ row vectors $\{u_i\}$, we divide all $u_i$ uniformly and randomly into $n/\lambda$ groups with length $\lambda$ each. The row vector $u_i$ can be reordered randomly within the group. In this way, the noise added to each position is $1/\lambda$ (uniform). Also, $\lambda=n$ means randomly shuffling all row vectors of the entire permutation matrix. 
The proposed uniform smoothing method provides certification for the permutation perturbation $\delta_{R}$.

\begin{theorem}
\label{thm:wcr_1}
Let $\theta_R: \CU\times \BZ^n \to \CU$ be a permutation based on a uniform distribution $\rho\sim \FU[-\lambda, \lambda]$ 
and $g_R$ be the smoothed classifier from a base classifier $h$, as in (\ref{eq:smoothg2}). Suppose $g_R$ assigns a class $y_A$ to the input $u\!\cdot\! w$, and $\underline{p_A}, \overline{p_B}\in(0,1)$. If 
\small
\begin{align}
\BP(h(\theta_R(u, \rho)\cdot w)=y_A) \geq \underline{p_A} &\geq \overline{p_B} \geq \nonumber\\
\max_{y_B\neq y_A}& \BP(h(\theta_R(u, \rho)\cdot w)=y_B))\nonumber
\end{align}
\normalsize
then $g_R(\theta_R(u, \delta_{R})\cdot w)=y_A$ for all permutation perturbations satisfies $\|\delta_{R}\|_1 \leq \rad_R$, where
\begin{equation} 
\small
    \rad_R= \lambda(\underline{p_A}-\overline{p_B})
\label{eq:rad_r}
\end{equation}
\end{theorem}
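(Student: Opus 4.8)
The plan is to recast the certification as a comparison between the clean and perturbed smoothing distributions over the \emph{permutation parameter}, and then bound their statistical distance. By (\ref{eq:smoothg2}), the smoothed classifier $g_R$ averages the base classifier over $\theta_R(u,\rho)\cdot w$ with $\rho\sim\FU[-\lambda,\lambda]$ applied coordinate-wise. An adversarial reordering corresponds to shifting the permutation parameter by $\delta_R$, so that the perturbed prediction is governed by $\theta_R(u,\rho)\cdot w$ with $\rho$ now drawn from the uniform law translated by $\delta_R$. Writing $P$ for the clean law of the parameter and $Q=\delta_R+P$ for the shifted law, certifying $g_R(\theta_R(u,\delta_R)\cdot w)=y_A$ reduces to controlling how much mass of the decision region $\{h=y_A\}$ can migrate between $P$ and $Q$. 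First I would make this reduction precise, i.e. verify that composing the adversarial shift $\delta_R$ with the smoothing noise $\rho$ yields exactly the translated measure $Q$, which in the discrete, group-structured permutation space requires checking that shifts add within each group of length $\lambda$ and that boundary positions are handled consistently.

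Next I would bound the total variation distance $\mathrm{TV}(P,Q)$. For a single coordinate, the density of $\FU[-\lambda,\lambda]$ is $\tfrac{1}{2\lambda}$ on an interval of length $2\lambda$, so translating by $\delta_i$ leaves overlap mass $1-\tfrac{|\delta_i|}{2\lambda}$ and hence per-coordinate total variation $\tfrac{|\delta_i|}{2\lambda}$. Using subadditivity of total variation for product measures, $\mathrm{TV}(P,Q)\le\sum_{i=1}^{n}\tfrac{|\delta_i|}{2\lambda}=\tfrac{\|\delta_R\|_1}{2\lambda}$.

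I would then transfer this bound to the class probabilities. Because $|Q(E)-P(E)|\le\mathrm{TV}(P,Q)$ for every measurable event $E$, taking $E=\{h=y_A\}$ and $E=\{h=y_B\}$ gives $\BP(h(\theta_R(u,\delta_R)\cdot w)=y_A)\ge\underline{p_A}-\mathrm{TV}(P,Q)$ and $\BP(h(\theta_R(u,\delta_R)\cdot w)=y_B)\le\overline{p_B}+\mathrm{TV}(P,Q)$ for every runner-up class. The smoothed prediction stays $y_A$ as soon as the lower bound on the top class exceeds the upper bound on every competitor, i.e. $\underline{p_A}-\mathrm{TV}(P,Q)\ge\overline{p_B}+\mathrm{TV}(P,Q)$, equivalently $\mathrm{TV}(P,Q)\le\tfrac12(\underline{p_A}-\overline{p_B})$. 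Combining with the TV bound yields $\tfrac{\|\delta_R\|_1}{2\lambda}\le\tfrac12(\underline{p_A}-\overline{p_B})$, that is $\|\delta_R\|_1\le\lambda(\underline{p_A}-\overline{p_B})=\rad_R$, which is exactly the claimed radius (\ref{eq:rad_r}).

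The main obstacle I anticipate lies in the first step: making the ``translate by $\delta_R$'' interpretation rigorous in the discrete permutation space with group-based sampling. Concretely, I must confirm that perturbing the parameter and then adding the smoothing noise produces precisely $Q=\delta_R+P$, check the boundary behaviour of shifts that would leave a group, verify that the coordinates are independent enough for product-measure subadditivity of total variation to apply, and confirm that the per-coordinate value $\tfrac{|\delta_i|}{2\lambda}$ is the correct overlap deficit under the grouping scheme. Once this reduction to $\mathrm{TV}(P,Q)\le\|\delta_R\|_1/(2\lambda)$ is established, the remainder is the routine Neyman--Pearson-style margin argument given above.
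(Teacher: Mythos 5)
Your proof is correct and arrives at exactly the claimed radius $\rad_R=\lambda(\underline{p_A}-\overline{p_B})$, but by a genuinely different route from the paper. The paper argues through smoothness: it first computes $\partial g/\partial u_i$ on the smoothing integral (change of variables plus the Leibniz rule, so the derivative becomes a difference of $h$ at the endpoints $u_i\pm\lambda$), concludes $\|\nabla_u g\|_\infty\le 1/(2\lambda)$, upgrades this to $\tfrac{1}{2\lambda}$-Lipschitzness of each class probability in $u$ under $\|\cdot\|_1$ via a dual-norm lemma, and then invokes a generic ``Lipschitz functions are certifiable'' theorem (Theorem~\ref{thm:l-lipschitz}) giving radius $\tfrac{1}{2L}(\underline{p_A}-\overline{p_B})$ with $L=1/(2\lambda)$. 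Your total-variation argument replaces the entire differentiation step with the overlap computation $\mathrm{TV}\bigl(\FU[-\lambda,\lambda],\,\delta_i+\FU[-\lambda,\lambda]\bigr)=|\delta_i|/(2\lambda)$ and product-measure subadditivity; since $|Q(E)-P(E)|\le\mathrm{TV}(P,Q)$ for every event $E$, this yields the same per-class shift bound $\|\delta_R\|_1/(2\lambda)$ that the Lipschitz constant provides, and your margin condition $\mathrm{TV}(P,Q)\le\tfrac12(\underline{p_A}-\overline{p_B})$ is precisely the paper's final step. The two bounds coincide because the Lipschitz constant of a shift-smoothed class probability is exactly the worst-case TV rate of the shifted noise law. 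What your route buys: it is more elementary, requiring no differentiability of $g$ in $u$ and no interchange of derivative and integral, and it transfers more naturally to genuinely discrete parameter spaces. On the obstacle you flag --- whether adversarial shift plus smoothing noise really composes to the translated measure $Q=\delta_R+P$ in the group-structured discrete permutation space, and how boundary positions behave --- you should know that the paper's proof makes the same continuous idealization silently, writing $\theta(u,\rho)=u+\rho$ and integrating over $[-\lambda,\lambda]^n$ without addressing group boundaries or discreteness; so your unresolved worry is shared by, not worse than, the published argument. What the paper's route buys in exchange is modularity: the Lipschitz proposition and the generic certification theorem are reusable components, and the coordinate-wise gradient bound exhibits $1/(2\lambda)$ as tight per coordinate, whereas TV subadditivity is loose for large multi-coordinate shifts (a looseness that does not affect the certified radius here, since both proofs ultimately use only the same linear upper bound).
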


\begin{proof}
Proven in Appendix \ref{proof:wcr_1}.
\end{proof}

Theorem~\ref{thm:wcr_1} states that \emph{any} permutation would not succeed as long as $\delta_{R}< \rad_R$ in Eq.(\ref{eq:rad_r}). We can observe that the certified radius $\rad_R$ is larger when $\lambda$ is higher, which requires more shuffling, or/and $\underline{p_A}$ is larger. The maximum certified radius is $\lambda$, which is the size of a reordering group. Note that both $\underline{p_A}$ and $\lambda$ depend on the noise magnitude. 


\vspace{-0.1in}
\subsection{Certified Robustness to  Word Insertion}
\vspace{-0.1in}

Word insertion employs a combination of permutation $\theta_I$ and embedding transformation $\phi_I$. Recall that the only transformation performed on the permutation matrix is to shuffle the position of $u_i$. Hence, we utilize the uniform-based permutation with the noise level set as the number of words ($n$), i.e., $\theta_I(u, \rho)= \theta_R(u, \rho)$, where $\rho \sim \FU[-n, n]$. 
On the other hand, embedding insertion involves replacing $w_{n-m'+i}$ with an unrestricted inserted embedding $w_{in}^{j}$, which sets it apart from synonym substitution. To address the challenge of unrestricted embedding insertion, we propose a Gaussian-based smoothing method for the certification.



\vspace{-0.1in}
\subsubsection{Gaussian-based Embedding Insertion} 
We consider the embedding matrix as a whole, the length of which is $n\times d$, where $d$ is the dimension of each embedding vector. We add Gaussian noise to the embedding matrix directly, similar to adding independent identically distributed Gaussian noise to each pixel of an image. We invoke Theorem 1 in \cite{cohen2019certified} as

\begin{theorem} 
\label{thm:wi_1}
Let $\phi_I:\CW\times \BR^{n\times d} \to \CW$ be the embedding insertion transformation based on Gaussian noise $\varepsilon \sim \CN(0,\sigma^2 I)$ and $\theta_I$ be the perturbation as same as $\theta_R$ based on a uniform distribution $\rho\sim \FU[-n, n]$. Let $g_I$ be the smoothed classifier from a base classifier $h$ as in (\ref{eq:smoothg}), and suppose $y_A,y_B\in \CY$ and $\underline{p_A}, \overline{p_B} \in [0,1]$ satisfy:
\begin{equation}
\small
\begin{aligned}
\BP(h(\theta_I(u,\rho)\cdot\phi_I(w,\varepsilon)))=&\ y_A) \geq \underline{p_A} \geq \overline{p_B} \geq \\
\max_{y_B\neq y_A}& \BP(h(\theta_I(u,\rho)\cdot\phi_I(w,\varepsilon))=y_B))
\nonumber
\end{aligned}
\end{equation}
then $g_I(\theta_I(u,\delta_{R})\cdot \phi_I(w,\delta_{I}))=y_A$ for all $\|\delta_{R}\|_1 < \rad_R$ as in Eq.(\ref{eq:rad_r}) and $\|\delta_{I}\|_2 < \rad_I$ where 
\begin{equation}\small
    \rad_I =\frac{\sigma}{2}(\Phi^{-1}(\underline{p_A})-\Phi^{-1}(\overline{p_B}))
\label{eq:rad_i}
\end{equation}
\end{theorem}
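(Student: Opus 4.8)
The plan is to certify word insertion by splitting it into the two ingredients that the definitions of $\theta_I$ and $\phi_I$ make explicit, and certifying each with the tool matched to its noise distribution. The inserted words contribute, first, a repositioning of the rows of the permutation matrix $u$ (the part that shifts $w_{j_i}$ and $u_{j_i}$ to the tail of the sequence), which is exactly the uniform-smoothing setting already resolved by Theorem~\ref{thm:wcr_1} with $\lambda = n$; and, second, a replacement of the trailing embedding vectors by \emph{arbitrary} new ones, which after subtraction is simply an additive shift $\delta_I$ of the flattened embedding matrix whose magnitude is measured in $\ell_2$. So I would treat the joint smoothed classifier $g_I$ as a composition of an independent permutation noise $\rho\sim\FU[-n,n]$ acting on $u$ and an independent Gaussian noise $\varepsilon\sim\CN(0,\sigma^2 I)$ acting on $w$, and build the certificate one coordinate at a time.

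For the embedding coordinate I would invoke Theorem~1 of \cite{cohen2019certified} essentially verbatim. Flattening $w$ to a vector in $\BR^{nd}$ makes $\varepsilon$ an isotropic Gaussian, and the insertion perturbation enters additively as $\phi_I(w,\delta_I)=w+\delta_I$, so the relevant magnitude is $\|\delta_I\|_2$. The Neyman--Pearson lemma then converts the joint probability gap $\underline{p_A}\geq\overline{p_B}$ into preservation of the top class $y_A$ for every $\|\delta_I\|_2 < \rad_I = \tfrac{\sigma}{2}\big(\Phi^{-1}(\underline{p_A})-\Phi^{-1}(\overline{p_B})\big)$, which is exactly Eq.~(\ref{eq:rad_i}). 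The only thing to check here is that folding the permutation smoothing into a (now randomized) base classifier does not disturb the isotropy of the Gaussian, which it does not, since $\rho$ is independent of $\varepsilon$ and touches a disjoint block of the input.

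For the permutation coordinate I would appeal directly to Theorem~\ref{thm:wcr_1} with $\lambda=n$, which certifies every $\|\delta_R\|_1 < \rad_R = n(\underline{p_A}-\overline{p_B})$ as in Eq.~(\ref{eq:rad_r}). To combine the two guarantees I would write the class probabilities of $g_I$ as the iterated expectation $\BE_\rho\,\BE_\varepsilon[\cdot]$ (legitimate because the joint smoothing measure is a product), and then pass from the clean input to the jointly perturbed input $\big(\theta_I(u,\delta_R),\phi_I(w,\delta_I)\big)$ through the intermediate point $\big(\theta_I(u,\delta_R),w\big)$, controlling the embedding leg by the $\Phi^{-1}$-Lipschitz property underlying Cohen's bound and the permutation leg by the total-variation Lipschitz property of the uniform smoothing that drives Theorem~\ref{thm:wcr_1}.

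The main obstacle is precisely this gluing step, and I expect it to be the delicate part rather than either single-coordinate certificate. The bounds $\underline{p_A},\overline{p_B}$ refer to the \emph{joint} classifier $g_I$, so when I perturb the permutation by $\delta_R$ the effective probability gap seen by the Gaussian stage is no longer the original one, and a naive hybrid tends to couple the two budgets rather than leave them independent. Making the clean product region $\{\|\delta_R\|_1<\rad_R\}\times\{\|\delta_I\|_2<\rad_I\}$ come out with the \emph{individual} radii therefore hinges on carefully exploiting that $\rho$ and $\varepsilon$ act on disjoint coordinates: conditioning on one noise source must be arranged so that each lemma is applied to an undiminished gap. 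I would make this explicit through the iterated-expectation decomposition above, keeping $\underline{p_A}$ and $\overline{p_B}$ fixed throughout, and flag it as the place where the independence and disjointness assumptions are doing the real work.
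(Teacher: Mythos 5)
Your two single-coordinate certificates are fine and match the paper's ingredients (Cohen's Theorem~1 for the Gaussian leg, Theorem~\ref{thm:wcr_1} with $\lambda=n$ for the permutation leg, with the observation that folding the independent $\rho$ into a randomized base classifier preserves the Gaussian setting), but the gluing step --- which you yourself flag as the delicate part --- is a genuine gap, and the resolution you sketch cannot work. Independence of $\rho$ and $\varepsilon$ and the fact that they act on disjoint blocks of the input are \emph{not} sufficient to let each lemma be applied ``to an undiminished gap.'' In any hybrid through your intermediate point $\big(\theta_I(u,\delta_R),\,w\big)$, the class probabilities \emph{at that point} are what feed the Gaussian stage, and the only control your tools give there is the Lipschitz bound behind Theorem~\ref{thm:wcr_1}, i.e.\ $p_A' \geq \underline{p_A} - \|\delta_R\|_1/(2n)$ and $p_B' \leq \overline{p_B} + \|\delta_R\|_1/(2n)$; the two budgets then couple, and you obtain a trade-off region strictly inside the product $\{\|\delta_R\|_1<\rad_R\}\times\{\|\delta_I\|_2<\rad_I\}$ rather than the product itself, which is what the theorem asserts. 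The coupling is not an artifact of a clumsy hybrid: smoothing the toy classifier $f(u,w)=\mathbf{1}[u+w\ge 0]$ with independent uniform noise on $u$ and Gaussian noise on $w$ satisfies both individual certificates, yet taking $\delta_u$ and $\delta_w$ simultaneously near their individual radii flips the prediction, because the perturbations add along a common direction. So no rearrangement of the iterated expectation $\BE_\rho\,\BE_\varepsilon[\cdot]$ or of the conditioning order can, by itself, deliver the clean product region.

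What actually closes the gap in the paper (Theorem~\ref{thm:combination}, proved in Appendix~\ref{proof:combination}) is a stronger and much simpler property tied specifically to the choice $\lambda=n$: since $\rho\sim\FU[-n,n]$ shuffles the \emph{entire} permutation matrix, $\theta(u+\delta_u,\rho)$ is equal in distribution to $\theta(u,\rho)$ --- indeed to $\theta(\xi,\rho)$ for an arbitrary permutation $\xi$ --- so the smoothed classifier's output probabilities are \emph{exactly invariant} under any permutation perturbation, not merely Lipschitz-stable. The permutation leg therefore consumes none of the probability gap, the intermediate point carries the original $\underline{p_A},\overline{p_B}$ unchanged, and Cohen's certificate applies verbatim to give Eq.(\ref{eq:rad_i}) alongside Eq.(\ref{eq:rad_r}). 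Your write-up invokes only the ``total-variation Lipschitz property'' of uniform smoothing for the permutation leg and never this distributional invariance; that is precisely the missing idea, and once it is added, the careful iterated-expectation machinery you propose becomes unnecessary.
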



Theorem~\ref{thm:wi_1} states that $g_I$ can defend against word insertion transformation as long as the conditions about $\delta_{R}$ and $\delta_{I}$ are satisfied simultaneously. We observe that the certified radius $\rad_I$ is large when the noise level $\sigma$ is high.

\vspace{-0.1in}
\subsubsection{Combination of Two Perturbations} 
The certified radii of Eq.(\ref{eq:rad_r}) and Eq.(\ref{eq:rad_i}) ensure the robustness of the smoothed classifier against permutation $\theta_I$ or embedding insertion $\phi_I$, respectively. However, the word insertion is a combination of $\theta_I$ and $\phi_I$. Next, we propose Theorem~\ref{thm:combination} to provide certified robustness for the combination of them. 


\begin{theorem} \label{thm:combination}
If a smoothed classifier $g$ is certified robust to permutation perturbation $\delta_u$ as defined in Eq.(\ref{eq:u_delta}), and to embedding permutation $\delta_w$ as defined in Eq.(\ref{eq:w_delta}), then 
$g$ can provide certified robustness to the combination of perturbations $\delta_u$ and $\delta_w$ as defined in Eq.(\ref{eq:uw_delta}) assuming $\theta(u, \rho)$ is uniformly distributed in the permutation space.

\vspace{-0.15in}

\small
\begin{align}
    \label{eq:u_delta} 
    & \forall \delta_u, \delta_w, 
    g(\theta(u\!+\!\delta_u,\rho)\cdot\! \phi(w, \varepsilon))\! =\! g(\theta(u,\rho)\!\cdot\! \phi(w, \varepsilon)) \\
    \label{eq:w_delta} 
    & \qquad \, \, \, \& \, 
    g(\theta(u,\rho)\!\cdot\! \phi(w\!+\!\delta_w, \varepsilon))\! =\! g(\theta(u,\rho)\!\cdot\! \phi(w, \varepsilon)) \\
    \label{eq:uw_delta} 
    & \implies 
    g(\theta(u\!+\!\delta_u,\rho)\!\cdot\! \phi(w\!+\!\delta_w, \varepsilon))\! =\! g(\theta(u,\rho)\!\cdot\! \phi(w, \varepsilon))
\end{align}
\normalsize

\end{theorem}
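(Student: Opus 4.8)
The plan is to prove the combined guarantee by chaining the two single-perturbation guarantees through an intermediate input, using the uniformity hypothesis to transport the embedding guarantee across a permutation shift. The central observation is that the smoothed classifier $g$ depends on its argument only through the class-probability vector $y\mapsto \BP(h(\theta(\cdot,\rho)\cdot\phi(\cdot,\varepsilon))=y)$ induced by the smoothing noises $\rho$ and $\varepsilon$; hence any two inputs that induce identical probability vectors are assigned the same label by $g$.

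First I would establish an \emph{invariance lemma}. Since $\theta(u,\rho)$ is assumed uniformly distributed over the permutation space, and the uniform law on the permutation group is invariant under composition with a fixed permutation, perturbing $u$ by any $\delta_u$ and then applying $\theta(\cdot,\rho)$ yields the same distribution as $\theta(u,\rho)$, i.e.\ $\theta(u+\delta_u,\rho)\stackrel{d}{=}\theta(u,\rho)$. Because $\varepsilon$ is drawn independently of $\rho$, the joint law of $(\theta(u+\delta_u,\rho),\phi(W,\varepsilon))$ then coincides with that of $(\theta(u,\rho),\phi(W,\varepsilon))$ for \emph{every} fixed embedding $W$. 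Pushing these identical joint laws through $h$ gives identical class-probability vectors, so
\[
 g(\theta(u+\delta_u,\rho)\cdot\phi(W,\varepsilon)) = g(\theta(u,\rho)\cdot\phi(W,\varepsilon)) \quad\text{for all } W .
\]

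With the lemma in hand the combination follows by a two-step chain. Applying it with $W=w+\delta_w$ strips the permutation perturbation off the fully perturbed input,
\[
 g(\theta(u+\delta_u,\rho)\cdot\phi(w+\delta_w,\varepsilon)) = g(\theta(u,\rho)\cdot\phi(w+\delta_w,\varepsilon)),
\]
and then the embedding guarantee (\ref{eq:w_delta}), which is stated at the base permutation $\theta(u,\rho)$, strips off the embedding perturbation,
\[
 g(\theta(u,\rho)\cdot\phi(w+\delta_w,\varepsilon)) = g(\theta(u,\rho)\cdot\phi(w,\varepsilon)).
\]
Composing these two equalities yields exactly (\ref{eq:uw_delta}). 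Equivalently, one could first invoke (\ref{eq:u_delta}) to remove $\delta_u$ at fixed $w$ and then transport the embedding guarantee to the base $u+\delta_u$ through the lemma; the uniformity assumption is precisely what legitimizes either ordering.

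I expect the main obstacle to be the rigorous justification of the invariance lemma: arguing that the position perturbation $\delta_u$ acts as a fixed element of the permutation group and that smoothing by a noise which renders $\theta(u,\rho)$ uniform is therefore translation-invariant, so that $\theta(u+\delta_u,\rho)$ and $\theta(u,\rho)$ are \emph{genuinely identically distributed} rather than merely close. Care is also required to confirm that $\phi$ (acting on embedding values) and $\theta$ (acting on positions) interact as written, so that the two noise sources remain independent after the shift and the ``pushforward of identical laws'' step is valid. Once the lemma is secured, the chaining argument is purely formal.
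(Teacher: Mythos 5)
Your proposal is correct and takes essentially the same route as the paper: both exploit the uniformity assumption to make the smoothed output invariant to the permutation argument (the paper writes $\theta(u,\rho)=\theta(\xi,\rho)$ for an arbitrary permutation $\xi$, which is exactly your invariance lemma) and then chain this with the embedding guarantee (\ref{eq:w_delta}) at the base permutation. Your write-up is in fact more careful than the paper's terse three-line chain --- you make the distributional equality $\theta(u+\delta_u,\rho)\stackrel{d}{=}\theta(u,\rho)$, the independence of $\varepsilon$, and the pushforward-through-$h$ step explicit --- and, like the paper's proof, it shows that hypothesis (\ref{eq:u_delta}) is never actually invoked once uniformity is assumed.
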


\begin{proof}
    Proven in Appendix~\ref{proof:combination}.
\end{proof}


Thus, when the certified radii $\rad_R$ and $\rad_I$ are derived w.r.t. $\delta_u$ and $\delta_w$ in two spaces using Eq.(\ref{eq:rad_r}) and Eq.(\ref{eq:rad_i}), respectively, then $\theta(u, \delta_u)\cdot \phi(w, \delta_w)$ is within the certified region of $g$ as long as both $\delta_u<\rad_R$ and $\delta_w<\rad_I$ hold. 

\vspace{-0.1in}
\subsection{Certified Robustness to Word Deletion}
\vspace{-0.1in}

Similarly, a completely uniform shuffling of the orders is performed on permutation, denoted as $\theta_D(u, \rho) = \theta_R(u, \rho)$, where $\rho$ is drawn from $\FU[-n, n]$. Recall that embedding deletion is modeled as a change in the embedding state from $b_i=1$ to $b_i=0$. To certify against the $\ell_0$-norm of embedding deletion perturbation (as the number of deleted words), we propose a Bernoulli-based smoothing method.

\vspace{-0.1in}
\subsubsection{Bernoulli-based Embedding Deletion} 
The transition of the embedding state ($a_i$) can be considered to follow the Bernoulli distribution $\FB(n,p)$, where $n$ is the total number of words in the text. Each word embedding can be transformed to $w_{\texttt{<pad>}}$ with probability $p$ and maintained with $1-p$, i.e, $\BP(b_i=1\to b_i=0)=p$ and $\BP(b_i=1\to b_i=1)=1-p$. 
\emph{The $w_{\texttt{<pad>}}$ cannot be transformed into a word embedding since we cannot recover the deleted words when they are removed from the text.} It can be denoted as $\BP(b_i=0\to b_i=1)=0$.

\begin{theorem} 
\label{thm:wd_1}
Let $\phi_D:\CW\times \{0,1\}^n \to \CW$ be the embedding deletion transformation based on Bernoulli distribution $\varepsilon \sim \FB(n,p)$ and $\theta_D$ be the perturbation same as $\theta_R$ based on a uniform distribution $\rho\sim \FU[-n,n]$. Let $g_D$ be the smoothed classifier from a base classifier $h$, as in (\ref{eq:smoothg}) and suppose $y_A,y_B\in \CY$ and $\underline{p_A}, \overline{p_B} \in [0,1]$ satisfy:
\begin{equation}\small
\begin{aligned}
\BP(h(\theta_D(u,\rho)\cdot\phi_D(w,\varepsilon)))& =y_A) \geq \underline{p_A} \geq \overline{p_B} \geq \\
\max_{y_B\neq y_A}& \BP(h(\theta_D(u,\rho)\cdot\phi_D(w,\varepsilon))=y_B))
\nonumber
\end{aligned}
\end{equation}
then $g_D(\theta_D(u,\delta_{R})\cdot \phi_D(w,\delta_{D}))=y_A$ for all $\|\delta_{R}\|_1 < \rad_R$ as in Eq.(\ref{eq:rad_r}) and $\|\delta_{D}\|_0 < \rad_D$, where 
\begin{equation}
\small
\begin{aligned}
    & \rad_D= \argmax \delta, \\
    & \textrm{\ s.t.\ } \tbinom{z_{\max}}{\delta} \leq \underline{p_A}/\overline{p_B}, \\
    & \qquad z_{\max} \!=\! \argmax z, \textrm{\ s.t.\ } \tbinom{n}{z} p^z (1-p)^{(n-z)} \!\leq\! \overline{p_B}.
\end{aligned}
\end{equation}
\end{theorem}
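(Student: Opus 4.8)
The plan is to exploit that, just like word insertion, word deletion is a \emph{composite} transformation built from a permutation part $\theta_D$ and an embedding part $\phi_D$, so the certification splits into three pieces: (i) certify the permutation perturbation $\delta_R$, (ii) certify the embedding-deletion perturbation $\delta_D$, and (iii) glue the two one-sided guarantees together. Since the deletion smoothing uses $\theta_D(u,\rho)=\theta_R(u,\rho)$ with $\rho\sim\FU[-n,n]$, step (i) is immediate from Theorem~\ref{thm:wcr_1}: the identical uniform-permutation argument gives $g_D(\theta_D(u,\delta_R)\cdot w)=y_A$ for all $\|\delta_R\|_1<\rad_R$ with $\rad_R$ as in Eq.(\ref{eq:rad_r}). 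For step (iii) I would invoke Theorem~\ref{thm:combination} with $\delta_u=\delta_R$ and $\delta_w=\delta_D$; its hypothesis that $\theta(u,\rho)$ is uniformly distributed holds by construction ($\lambda=n$, a full shuffle), so once the two guarantees are in hand the combined guarantee on $\theta_D(u,\delta_R)\cdot\phi_D(w,\delta_D)$ follows. Hence the real content is step (ii), the Bernoulli certification of embedding deletion.

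For step (ii) I would fix the permutation and encode a random deletion outcome by the set $T\subseteq[n]$ of positions sent to $w_{\texttt{<pad>}}$, so that under $\varepsilon\sim\FB(n,p)$ each position independently lies in $T$ with probability $p$ and a specific set $T$ has probability $p^{|T|}(1-p)^{n-|T|}$. The key structural fact is that deletion is \emph{irreversible}, i.e.\ $\BP(b_i=0\to b_i=1)=0$. Consequently, if the adversary deletes a fixed set $D$ with $|D|=\|\delta_D\|_0$, every smoothing outcome of the perturbed input has the form $T\supseteq D$, whereas the clean input can realize any $T$; moreover on the shared event $\{T\supseteq D\}$ the base classifier $h$ receives an \emph{identical} input in both cases. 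Thus the perturbed deletion distribution is supported on $\{T\supseteq D\}$ and coincides there with the clean distribution up to a reweighting, while assigning zero mass off it.

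The heart of the proof is then a Neyman--Pearson comparison between the clean and perturbed deletion distributions on this shared event. I would group outcomes by their deletion count $z=|T|$, weight them by the binomial masses $\tbinom{n}{z}p^{z}(1-p)^{n-z}$, and use the threshold $z_{\max}$ to isolate the largest deletion count whose probability the adversary can leverage given the runner-up budget $\overline{p_B}$. The combinatorial factor $\tbinom{z_{\max}}{\delta}$ then enters by counting the $\delta$-subsets $D$ that a worst-case $y_B$-outcome of size $z_{\max}$ can contain; requiring $\tbinom{z_{\max}}{\delta}\le\underline{p_A}/\overline{p_B}$ forces the worst-case lower bound on $\BP(h=y_A)$ to exceed the best-case upper bound on $\max_{y_B\neq y_A}\BP(h=y_B)$ under the perturbed distribution, and the largest such $\delta$ is exactly $\rad_D$. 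Folding this back through Theorem~\ref{thm:combination} finishes the argument.

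I expect the main obstacle to be precisely this Neyman--Pearson step: because deletion is irreversible, the likelihood ratio between the perturbed and clean distributions is \emph{degenerate} (vanishing on $\{T\not\supseteq D\}$), so the symmetric $\ell_2$/$\ell_0$ templates do not transfer directly, and one must simultaneously identify the worst-case adversarial set $D$ and the worst-case classifier and then convert the resulting probabilistic inequality into the clean combinatorial radius involving $z_{\max}$ and $\tbinom{z_{\max}}{\delta}$. A secondary check is that the decomposition of the deletion operation into the pair $(\delta_R,\delta_D)$ is faithful and that moving the deleted embeddings to the tail (the permutation half of $\phi_D$) does not interfere with the Bernoulli noise in a way that would break the uniform-permutation hypothesis required by Theorem~\ref{thm:combination}.
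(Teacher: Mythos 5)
Your proposal takes essentially the same route as the paper's proof: the paper likewise treats the permutation part via the uniform-smoothing radius with $\lambda=n$ and relies on Theorem~\ref{thm:combination} to combine the two guarantees, and its core Bernoulli argument is exactly the Neyman--Pearson comparison you describe --- irreversibility $\BP(b_i=0\to b_i=1)=0$ restricts the perturbed distribution's support to outcomes containing the deleted set, and after grouping outcomes by deletion count $z$ the likelihood ratio is monotone in $\tbinom{z}{\delta}$ (the paper computes it as $\tbinom{z}{\delta}/[\tbinom{n}{\delta}\,p^{\delta}\,\Gamma(\delta,p)]$), so the worst-case sets are thresholds in $\tbinom{z}{\delta}$ and the resulting bounds $\BP(V\in A)\ge \underline{p_A}/[\tbinom{n}{\delta}p^{\delta}\Gamma(\delta,p)]$ and $\BP(V\in B)\le \tbinom{z_{\max}}{\delta}\,\overline{p_B}/[\tbinom{n}{\delta}p^{\delta}\Gamma(\delta,p)]$ give precisely the condition $\tbinom{z_{\max}}{\delta}\le \underline{p_A}/\overline{p_B}$. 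Your combinatorial reading of $\tbinom{z_{\max}}{\delta}$ (counting $\delta$-subsets inside a size-$z_{\max}$ outcome) is the double-counting identity $\tbinom{n-\delta}{z-\delta}\tbinom{n}{\delta}=\tbinom{n}{z}\tbinom{z}{\delta}$ underlying the paper's ratio, so there is no genuine divergence or gap.
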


\begin{proof}
Proven in Appendix~\ref{proof:wd_1}.
\end{proof}

Theorem~\ref{thm:wd_1} states that $g_D$ can defend against any word deletion transformation as long as the conditions about $\delta_{R}$ and $\delta_{D}$ are met. We observe that the certified radius $\rad_D$ is large when the total number of words $n$ is small.  


\vspace{-0.1in}
\subsection{Universality of Certification for Insertion}
\vspace{-0.1in}

All four adversarial operations are essentially transformations of the embedding vector. Hence, our certification for the word insertion,  a combination of uniform-based permutation ($\theta_I$) and Gaussian-based embedding transformation ($\phi_I$), is applicable to all the four operations. The synonym substitution operation only employs the embedding transformation ($\theta_I$), with the embedding perturbation being the sum of the embedding distances of the replaced synonyms. Word reordering is a simple version of word insertion, using only permutation ($\theta_I$). Word deletion, on the other hand, uses both permutation ($\theta_I$) and embedding transformation ($\theta_I$), with its embedding perturbation being the sum of the embedding distances of the deleted words.

\vspace{-0.05in}
\section{Practical Algorithms}
\vspace{-0.1in}

\subsection{Training}
\vspace{-0.1in}
Given the word operation $T\in\{S, R, I, D\}$, we aim to generate a certified model against the corresponding word-level attack. As described in Algorithm~\ref{alg:train}, we first generate a set of embedding matrices $w$ with the pre-trained embedding layer $L_{emb}$. The permutation matrix $u$ is an identity matrix with the same length as $w$ (line~1). Then, we perform permutation and embedding transformations on $u$ and $w$ to generate the dataset $\CD_T$. Finally, we update the model with the training dataset $\CD_T$ and obtain the model $h$.


\begin{algorithm}[!h]\small
\caption{Training algorithm} \label{alg:train}
\begin{algorithmic}[1]
\Require Training dataset $\CD=\{(x,y)_i\}$, operation $T$, pre-trained embedding layer $L_{emb}$, permutation $\theta_T$ with noise $\rho$, embedding transformation $\phi_T$ with noise $\varepsilon$
\State $u\cdot w\gets L_{emb}(x)$ \Comment{$u$ is $w$'s permutation matrix}
\State $\CD_T=\{(\theta_T(u,\rho)\cdot \phi_T(w,\varepsilon), y)_i\}$
\State $h \gets$ Train the classification model with $\CD_T$
\State \Return Classification model $h$
\end{algorithmic}
\end{algorithm}

\vspace{-0.15in}

\subsubsection{Enhanced Training Toolkit for Word Insertions} \label{sec:enhance}

High-level Gaussian noise leads to distortion in the embedding space and results in barely model convergence. To address this issue, we develop a toolkit with three methods for the three steps in the training (see Figure~\ref{fig:toolkit}). The toolkit is mainly used to improve the certified accuracy against word insertions, and it is also applicable to other operations. 

\begin{figure}[]
\centering
\includegraphics[width=0.9\linewidth]{./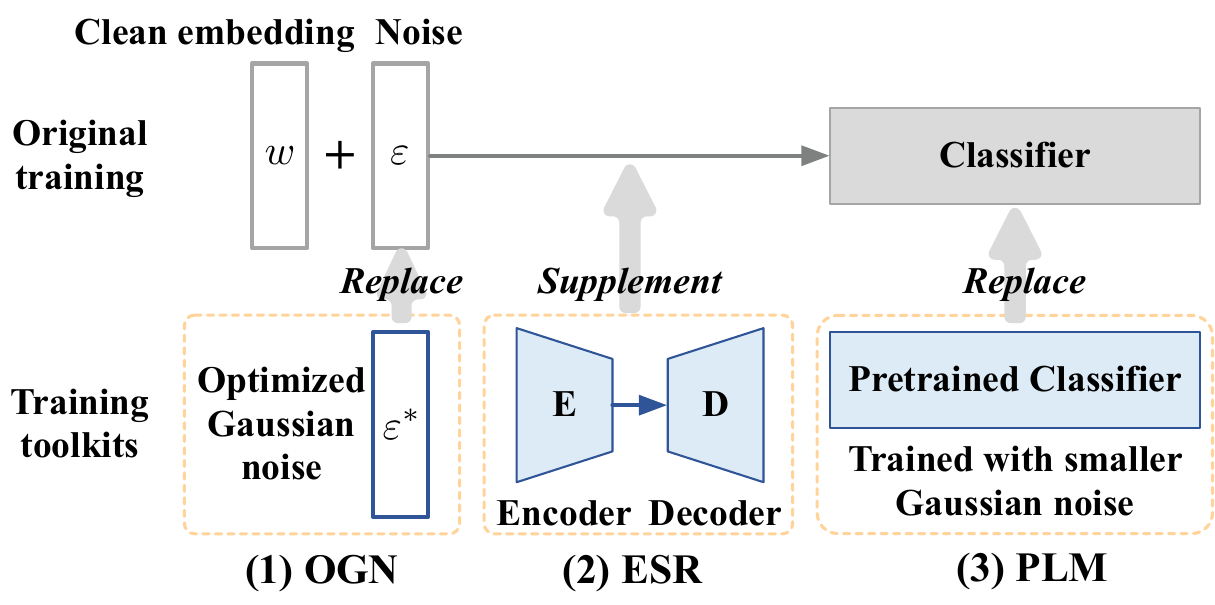} 
\vspace{-1mm}
\caption{The training toolkit can enhance model accuracy by replacing or supplementing part of the training process.} \vspace{-0.1in}
\label{fig:toolkit} 
\vspace{-2mm}
\end{figure}

\vspace{0.03in}
\noindent\ding{172} \textbf{Optimized Gaussian Noise (OGN).} Inspired by the Anisotropic-RS~\cite{hong2022certified}, an appropriate mean value of Gaussian noise can improve the certified accuracy. We analyze the embedding vectors of all words and observe that each element in the embedding vectors approximates a Gaussian distribution with a nonzero mean, as illustrated in Figure~\ref{fig:emb_values}. Consequently, we can enhance the certified accuracy by modifying the Gaussian noise of each dimension $\CN(0, \sigma I^2) \to \CN(\mu_i, \sigma I^2)$, where $\mu_i, \ i\in[1,d]$ denotes the average of the original embedding space of each dimension. 


\vspace{0.03in}
\noindent\ding{173} \textbf{Embedding Space Reconstruction (ESR).} To mitigate the disturbance of the embedding space, we introduce an encoder-decoder architecture to reconstruct the clean embedding space. The encoder-decoder can also be viewed as sanitizing the additive noise. This method can effectively improve accuracy in small-dimension embedding space, such as with the $300$-dimension GloVe embedding and LSTM classifier, resulting in a $10\%$ increase in average accuracy. 


\vspace{0.03in}
\noindent\ding{174} \textbf{Pre-trained Large Model (PLM).} Fine-tuning from a pre-trained model is a typical training approach for large models. When applying high-level Gaussian noise to a large model, we can fine-tune it on a pre-trained large model trained with small Gaussian noise (e.g., $\sigma=0.1$). For instance, when adding Gaussian noise of $\sigma = 1.5$ to the IMDB dataset and using BERT as the classifier, this approach can substantially enhance model accuracy from $50\%$ to $84\%$.

\vspace{-0.1in}
\subsection{Certified Inference}  
\vspace{-0.1in}

The certified inference algorithm is identical to the classical randomized smoothing in \cite{cohen2019certified}. We first obtain the embedding $u\cdot w$ of the test sample $x$ by the pre-trained embedding layer $L_{emb}$. Then, we utilize $\theta_T(u, \rho)\cdot \phi_T(w, \varepsilon)$ to draw $N$ samples by the $T$ transformation. Finally, we certify robustness on $N$ samples and output the robust prediction.
Details are presented in Algorithm~\ref{alg:certify} in Appendix~\ref{appendix:alg}.

\vspace{-0.05in}
\section{Experiments}
\label{sec:exps}
\vspace{-0.1in}

\subsection{Experimental Setup}
\label{sec:setup}
\vspace{-0.1in}

\noindent\textbf{Datasets}.
We evaluate Text-CRS on three textual datasets, AG’s News (AG)~\cite{zhang2015character}, Amazon~\cite{mcauley2013hidden}, and IMDB~\cite{maas2011learning}. The AG dataset collects news articles (sentence-level), covering four topic classes. The Amazon dataset consists of positive and negative product reviews (document level). 
The IMDB dataset contains document-level movie reviews with positive and negative sentiments. 
The average sample lengths of them are $43$, $81$, and $215$, respectively. 

\vspace{0.03in}

\noindent\textbf{Models and Embedding Layers}.
We conduct experiments on two common NLP models, LSTM \cite{LSTM} and BERT \cite{devlin-etal-2019-bert} with the pre-trained embedding layers. For LSTM, we use a 1-layer bidirectional LSTM with $150$ hidden units and the $300$-dimensional Glove word embeddings trained on $42$ billion tokens of web data from Common Crawl \cite{pennington2014glove}. For BERT, we use a pre-trained $12$-layer bert-base-uncased model with $12$ attention heads. The pre-trained embedding layer in BERT outputs $768$-dimensional hidden features for each token. We freeze the pre-trained embedding layer in both LSTM and BERT and only update the parameters of the classification models.

\vspace{0.03in}

\noindent\textbf{Evaluation Metric}. We report the model accuracy on the clean test set for vanilla training (\emph{Clean vanilla}) and certified robust training (\emph{Clean Acc.}) (see Table~\ref{tab:clean_acc}). Under robust training, we also evaluate the \emph{certified accuracy}, defined as the fraction of the test set classified correctly and certified robust. 
{We uniformly select $500$ examples from the clean test set of each dataset. For each example, we use $N_0=100$ samples for selecting the most likely class $y_A$ and $N=10^5$ samples for estimating confidence lower bound $\underline{p_A}$.} We set $\alpha=0.001$ for certification with at least 99.9\% confidence. 
To test the model's robustness against unseen attacks,
we evaluated Text-CRS against five real-world attacks (TextFooler\cite{jin2020bert}, WordReorder\cite{moradi2021evaluating}, SynonymInsert\cite{morris2020textattack}, BAE-Insert\cite{garg2020bae}, and InputReduction\cite{feng2018pathologies} w.r.t. four word-level adversarial operations)\footnote{BAE-Insert attacks on BERT, and other attacks are model-agnostic. 
Similar to Jin et al.\cite{jin2020bert}, we use Universal Sentence Encoder~\cite{cer2018universal} to encode text as high-dimensional vectors and constrain the similarity of the adversarial vector to the original vector to ensure that all generated adversarial examples are semantically similar to the original texts. 
}, and generate $1,000$ successful adversarial examples for each dataset and model. We calculate the \emph{attack accuracy} of the vanilla model under attack as the percent of unsuccessful adversarial examples divided by the number of attempted examples (see Table~\ref{tab:attack_acc}). For Text-CRS against these attacks, we uniformly select 500 successful adversarial examples for each attack and evaluate the \emph{certified accuracy} with $N=2\times 10^4$. 
Since the baselines can only certify against substitution operations, we evaluate their \emph{certified accuracy} against TextFooler. For the other attacks, we evaluate their \emph{empirical accuracy}, the percent of adversarial examples correctly classified (no certification).


\vspace{0.03in}

\noindent\textbf{Noise Parameters}. 
We use three levels of noise (i.e., Low, Medium (Med.), and High) for the four smoothing methods (see Table~\ref{tab:params}). Synonym substitution based on the Staircase PDF uses the size of the lexicon ($s$) to control the PDF's sensitivity, i.e., $\epsilon=5/s$. For other parameters in the staircase PDF, we fix the interval size of a word as $\Delta=1$ and set an equal probability within each interval, i.e., $\gamma=1$. Uniform-based permutation specifies the noise with the length of the reordering group, i.e., the noise PDF of $\CU[-\lambda, \lambda]$ is $1/2\lambda$. 
While using uniform permutation in word insertion and deletion, we set the noise level to $n$, i.e., reordering the entire text. Gaussian-based embedding insertion uses the standard deviation $\sigma$ to specify the noise. We set different $\sigma$ in LSTM and BERT models due to different embedding dimensions and magnitudes. 
Bernoulli-based smoothing uses the deletion probability $p$ of each word as the noise.

\begin{table}[!t]
\centering
\setlength\tabcolsep{3pt}
\caption{Noise parameters for adversarial operations}
\vspace{-0.05in}
\begin{tabular}{|c|c|c|cc|cc|}
\hline
Operations & Substitution & Reordering & \multicolumn{2}{c|}{Insertion} & \multicolumn{2}{c|}{Deletion} \\ \hline
Noise & $s$ & $2\lambda$ & $2\lambda$ & $\sigma$(LSTM, BERT) & $2\lambda$ & $p$  \\ \hline
Low & $50$ & $n/4$ & $n$ & \ $0.1$,\quad $0.5$  & $n$ & $0.3$ \\
Med. & $100$ & $n/2$ & $n$ & \ $0.2$,\quad $1.0$ & $n$ & $0.5$  \\
High & $250$ & $n$ & $n$ & \ $0.3$,\quad $1.5$ & $n$ & $0.7$  \\ \hline
\end{tabular}
\label{tab:params}
\vspace{-4mm}
\end{table}

\vspace{0.03in}

\noindent\textbf{Training Toolkit}. 
\ding{172} In OGN, we use the average of the parameters of the pre-trained embedding layers (i.e., Glove word embeddings for LSTM and BERT's pre-trained embedding layer for BERT) as the mean value of Gaussian noise in each dimension ($\mu_i$). 
\ding{173} In ESR, we utilize two fully-connected layers as the encoder-decoder for LSTM. 
\ding{174} In PLM, we set the learning rate to $0.00003$, training epochs to $10$, and the Gaussian noise level to $\sigma=0.1$ for training the pre-trained BERT model with Gaussian noise.

\vspace{0.03in}

\noindent\textbf{Baselines}. 
We compare our methods with two SOTA randomized smoothing-based certified defenses, SAFER~\cite{ye2020safer} and CISS~\cite{zhao2022certified}. 1) \emph{SAFER} certifies against synonym substitution. 
Following the same setting~\cite{ye2020safer,jia2019certified} for the synonym substitution, we construct synonym sets by the cosine similarity of Glove word embeddings~\cite{pennington2014glove} and sort all synonyms in the descending order of similarity.
2) \emph{CISS} is an IBP and randomized smoothing-based method against word substitution. CISS provides only the training and certification pipelines for the BERT model. Thus, we only compare our Text-CRS with CISS under the BERT model. 




Note that the smoothed models against substitution, reordering, and deletion operations are trained without the use of training toolkits. The training toolkits can further improve their certification performance.


\vspace{-0.1in}
\subsection{Experimental Results}
\label{sec:results}
\vspace{-0.1in}

\subsubsection{Evaluating Adversarial Examples using ChatGPT}

We assess the practical significance of textual adversarial examples by evaluating whether ChatGPT (version ``gpt-3.5-turbo-0301'') can determine if a successful adversarial example is semantically similar to the original text. Specifically, we request ChatGPT to assess the semantic similarity (categorized as ``Yes'' or ``No'') and calculate the cosine similarity between the adversarial and original texts. Table~\ref{tab:chatgpt_results} shows the deception rate, which represents the percent of successful adversarial examples that ChatGPT considered semantically similar to the original texts, as well as the cosine similarity of total successful examples, and the cosine similarity of the examples that deceive ChatGPT. The results show an average of $73\%$ of all successful adversarial examples can deceive ChatGPT, highlighting the severe threat posed by textual adversarial examples. Furthermore, we observe that longer texts, such as those in Amazon and IMDB, are easier to fool ChatGPT, due to the challenges in detecting small perturbations in longer texts. Finally, the cosine similarities of all our successful adversarial examples are high, particularly in the case of SynonymInsert and BAE-Insert attacks, unveiling that word insertion can effectively fool the vanilla model with minimal perturbations. 

\begin{table}[!h]
\setlength\tabcolsep{3pt}
\centering
\scriptsize
\caption{Evaluation on adversarial examples by ChatGPT.}
\vspace{-0.05in}
\begin{tabular}{lcccccc}
\toprule
Metric & \multicolumn{3}{c}{Deception rate} & \multicolumn{3}{c}{Cosine similarity (success / deception)} \\
\cmidrule(r){2-4} \cmidrule(r){5-7}
Attack type & AG & Amazon & IMDB & AG & Amazon & IMDB \\
\midrule
TextFooler & 58\% & 78\% & 85\% & 0.92 / 0.97 & 0.81 / 0.94 & 0.97 / 0.99 \\
WordReorder & 63\% & 53\% & 60\% & 0.84 / 0.95 & 0.79 / 0.91 & 0.90 / 0.97 \\
SynonymInsert & 75\% & 89\% & 85\% & 0.97 / 0.98 & 0.95 / 0.98 & 0.98 / 0.99 \\
BAE-Insert & 66\% & 88\% & 71\% & 0.96 / 0.97 & 0.91 / 0.97 & 0.97 / 0.99 \\
InputReduction & 75\% & 78\% & 73\% & 0.89 / 0.96 & 0.85 / 0.93 & 0.94 / 0.98 \\
\midrule
Average & 67\% & 77\% & 75\% & 0.92 / 0.97 & 0.86 / 0.95 & 0.95 / 0.98 \\
\bottomrule
\end{tabular}
\vspace{-1mm}
\label{tab:chatgpt_results}
\end{table}

\vspace{-0.1in}
\subsubsection{Certified Robustness of Text-CRS}

\begin{table}[]
\centering
\setlength\tabcolsep{2pt}
\scriptsize
\caption{Certified accuracy under adversarial operations. We compare the substitution with SAFER\cite{ye2020safer} and CISS\cite{zhao2022certified}, and provide a benchmark for other operations.}
\vspace{-0.05in}
\resizebox{\linewidth}{!}{
\begin{tabular}{ccccccccc}
\toprule
\multirow{2}{*}{\begin{tabular}[c]{@{}c@{}}Dataset\\ (Model) \end{tabular}} & \multirow{2}{*}{\begin{tabular}[c]{@{}c@{}} \emph{Clean}\\ \emph{vanilla}  \end{tabular}} & \multirow{2}{*}{\begin{tabular}[c]{@{}c@{}} Noise  \end{tabular}} & \multicolumn{3}{c}{\begin{tabular}[c]{@{}c@{}} Synonym substitution\end{tabular}} & \begin{tabular}[c]{@{}c@{}} Reordering\end{tabular} & \begin{tabular}[c]{@{}c@{}} Insertion\end{tabular} & \begin{tabular}[c]{@{}c@{}} Deletion\end{tabular} \\
\cmidrule(r){4-6} \cmidrule(r){7-7} \cmidrule(r){8-8} \cmidrule(r){9-9}
 & & & SAFER & CISS & Ours & Ours & Ours & Ours \\
\midrule
\multirow{3}{*}{\begin{tabular}[c]{@{}c@{}}AG\\ (LSTM)\end{tabular}} & & Low & 86.4\% &  & \textbf{88.8\%} & \textbf{92.4\%} & \textbf{88.6\%} & \textbf{91.2\%} \\
& 91.79\% & Med. & 85.2\% & - & 88.6\% & 91.6\% & 88.2\% & 90.2\% \\
& & High & 83.2\% &  & 87.0\% & \textbf{92.4\%} & 82.8\% & 88.4\% \\
\midrule
 \multirow{3}{*}{\begin{tabular}[c]{@{}c@{}}AG\\ (BERT)\end{tabular}} &  & Low & 92.0\% & 85.6\% & \textbf{92.8\%} & \textbf{95.6\%} & \textbf{93.6\%} & \textbf{94.6\%} \\
& 93.68\% & Med. & 89.2\% & 86.8\% & 92.0\% & 93.6\% & 93.0\% & 93.3\% \\
& & High & 87.2\% & 85.6\% & 91.2\% & 94.4\% & 91.2\% & 92.8\% \\
 \midrule
 \multirow{3}{*}{\begin{tabular}[c]{@{}c@{}}Amazon\\ (LSTM)\end{tabular}} &  & Low & 82.8\% &  & 82.6\% & 87.2\% & \textbf{85.2\%} & \textbf{88.8\%} \\
& 89.82\% & Med. & 80.0\% & - & 82.4\% & 85.8\% & 79.0\% & \textbf{88.8\%} \\
& & High & 80.2\% &  & \textbf{83.6\%} & \textbf{88.4\%} & 75.6\% & 87.2\% \\
\midrule
\multirow{3}{*}{\begin{tabular}[c]{@{}c@{}}Amazon\\ (BERT)\end{tabular}} &  & Low & 90.2\% & 84.4\% & \textbf{93.6\%} & \textbf{94.8\%} & \textbf{94.4\%} & \textbf{93.8\%} \\
& 94.35\% & Med. & 86.4\% & 83.4\% & 90.2\% & 93.6\% & 92.6\% & 91.8\% \\
& & High & 86.2\% & 83.2\% & 87.6\% & 93.8\% & 89.0\% & 88.6\% \\
 \midrule
\multirow{3}{*}{\begin{tabular}[c]{@{}c@{}}IMDB\\ (LSTM)\end{tabular}} &  & Low & 77.8\% &  & \textbf{84.0\%} & \textbf{88.8\%} & \textbf{82.2\%} & \textbf{86.0\%} \\
& 86.17\% & Med. & 77.6\% & - & 81.6\% & 84.6\% & 79.2\% & 85.4\% \\
& & High & 77.0\% &  & 78.8\% & 83.4\% & 71.4\% & 84.8\% \\
\midrule
\multirow{3}{*}{\begin{tabular}[c]{@{}c@{}}IMDB\\ (BERT)\end{tabular}} &  & Low & 86.4\% & 84.8\% & \textbf{89.6\%} & 92.2\% & \textbf{90.8\%} & \textbf{91.4\%} \\
& 91.52\% & Med. & 82.8\% & 82.4\% & 83.6\% & \textbf{92.4\%} & 88.4\% & 90.2\% \\
& & High & 75.8\% & 84.0\% & 78.8\% & 91.8\% & 84.0\% & 89.2\% \\
\bottomrule
\end{tabular}
}
\vspace{-4mm}
\label{tab:certi_acc_benchmark}
\end{table}

Table~\ref{tab:certi_acc_benchmark} summarizes the certified accuracy of Text-CRS against four word-level operations on different datasets, models, and noise. For synonym substitution, we use the same synonym set and noise parameters as SAFER. The results demonstrate that Text-CRS outperforms SAFER for all noise levels under three datasets and two models. Specifically, Text-CRS is more accurate than SAFER (under LSTM and BERT) and CISS (under BERT) in all the settings. To our best knowledge, Text-CRS is the first to provide certified robustness for word reordering, insertion, and deletion. Compared to \emph{Clean vanilla}, Text-CRS sacrifices only a small fraction of accuracy. Across all $24$ settings, including $4\ \textrm{operations} \times\! \ 3\ \textrm{datasets} \times\! \ 2\ \textrm{models}$, Text-CRS provides an average best certified accuracy of $90.2\%$ (bold), with a small drop compared to the average vanilla accuracy of $91.22\%$. 

Moreover, our methods for substitution, insertion, and deletion achieve the best certified accuracy with low noise, indicating that smaller noise has less impact on model performance. Conversely, the best certified accuracy for reordering can be achieved at low, medium, or high noise levels, as the results suggest that robust training with reordering operations has little effect on the model accuracy. Regarding model structures, Text-CRS outperforms SOTA methods under LSTM and achieves superior performance under BERT, with a certified accuracy that is closer to or exceeds that of \emph{Clean vanilla}. Regarding the dataset, Text-CRS demonstrates better performance on large-scale datasets, such as IMDB, where the substitution method outperforms SAFER by $1.6\%$ and $4.7\%$ on AG and IMDB, and CISS by $6.0\%$ and $4.8\%$ on AG and IMDB, respectively.



\vspace{-0.1in}
\subsubsection{Certified Accuracy under Different Radii} \label{sec:acc_radii}

We examine the effect of certified radii with different noise levels on the certified accuracy against four adversarial operations, as depicted in Figure~\ref{fig:noise5_certify_result} to \ref{fig:noise4_certify_result}. The results indicate that the certified accuracy declines as the radius increases, and it abruptly drops to zero at a certain radius threshold, consistent with the results in the image domain~\cite{cohen2019certified}. Moreover, the impact of noise level on certified accuracy increases with the magnitude of the noise, while a large noise level can improve the certified radius. In other words, selecting a larger noise is necessary while aiming for a wider certification range. Thus, it is crucial to choose an appropriate smoothing magnitude for each specific setting carefully. 

\begin{figure}
    \centering
    \begin{subfigure}[b]{0.32\columnwidth}
        \centering
        \includegraphics[width=\textwidth]{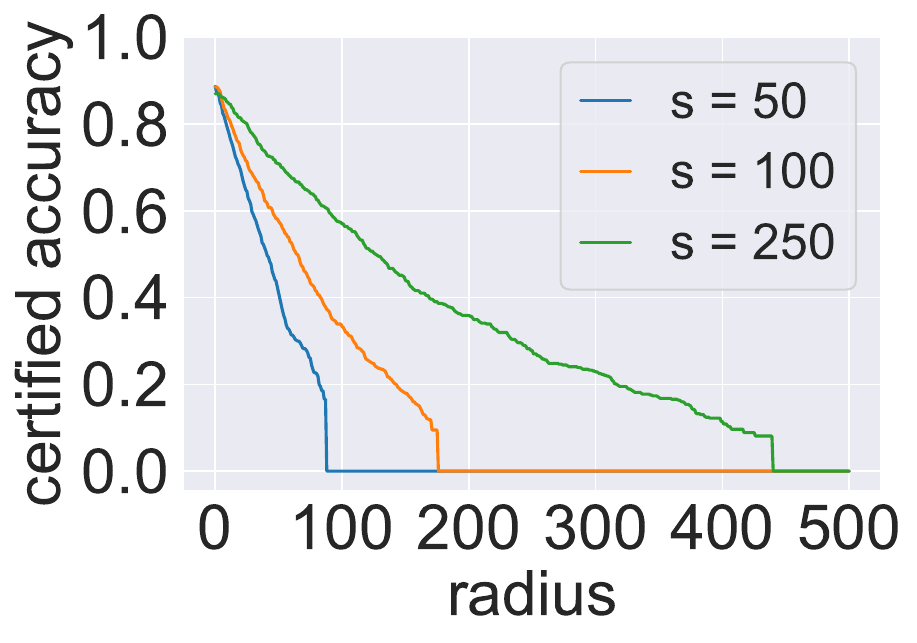}
    \end{subfigure}
    \begin{subfigure}[b]{0.32\columnwidth}
        \centering
        \includegraphics[width=\linewidth]{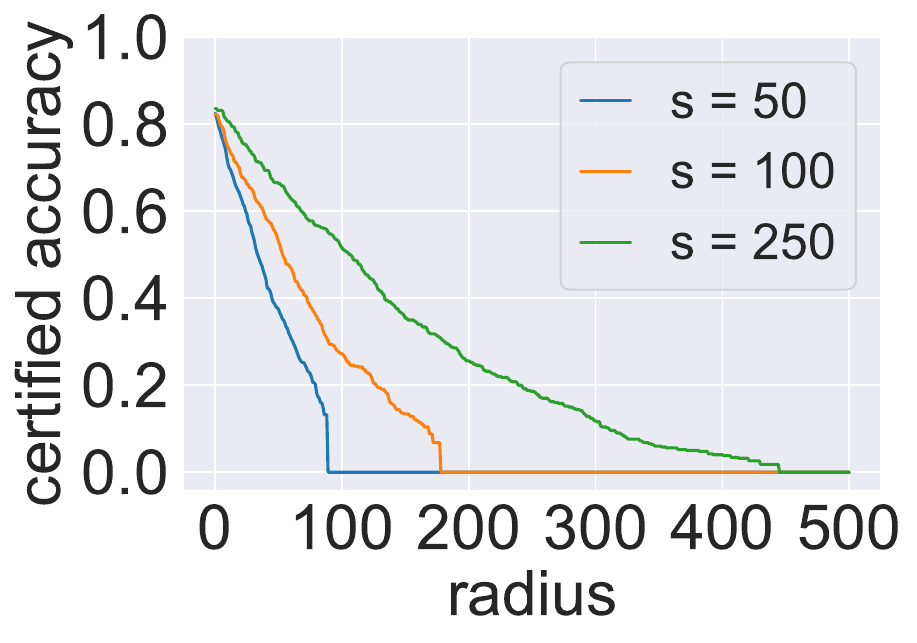}
    \end{subfigure}
    \begin{subfigure}[b]{0.32\columnwidth}
        \centering
        \includegraphics[width=\linewidth]{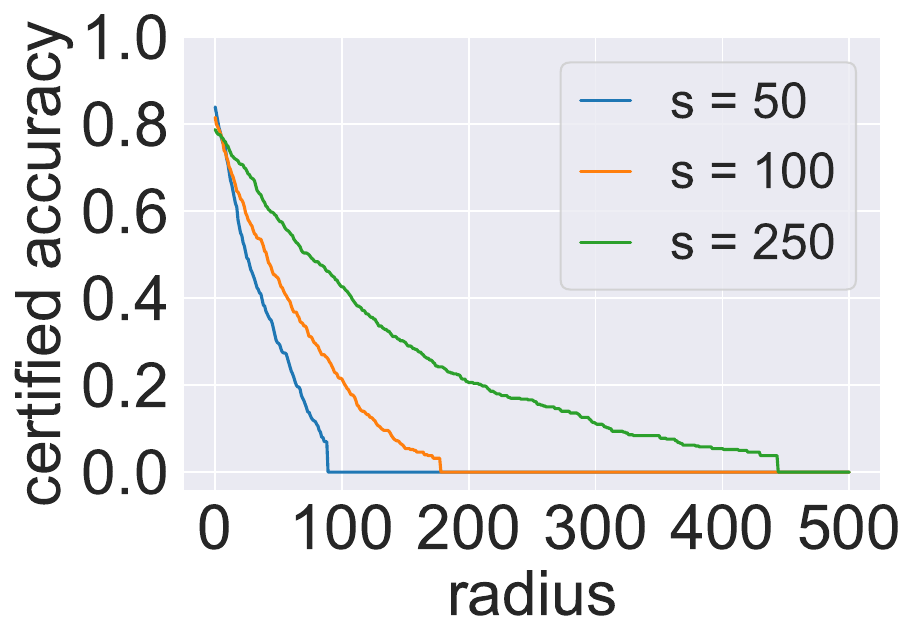}
    \end{subfigure}
    
    \begin{subfigure}[b]{0.32\columnwidth}
        \centering
        \includegraphics[width=\textwidth]{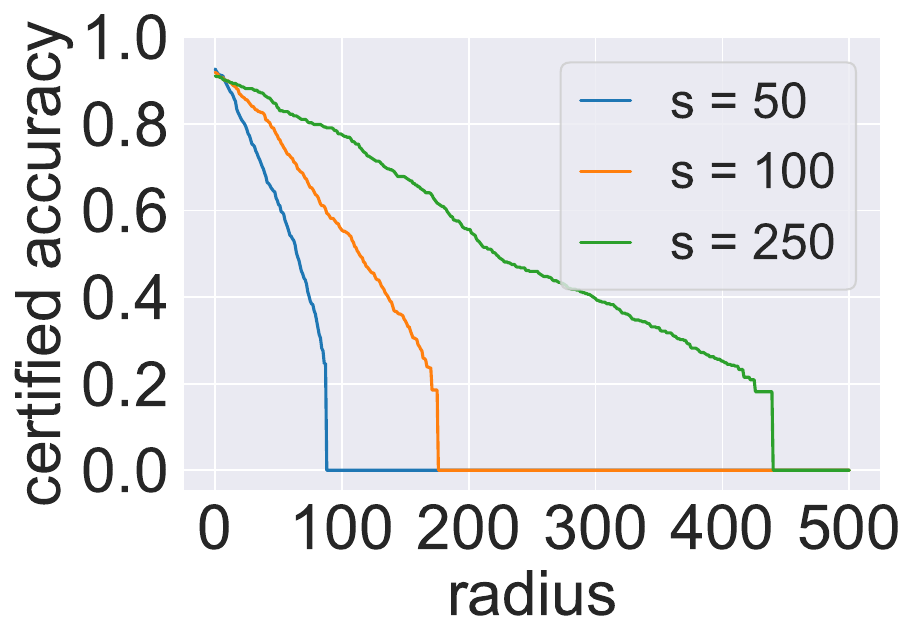}
    \end{subfigure}
    \begin{subfigure}[b]{0.32\columnwidth}
        \centering
        \includegraphics[width=\linewidth]{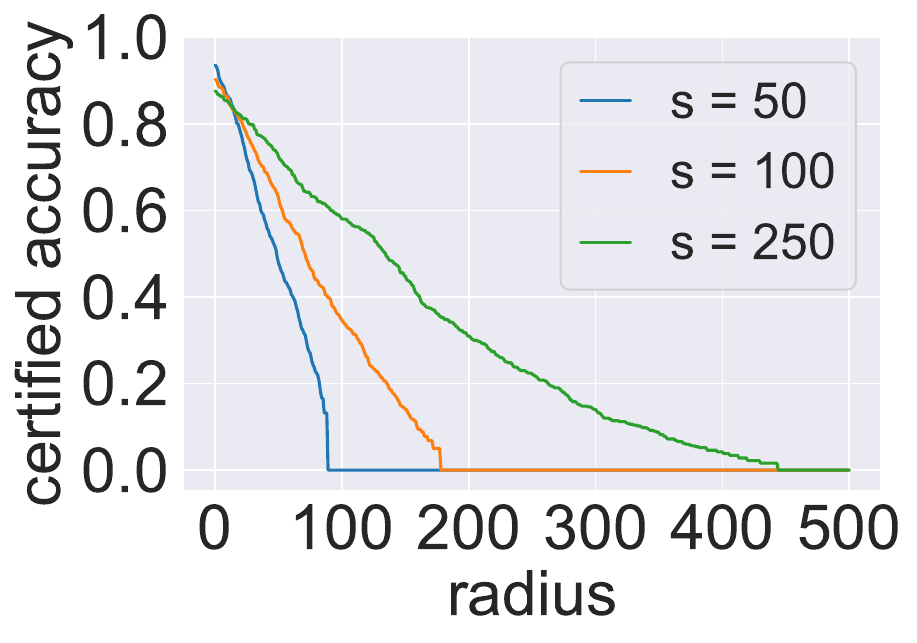}
    \end{subfigure}   
    \begin{subfigure}[b]{0.32\columnwidth}
        \centering
        \includegraphics[width=\linewidth]{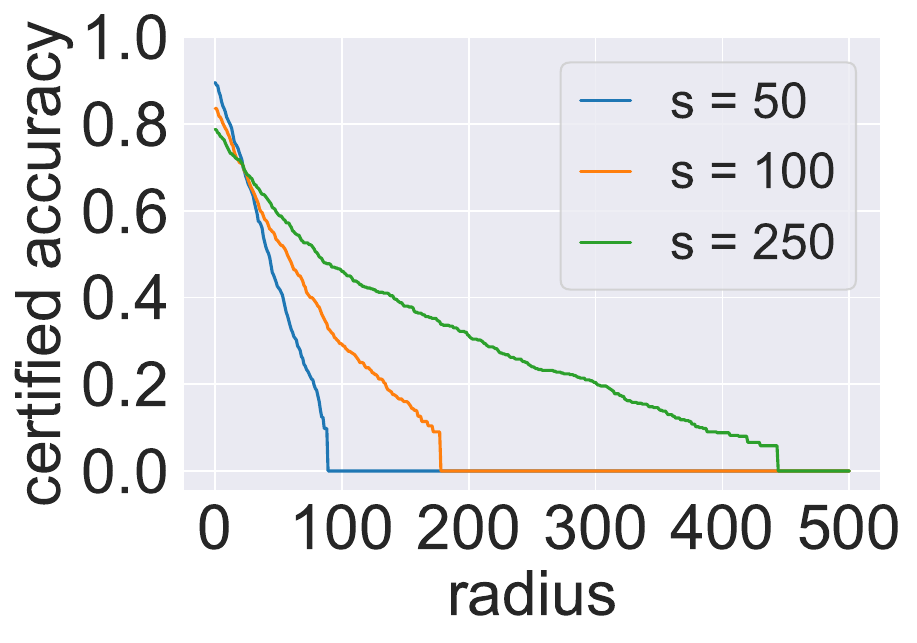}
    \end{subfigure}
\vspace{-1mm}
\caption{Certified accuracy at different radii against synonym substitution. Datasets from left to right: AG, Amazon, and IMDB; Models: top-LSTM and bottom-BERT.} 
\label{fig:noise5_certify_result}
\vspace{-4mm}
\end{figure}

\begin{figure}
    \centering
     \begin{subfigure}[b]{0.32\columnwidth}
        \centering
        \includegraphics[width=\textwidth]{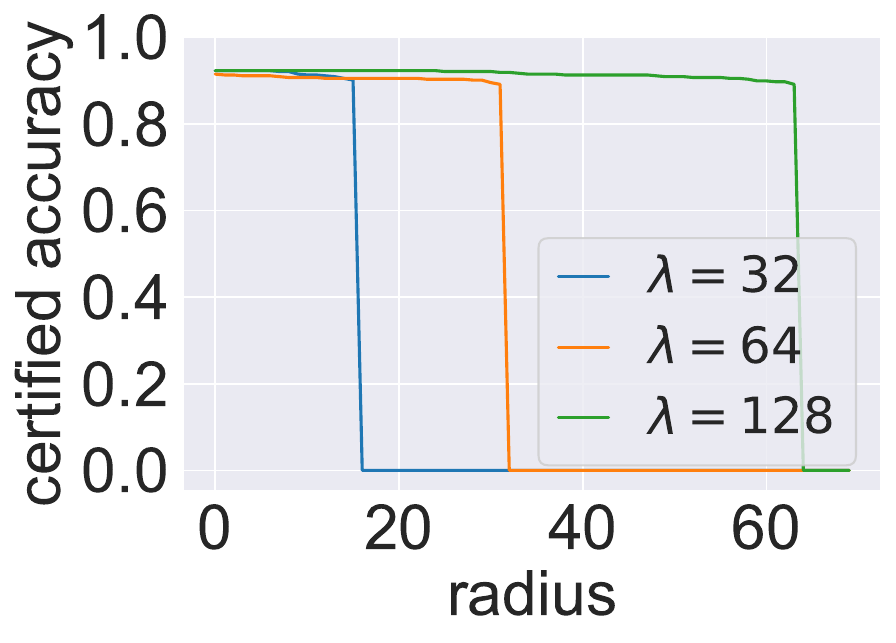}
    \end{subfigure}
    \begin{subfigure}[b]{0.32\columnwidth}
        \centering
        \includegraphics[width=\linewidth]{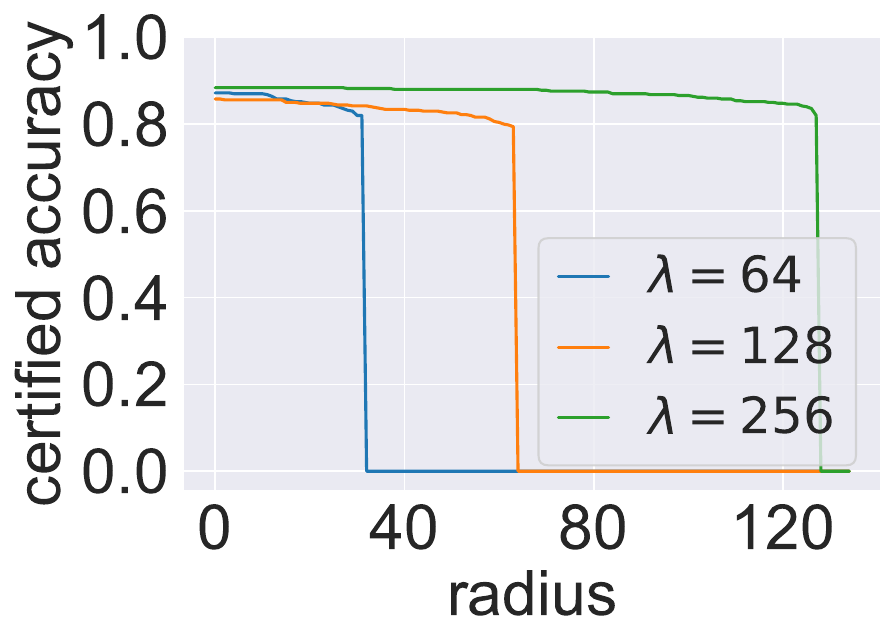}
    \end{subfigure}   
    \begin{subfigure}[b]{0.32\columnwidth}
        \centering
        \includegraphics[width=\linewidth]{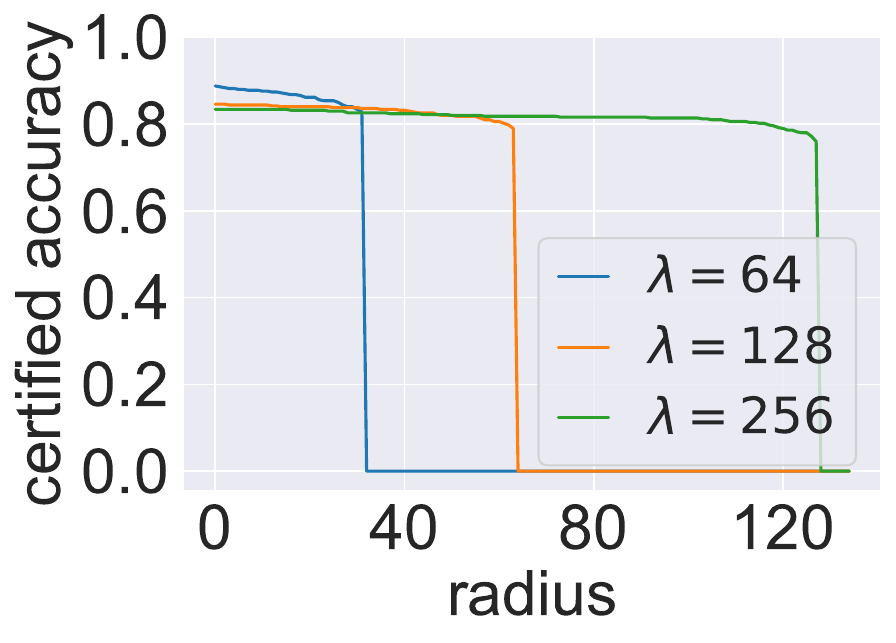}
    \end{subfigure}
    
    \begin{subfigure}[b]{0.32\columnwidth}
        \centering
        \includegraphics[width=\textwidth]{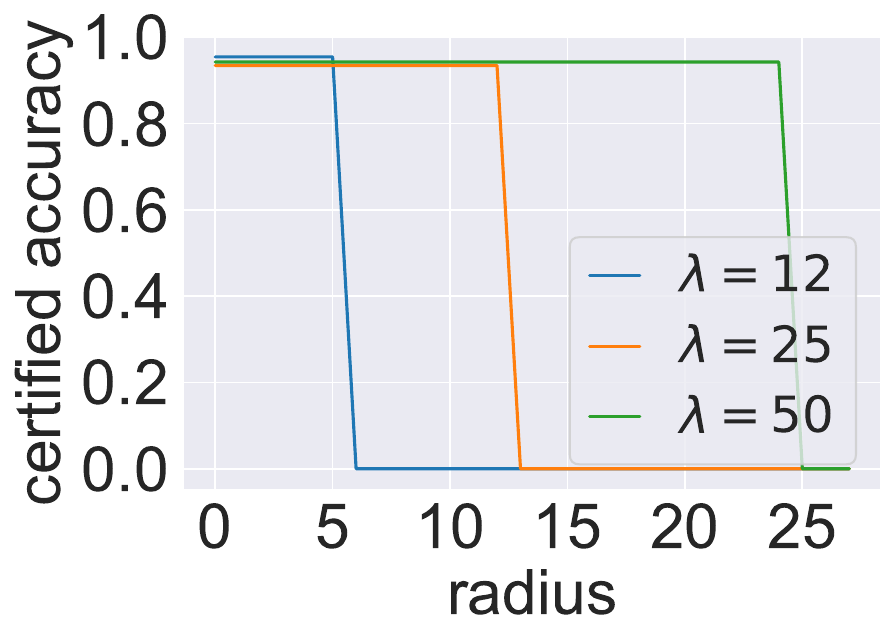}
    \end{subfigure}
    \begin{subfigure}[b]{0.32\columnwidth}
        \centering
        \includegraphics[width=\linewidth]{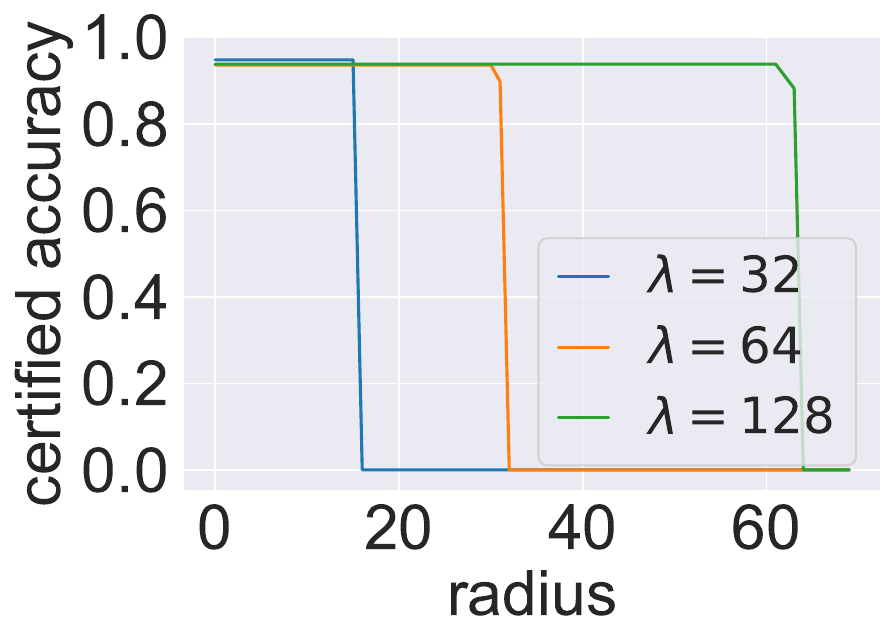}
    \end{subfigure} 
    \begin{subfigure}[b]{0.32\columnwidth}
        \centering
        \includegraphics[width=\linewidth]{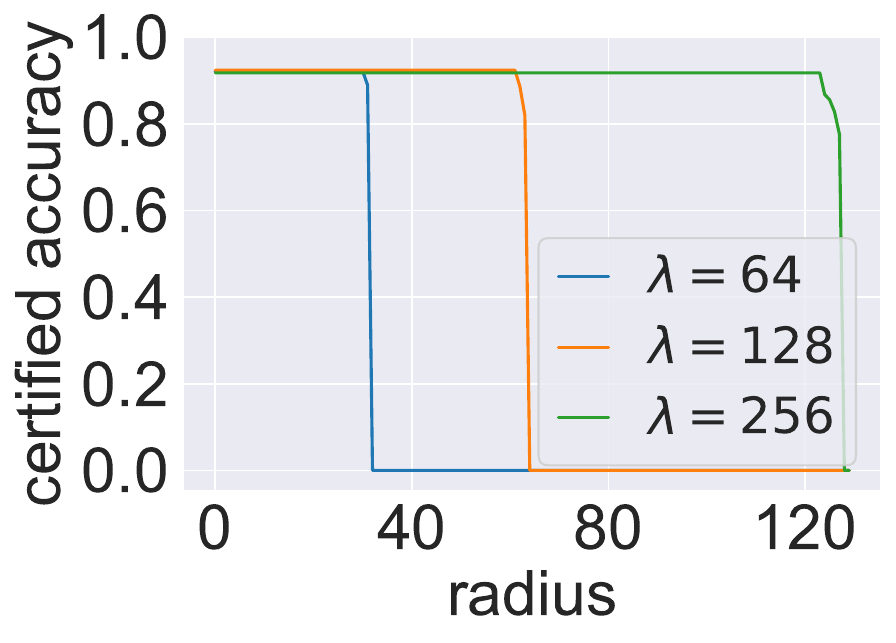}
    \end{subfigure}
\vspace{-1mm}
    \caption{Certified accuracy at different radii against word reordering. Datasets from left to right: AG, Amazon, and IMDB; Models: top-LSTM and bottom-BERT. } 
    \label{fig:noise8_certify_result}
\vspace{-5mm}
\end{figure}

\begin{figure}
    \centering
    \vspace{-3mm}
    \begin{subfigure}[b]{0.32\columnwidth}
        \centering
        \includegraphics[width=\textwidth]{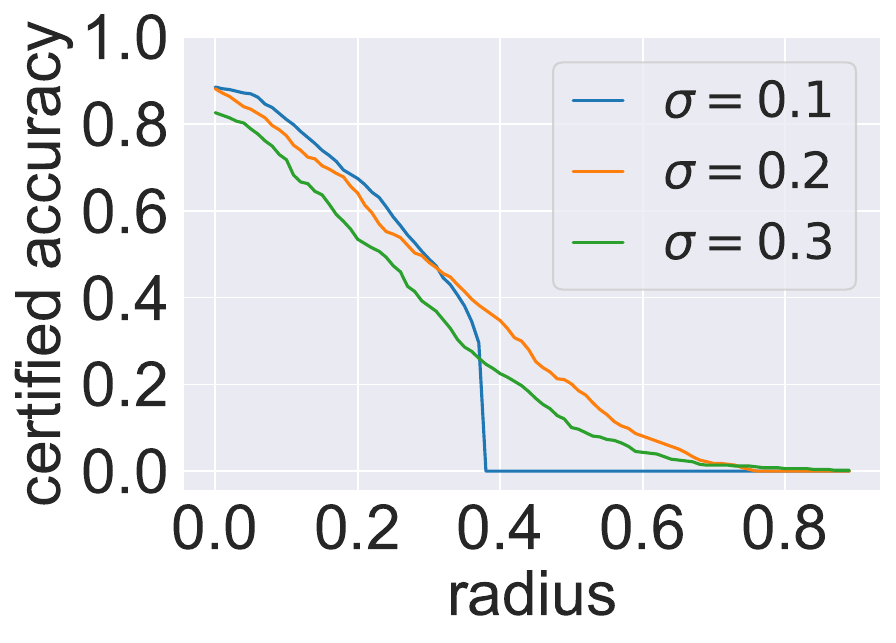}
    \end{subfigure}
    \begin{subfigure}[b]{0.32\columnwidth}
        \centering
        \includegraphics[width=\linewidth]{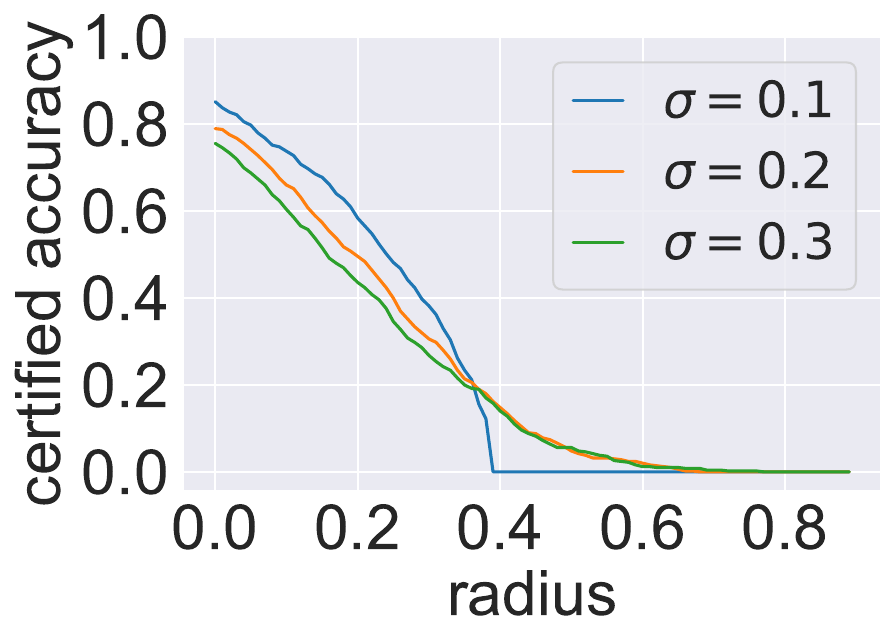}
    \end{subfigure}
    \begin{subfigure}[b]{0.32\columnwidth}
        \centering
        \includegraphics[width=\linewidth]{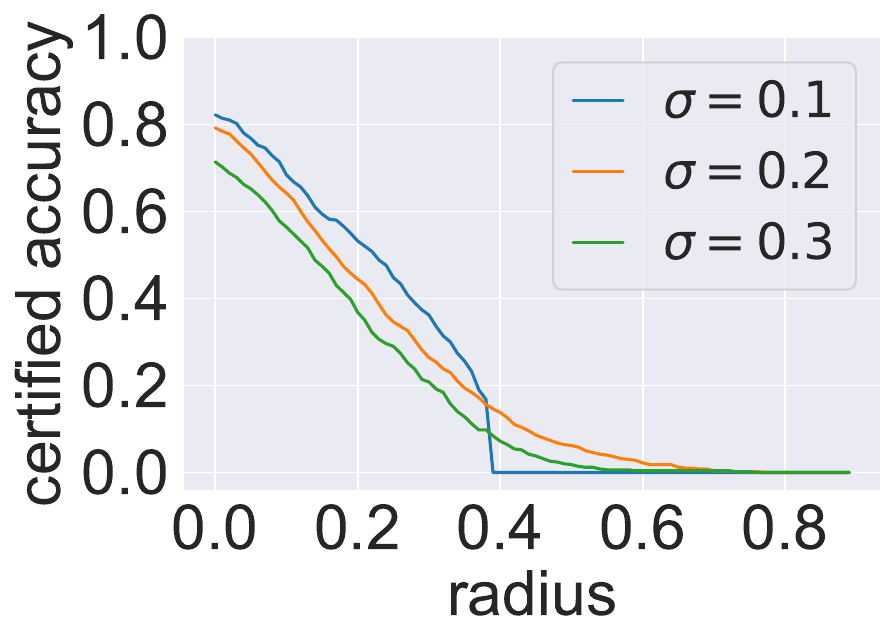}
    \end{subfigure}
    
    \begin{subfigure}[b]{0.32\columnwidth}
        \centering
        \includegraphics[width=\textwidth]{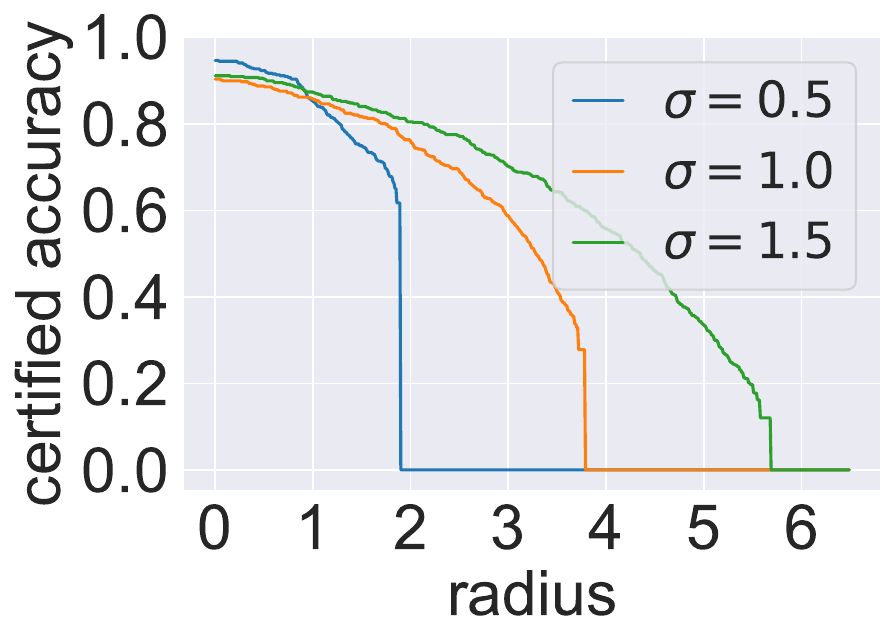}
    \end{subfigure}
    \begin{subfigure}[b]{0.32\columnwidth}
        \centering
        \includegraphics[width=\linewidth]{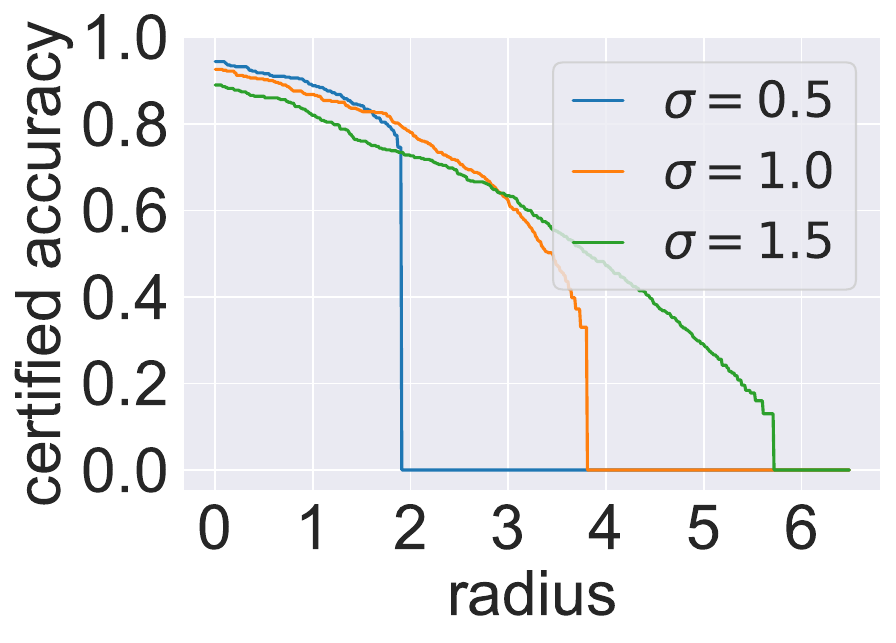}
    \end{subfigure}  
    \begin{subfigure}[b]{0.32\columnwidth}
        \centering
        \includegraphics[width=\linewidth]{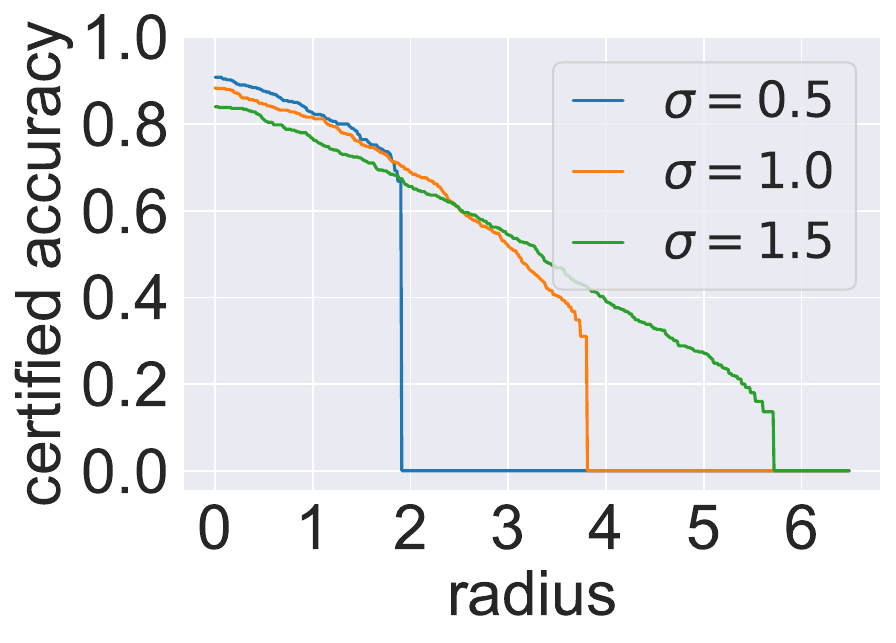}
    \end{subfigure}
\vspace{-1mm}    
\caption{Certified accuracy at different radii against word insertion. Datasets from left to right: AG, Amazon, and IMDB; Models: top-LSTM and bottom-BERT.
} 
\label{fig:noise3_certify_result}
\vspace{-2mm}
\end{figure}

\begin{table*}[]
\setlength\tabcolsep{3pt}
\scriptsize 
\caption{Comparison of certified accuracy of Text-CRS, SAFER~\cite{ye2020safer} and CISS~\cite{zhao2022certified} under different attacks. “$*$” indicates that SAFER and CISS cannot certify operations other than substitution, resulting in a certified accuracy of $0\%$ for these attacks, so we report their empirical accuracy. “-” indicates that BAE-Insert and CISS cannot be performed on LSTM. } 
\vspace{-0.05in}
\resizebox{\linewidth}{!}{
\begin{tabular}{lcccccccccccccccc}
\toprule
 & \multicolumn{1}{l}{} & \multicolumn{3}{c}{TextFooler\cite{jin2020bert}} & \multicolumn{3}{c}{WordReorder\cite{moradi2021evaluating}} & \multicolumn{3}{c}{SynonymInsert\cite{morris2020textattack}} & \multicolumn{3}{c}{BAE-Insert\cite{garg2020bae}} & \multicolumn{3}{c}{InputReduction\cite{feng2018pathologies}} \\
 \cmidrule(r){3-5} \cmidrule(r){6-8} \cmidrule(r){9-11} \cmidrule(r){12-14} \cmidrule(r){15-17} 
Dataset (Model) & \multicolumn{1}{l}{\emph{Vanilla}} & \multicolumn{1}{c}{SAFER} & \multicolumn{1}{c}{CISS} & \multicolumn{1}{c}{Ours} & \multicolumn{1}{c}{SAFER$^*$} & \multicolumn{1}{c}{CISS$^*$} & \multicolumn{1}{c}{Ours} & \multicolumn{1}{c}{SAFER$^*$} & \multicolumn{1}{c}{CISS$^*$} & \multicolumn{1}{c}{Ours} & \multicolumn{1}{c}{SAFER$^*$} & \multicolumn{1}{c}{CISS$^*$} & \multicolumn{1}{c}{Ours} & \multicolumn{1}{c}{SAFER$^*$} & \multicolumn{1}{c}{CISS$^*$} & \multicolumn{1}{c}{Ours} \\
\midrule
AG (LSTM) & 0\% & 90.4\% & -  & \textbf{91.2\%} & 75.4\% & - & \textbf{89.2\%} & 77.6\% & - & \textbf{84.2\%} & - & - & - & 65.8\% & - & \textbf{78.4\%} \\
AG (BERT) & 0\% & 93.2\% & 71.4\% & \textbf{93.6\%} & 86.8\% & 83.8\% & \textbf{87.6\%} & 77.8\% & 74.6\% & \textbf{83.6\%} & \textbf{79.8\%} & 74.0\% & 79.4\% & 55.8\% & 59.0\% & \textbf{68.4\%} \\
Amazon (LSTM) & 0\% & 82.4\% & - & \textbf{83.4\%} & 78.0\% & - & \textbf{91.2\%} & 64.0\% & - & \textbf{71.8\%} & - & - & - & 71.2\% & - & \textbf{74.2\%} \\
Amazon (BERT) & 0\% & 87.0\% & 75.4\% & \textbf{90.8\%} & 82.0\% & \textbf{88.0\%} & 84.6\% & 67.8\% & 67.8\% & \textbf{80.6\%} & 70.4\% & 70.4\% & \textbf{71.2\%} & 68.6\% & 74.0\% & \textbf{82.0\%} \\
IMDB   (LSTM) & 0\% & 82.0\% & - & \textbf{84.6\%} & 77.8\% & - & \textbf{86.0\%} & 66.8\% & - & \textbf{69.6\%} & - & - & - & 68.8\% & - & \textbf{77.0\%} \\
IMDB   (BERT) & 0\% & 83.8\% & 25.6\% & \textbf{84.4\%} & 80.2\% & 24.8\% & \textbf{83.0\%} & 76.8\% & 31.6\% & \textbf{86.2\%} & 77.0\% & 33.0\% & \textbf{80.6\%} & 70.6\% & 29.8\% & \textbf{82.8\%} \\
\midrule
Average & 0\% & 86.5\% & 57.5\% & \textbf{88.0\%} & 80.0\% & 65.5\% & \textbf{86.9\%} & 71.8\% & 58.0\% & \textbf{79.3\%} & 75.7\% & 59.1\% & \textbf{77.1\%} & 66.8\% & 54.3\% & \textbf{77.1\%} \\
\bottomrule
\end{tabular}
}
\vspace{-4mm}
\label{tab:certi_acc_attacks}
\end{table*}

\begin{figure}
    \centering
    \begin{subfigure}[b]{0.32\columnwidth}
        \centering
        \includegraphics[width=\textwidth]{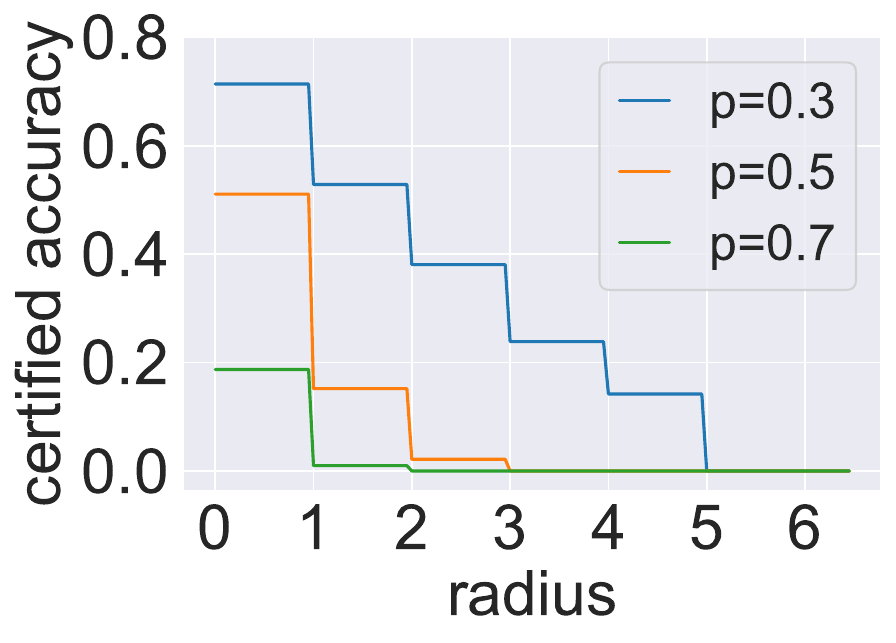}
    \end{subfigure}
    \begin{subfigure}[b]{0.32\columnwidth}
        \centering
        \includegraphics[width=\linewidth]{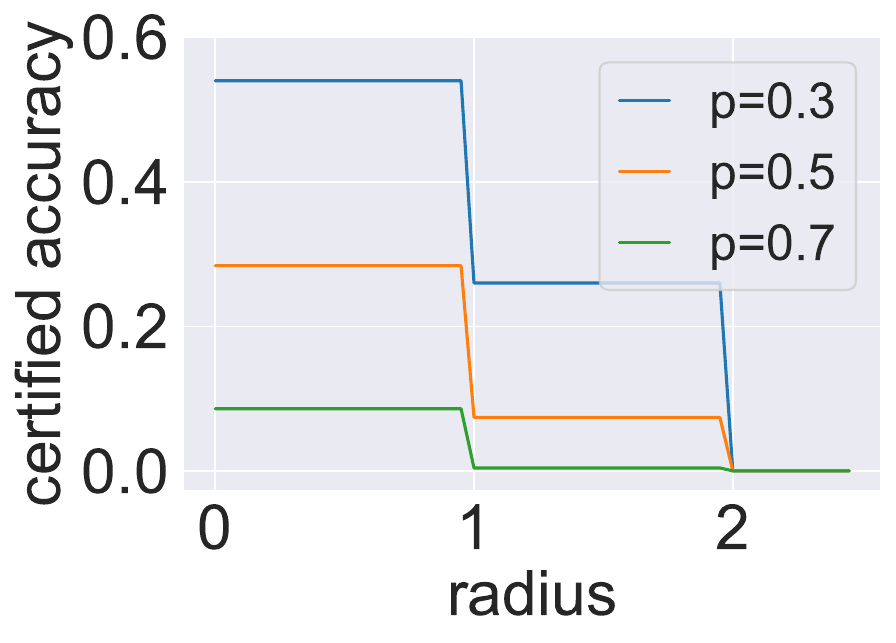}
    \end{subfigure}
    \begin{subfigure}[b]{0.32\columnwidth}
        \centering
        \includegraphics[width=\linewidth]{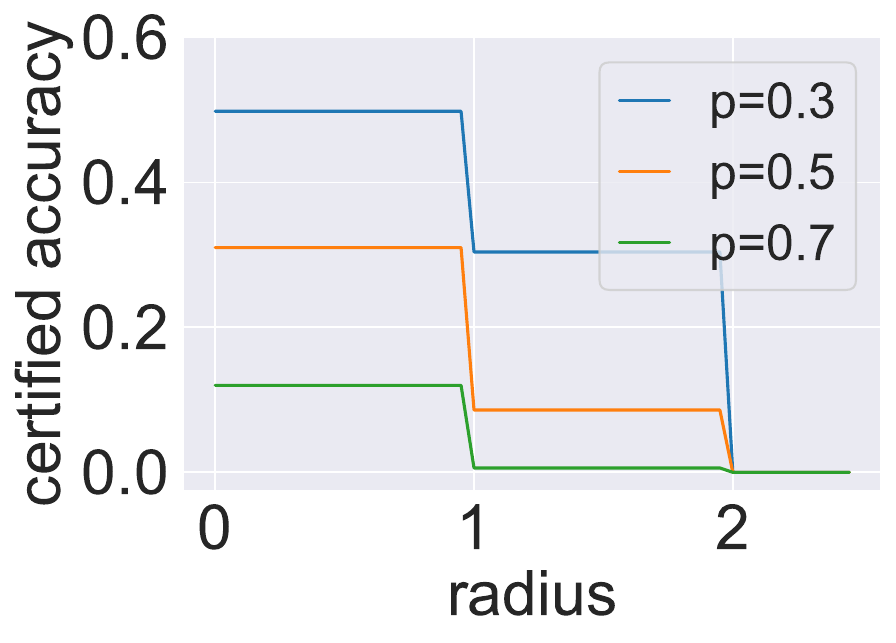}
    \end{subfigure}
    
    \begin{subfigure}[b]{0.32\columnwidth}
        \centering
        \includegraphics[width=\textwidth]{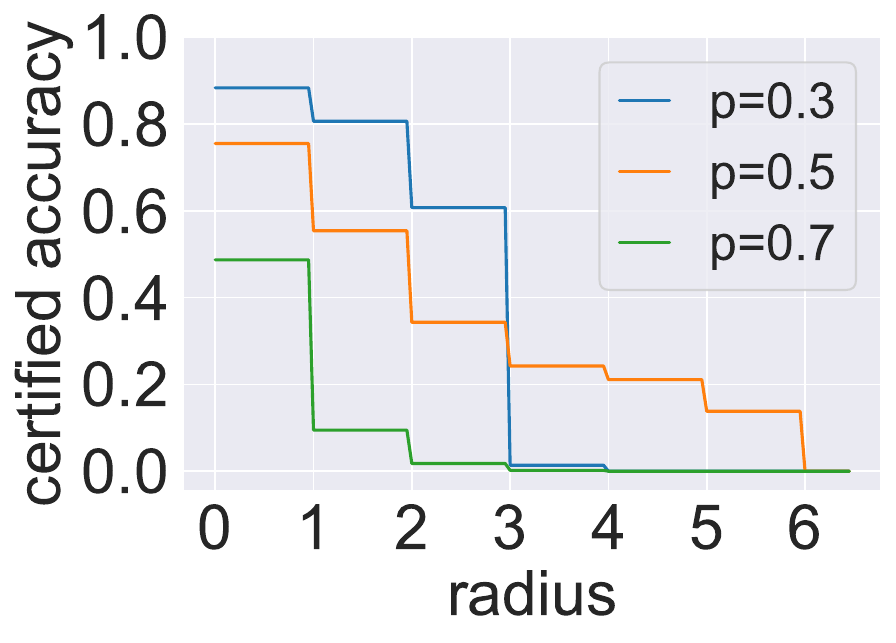}
    \end{subfigure}
    \begin{subfigure}[b]{0.32\columnwidth}
        \centering
        \includegraphics[width=\linewidth]{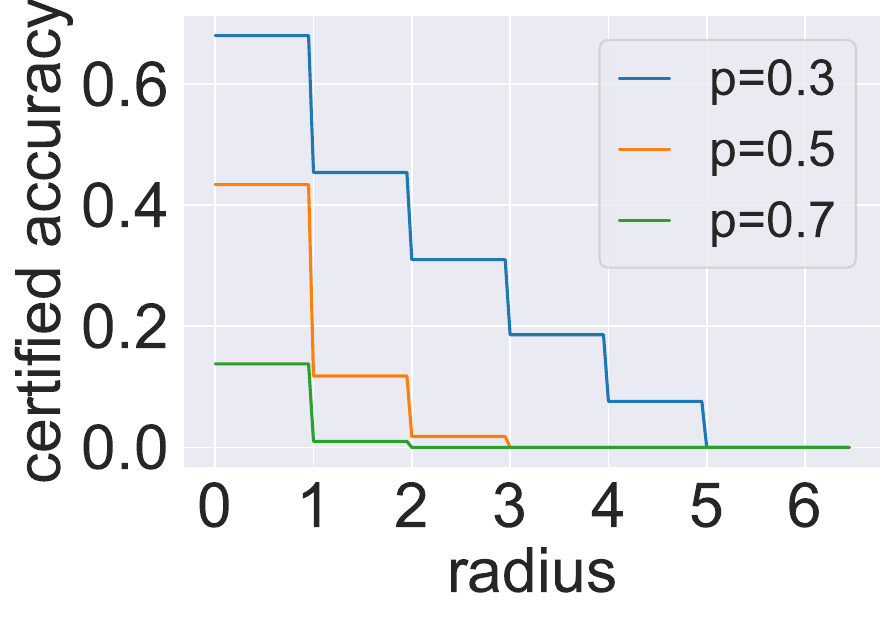}
    \end{subfigure}  
    \begin{subfigure}[b]{0.32\columnwidth}
        \centering
        \includegraphics[width=\linewidth]{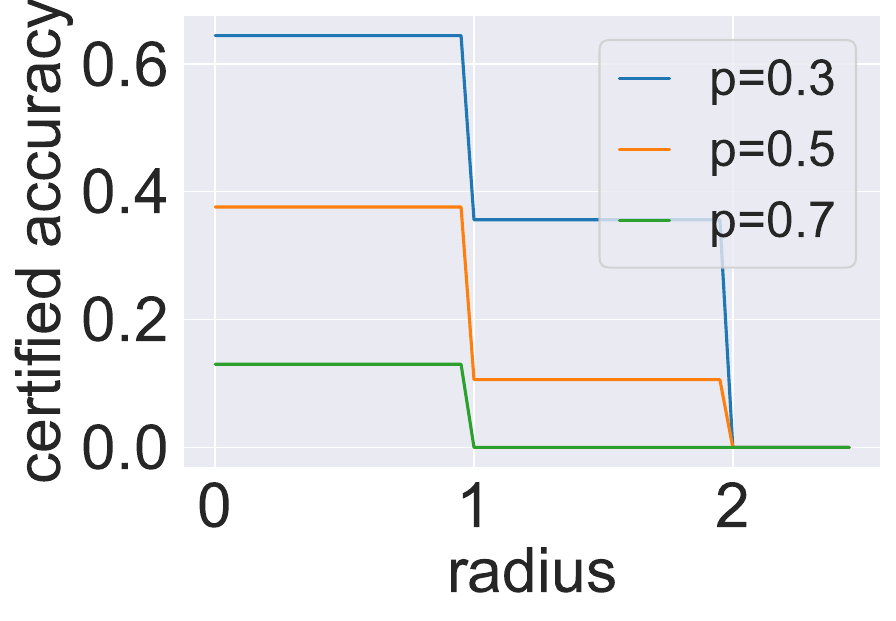}
    \end{subfigure}

\vspace{-1mm}
\caption{Certified accuracy at different radii against word deletion. Datasets from left to right: AG, Amazon, and IMDB; Models: top-LSTM and bottom-BERT.} 
\label{fig:noise4_certify_result}
\vspace{-4mm}
\end{figure}

Figure~\ref{fig:noise5_certify_result} depicts the certified accuracy with different sizes of synonym sets. A radius of $\rad_S \!=\! 200$ for a sentence with a length of 50 implies that each word can be substituted with its four closest synonyms in the thesaurus. In such cases, the prediction results of the smoothed classifier remain the same as the original sentence. Figure~\ref{fig:noise8_certify_result} depicts the certified accuracy under different sizes of shuffling groups. The radius $\rad_R \!=\! 100$ indicates that Text-CRS certifies a text in which the sum of all word positions changes is less than $100$. Figure~\ref{fig:noise3_certify_result} presents the certified accuracy under different Gaussian noise. The radius $\rad_I$ denotes the cumulative embedding $\ell_2$ distances between the original and the inserted word. To illustrate the practical significance of our radius, we calculate the $\ell_2$ distance between $65,713$ words and their closest top-$k$ embeddings under Glove embedding space and BERT embedding space (see Figure~\ref{fig:top-k_closest_embedding}). The results indicate that under $\rad_I \!=\! 0.2$, the LSTM model (using GloVe embedding space) could withstand $\sim$7\% of random word insertions among all the top-3 closest words. The BERT model performs better than LSTM, which can withstand $\sim$53\% and $\sim$11\% of random word insertions among the top-5 and top-50 closest embeddings, respectively, under $\rad_I \!=\! 2$. 
Figure~\ref{fig:noise4_certify_result} shows the certified accuracy under different word deletion probabilities, where a radius $\rad_D \!=\! 2$ indicates that up to two words can be deleted while ensuring the certified robustness of the text. Note that Figure~\ref{fig:noise8_certify_result} also shows the certified radius on word position changes in word insertion and deletion operations.

\begin{figure}[!t]
    \centering
    \begin{subfigure}[b]{0.49\columnwidth}
        \centering
        \includegraphics[width=\linewidth]{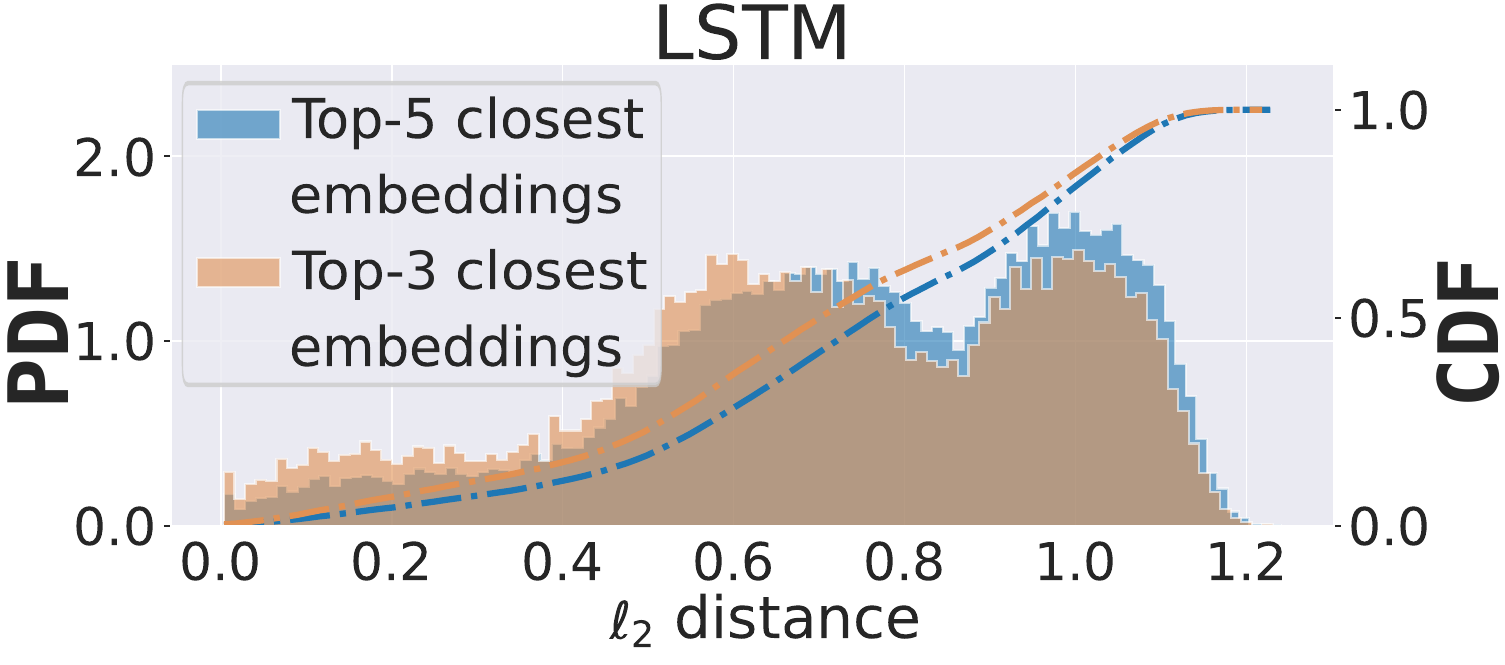}
    \end{subfigure}
    \begin{subfigure}[b]{0.49\columnwidth}
        \centering
         \includegraphics[width=\linewidth]{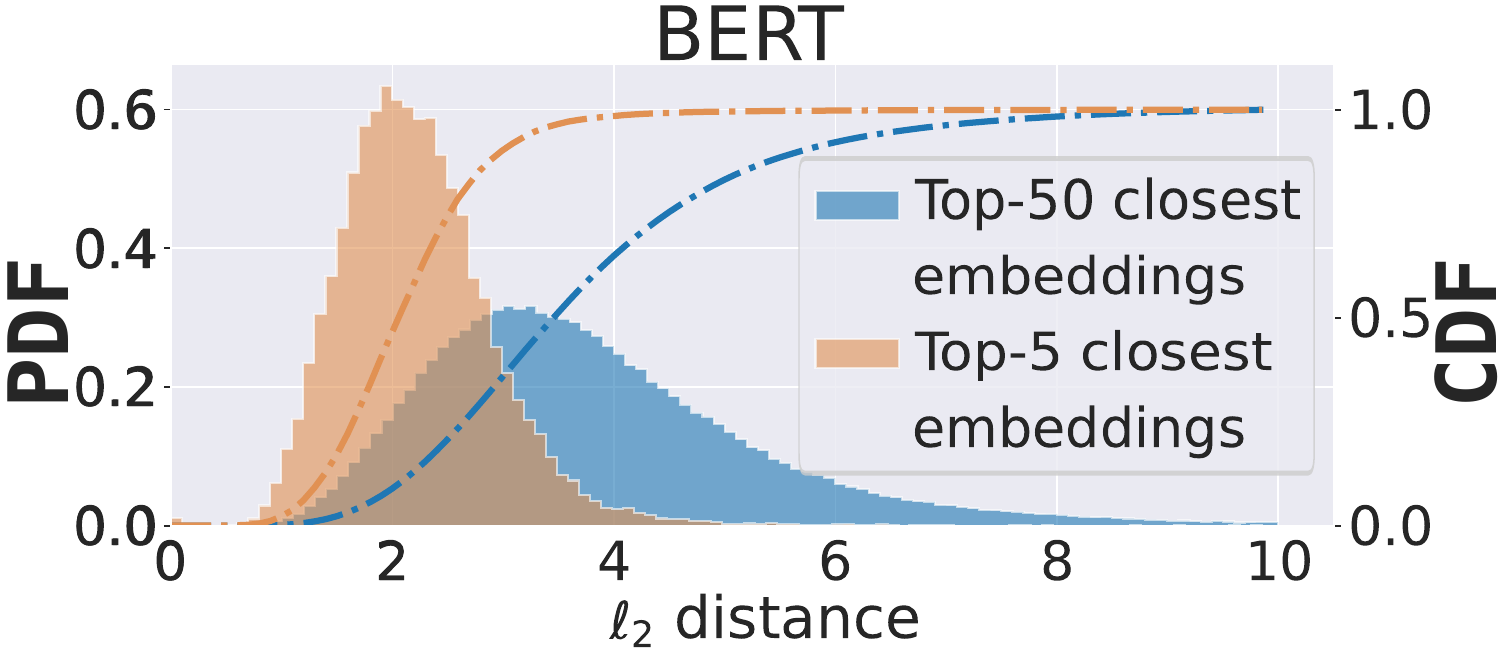} 
    \end{subfigure}
\vspace{-4mm}
    \caption{PDF and CDF of top-$k$ closest embeddings in Glove (LSTM) and BERT embedding space.}
    \label{fig:top-k_closest_embedding}
\vspace{-4mm}
\end{figure}

Furthermore, regarding the word reordering, the noise level has little effect on the certified accuracy, particularly for the BERT, which uses a transformer structure to represent each word as a weighted sum of all input word embeddings~\cite{vaswani2017attention}. Consequently, the prediction of BERT contains information about every word, and the reordering of words has a reduced effect on the resultant output of BERT. This implies that we can choose a large noise level to achieve high certified accuracy while obtaining the largest certified radius simultaneously. It is consistent with our reordering strategies in word insertion and deletion operations.

\begin{table}[]
\setlength\tabcolsep{1pt}
\centering
\scriptsize
\caption{Certified accuracy of word insertion smoothing method against different attacks. } 
\vspace{-0.05in}
\resizebox{\linewidth}{!}{
\begin{tabular}{lccccc}
\toprule
\begin{tabular}[c]{@{}c@{}}Dataset (Model)\end{tabular} & \multicolumn{1}{c} {\begin{tabular}[c]{@{}c@{}}Text-\\ Fooler\cite{jin2020bert} \end{tabular}} & \multicolumn{1}{c}{\begin{tabular}[c]{@{}c@{}}Word-\\ Reorder \cite{moradi2021evaluating}\end{tabular}} & \multicolumn{1}{c}{\begin{tabular}[c]{@{}c@{}} Synonym-\\ Insert\cite{morris2020textattack}\end{tabular}} & \multicolumn{1}{c}{\begin{tabular}[c]{@{}c@{}}BAE-\\ Insert \cite{garg2020bae}\end{tabular}} & \multicolumn{1}{c}{\begin{tabular}[c]{@{}c@{}}Input-\\ Reduction\cite{feng2018pathologies}\end{tabular}} \\
\midrule
AG (LSTM) & 81.6\% & 85.0\% & 84.2\% & - & 70.0\% \\
AG (BERT) & 83.4\% & 90.2\% & 83.6\% & 79.4\% & 58.4\% \\
Amazon (LSTM) & 75.4\% & 83.6\% & 71.8\% & - & 70.4\% \\
Amazon (BERT) & 82.8\% & 84.4\% & 80.6\% & 71.2\% & 74.4\% \\
IMDB (LSTM) & 64.2\% & 87.2\% & 69.6\% & - & 68.8\% \\
IMDB (BERT) & 81.4\% & 87.0\% & 86.2\% & 80.6\% & 77.4\% \\
\midrule
Average & 78.1\% & 86.2\% & 79.3\% & 77.1\% & 69.9\% \\
\bottomrule
\end{tabular}
}
\vspace{-4mm}
\label{tab:universe_certi_acc}
\end{table}

\vspace{-0.1in}
\subsubsection{Certified Accuracy against Unseen Attacks}

We select the noise parameters w.r.t. the highest certified accuracy for the three methods and evaluate the robustness under these noises. Table~\ref{tab:certi_acc_attacks} displays the certified accuracy of three methods against five types of attacks. The `\emph{Vanilla}' column shows that the accuracy is $0\%$ for all successful adversarial examples on the vanilla models. Text-CRS demonstrates an average certified accuracy that is 64\% and 70\% higher than SAFER and CISS, respectively, across all attacks.
Specifically, our method achieves the highest certified accuracy for the TextFooler attack, surpassing SAFER and CISS in all settings. SAFER and CISS only provide certification against substitution attacks, resulting in $0\%$ certified accuracy for WordReorder to InputReduction attacks. Therefore, we evaluate their empirical accuracy against these attacks. It is important to note that empirical accuracy is generally higher than certified accuracy under the same setting. 
On average, our certified accuracy is still $5.5\%$ and $22.8\%$ higher than the empirical accuracy of SAFER and CISS, respectively.

\vspace{-0.1in}
\subsubsection{Word Insertion Smoothing vs. All Attacks (Universality)}
We evaluate the certified accuracy of the word insertion smoothing against all aforementioned attacks (universality), as shown in Table~\ref{tab:universe_certi_acc}. On average, our word insertion smoothing achieves a certified accuracy of $78.1\%$. It can certify against all attacks, though with a slight performance drop compare to the operation-specific methods in our Text-CRS. However, its certified accuracy is still higher than the empirical accuracy of SAFER and CISS, except the SAFER against TextFooler (substitution operation-specific). 
Therefore, the word insertion smoothing in Text-CRS is suitable for providing high certified accuracy when the attack type is unknown (due to its high universality), while higher certified accuracy can be achieved using specific methods in Text-CRS to defend known attacks. 
%

    
    

\begin{figure}
    \centering 
    \begin{subfigure}[b]{0.32\columnwidth}
        \centering
        \includegraphics[width=\textwidth]{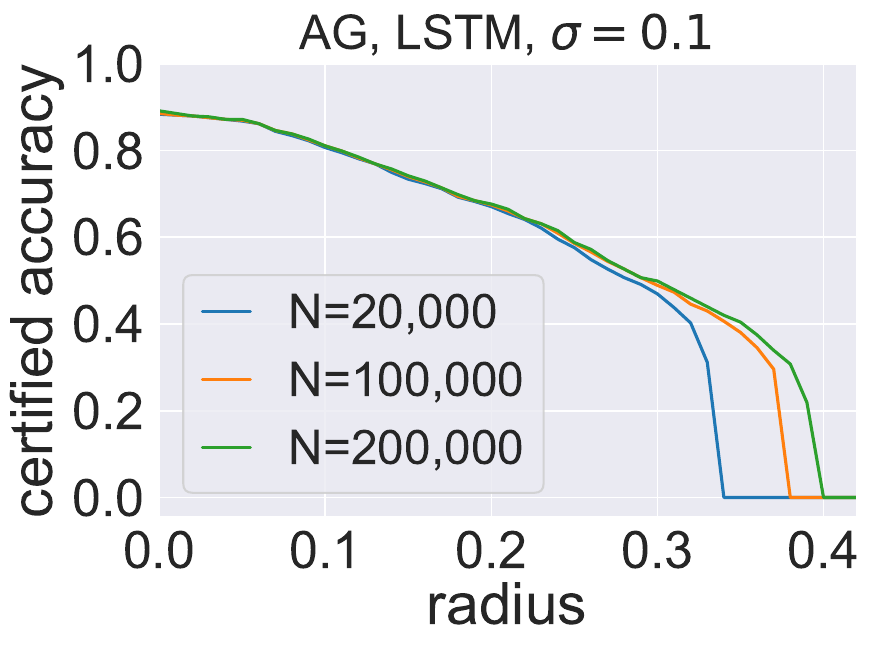}
    \end{subfigure}
    \begin{subfigure}[b]{0.32\columnwidth}
        \centering
        \includegraphics[width=\linewidth]{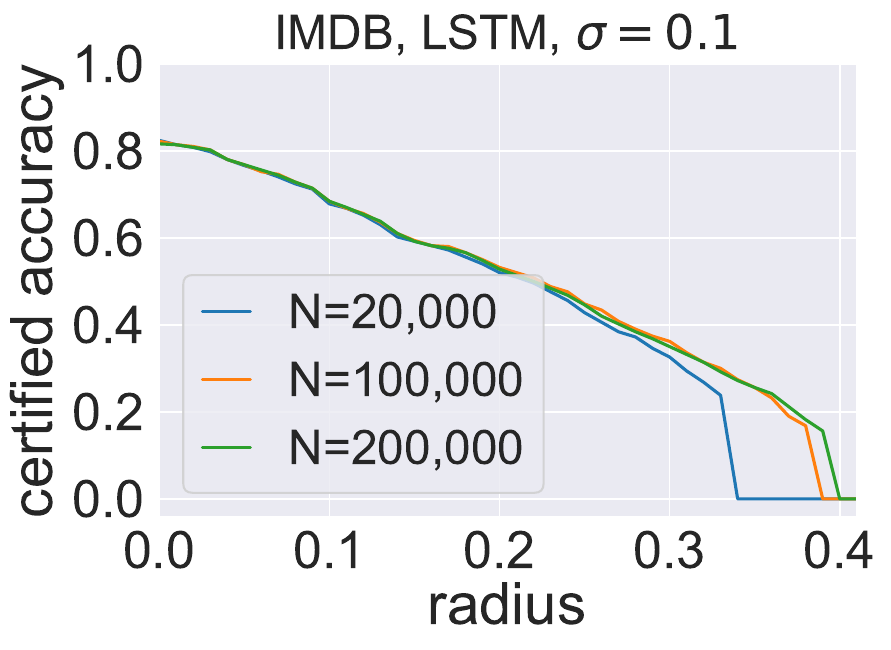}
    \end{subfigure}
    \begin{subfigure}[b]{0.32\columnwidth}
        \centering
        \includegraphics[width=\linewidth]{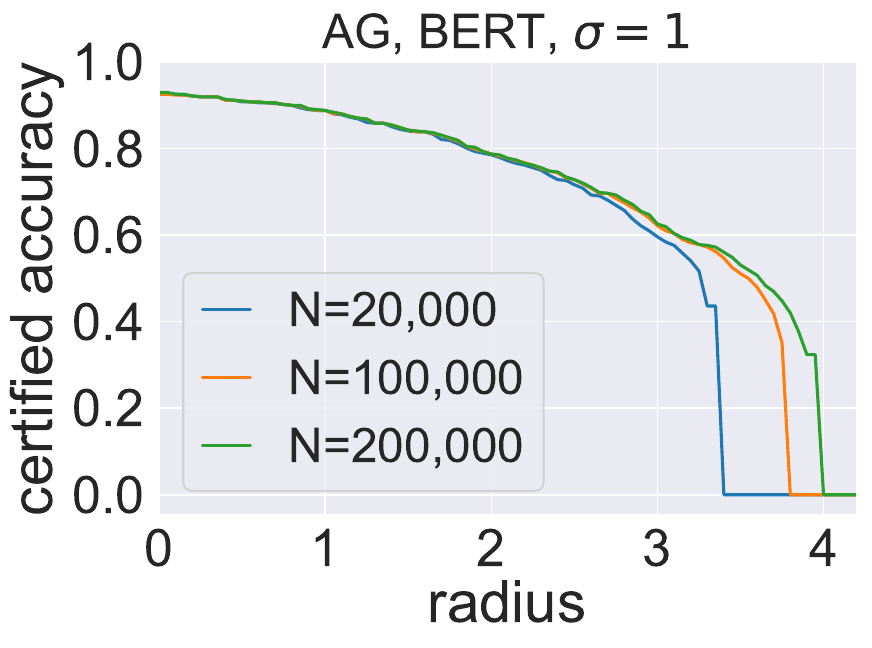}
    \end{subfigure}
\vspace{-1mm} 
    \caption{Impact of the number of samples ($N$) on certified accuracy against word insertion under LSTM and BERT.}\vspace{-0.15in}
\vspace{-2mm}
    \label{fig:impact_N_1}
\end{figure}



\vspace{-0.1in}
\subsubsection{Impact of the Number of Samples ($N$)} 
Figure~\ref{fig:impact_N_1} show that as the number of samples for estimation ($N$) increases, the certified accuracy also increases and the certified radius becomes larger, since the estimation for $\underline{p_A}$ and $\overline{p_B}$ becomes tighter. The impact of $N$ on the certified accuracy of the IMDB dataset is greater than that of the AG dataset, since longer inputs have a larger noise space, necessitating more samples to approximate $\underline{p_A}$ and $\overline{p_B}$.





\vspace{-0.09in}
\section{Related Work} \label{sec:related}
\vspace{-0.1in}

\noindent\textbf{Word-level Adversarial Attacks.} 
These attacks aim to mislead the model by modifying the words in four adversarial operations: synonym substitution~\cite{alzantot2018generating,ren2019generating,jin2020bert, zang2020word,tan2020s,dongtowards}, word reordering~\cite{moradi2021evaluating,nie2019analyzing,yan2021consert,lee2020slm}, word insertion~\cite{morris2020textattack,li2021contextualized,garg2020bae, behjati2019universal}, and word deletion~\cite{feng2018pathologies,moradi2021evaluating,xie2022word}. 
For instance, Ren et al.~\cite{ren2019generating} propose a greedy PWWS algorithm to determine the replacement order of words in a sentence and the selection of synonyms. 
Tan et al.~\cite{tan2020s} proposed Morpheus, which generates plausible and semantically similar adversarial texts by replacing the words with their inflected form. Moradi and Samwald~\cite{moradi2021evaluating} investigated the sensitivity of NLP systems to such operation. 
Morris et al.~\cite{morris2020textattack} proposed TextAttack, including a basic strategy of inserting synonyms of words already in the text.
Li et al.~\cite{li2021contextualized} proposed CLARE, which, similar to BAE, also employs masked language models to predict newly inserted \texttt{<mask>} tokens in the text and replace them.
These adversarial texts have improved syntactic and semantic consistency compared to directly inserting synonym words~\cite{garg2020bae}. 
Feng et al.~\cite{feng2018pathologies} proposed input reduction, which involves the iterative removal of the least significant words from the clean text. 

\vspace{0.03in}

\noindent\textbf{Certified Defenses against Word-level Adversarial Operations.}
These methods rely on interval bound propagation (IBP)~\cite{jia2019certified, huang2019achieving, ko2019popqorn}, zonotope abstraction~\cite{du2021cert, bonaert2021fast} or randomized smoothing~\cite{ye2020safer,zeng2021certified}. IBP~\cite{jia2019certified} and zonotope abstraction~\cite{du2021cert} are both linear relaxation methods, which calculate the lower and upper bound of the model output and then minimize the worst-case loss for certification. Ko et al.~\cite{ko2019popqorn} introduced POPQORN, which uses linear functions to bound the nonlinear activation function in RNN. On more complicated Transformer models, Bonaert et al.~\cite{bonaert2021fast} proposed DeepT to certify against synonym substitution operations based on multi-norm zonotope abstract interpretation. However, IBP and zonotope abstraction methods are not scalable, and few can tightly certify large-scale pre-trained models, such as BERT. 
To address this challenge, Ye et al.~\cite{ye2020safer} proposed SAFER, which leverages randomized smoothing to provide $\ell_0$ certified robustness against synonym substitutions. Zhao et al.~\cite{zhao2022certified} proposed CISS, which combines the IBP encoder and randomized smoothing to guarantee $\ell_2$ robustness against word substitution attacks. However, since CCIS maps the input into a semantic space, only the certified radius in the semantic space is available, not the certified radius in the practical word space. 
Zeng et al.~\cite{zeng2021certified} proposed RanMASK which provides $\ell_0$ certification against random word substitution. 
However, the exact $\ell_0$ radius of RanMASK is impractical to compute, requiring search traversal. Additionally, such methods cannot certify against universal adversarial operations. 




\vspace{0.03in}

\noindent\textbf{Randomized Smoothing against Semantic Attacks.}
Semantic attacks manipulate inputs through semantic transformations, such as image rotation and blurring, to mislead the models. To mitigate them, some randomized smoothing methods have been proposed by sampling random noise from diverse distributions. For instance, Li et al. proposed TSS~\cite{li2021tss} to use Gaussian, uniform, and Laplace distributions to certify against general semantic transformations. DeformRS~\cite{alfarra2022deformrs} and GSmooth~\cite{hao2022gsmooth} certify image semantic transformations like translation, scaling, and steering. Liu et al.~\cite{liu2021pointguard} proposed PointGuard to certify against point modification, addition, and deletion via uniform distribution. Perez et al.~\cite{perez20223deformrs} proposed 3DeformRS, for probabilistic certification of point cloud DNNs against point semantic transformations. Finally, Bojchevski et al.~\cite{bojchevski2020efficient} and Wang et al.~\cite{wang2021certified} used the Binomial distribution to certify the graph neural networks against discrete structure perturbations. 
 Notice that, Huang et al.~\cite{huang2023rs} concurrently proposed RS-Del, a method of randomized deletion smoothing that offers certified edit distance robustness for discrete sequence classifiers, such as malware detection. This approach can be resistant to adversarial manipulations (e.g., deletions, insertions) within byte sequences. 



\vspace{-0.09in}
\section{Conclusion}
\vspace{-0.1in}

In this paper, we present a generalized framework Text-CRS for certifying model robustness against word-level adversarial attacks. Specifically, we propose four randomized smoothing methods that utilize appropriate noise to align with four fundamental adversarial operations, including one that is applicable to all operations. In addition, we propose an enhanced training toolkit to further improve the certified accuracy. We conduct extensive evaluations of our methods, considering both adversarial operations and real-world adversarial attacks on diverse datasets and models. The results demonstrate that our method outperforms SOTA methods in their settings (substitution) and establishes new benchmarks of certified accuracy for the other three operations.

\vspace{-0.09in}
\section*{Acknowledgments}
\vspace{-0.1in}

This work is partially supported by the National Key R\&D Program of China (2020AAA0107700), the National Natural Science Foundation of China (62172359), the Hangzhou Leading Innovation and Entrepreneurship Team (TD2020003), the National Science Foundation grants (CNS-2308730, CNS-2302689, CNS-2319277, CMMI-2326341, CNS-2241713, and ECCS-2216926), and the Cisco Research Award. The authors would like to thank the anonymous reviewers for their constructive comments.





\bibliographystyle{IEEEtran}
\bibliography{IEEEabrv_full,alan}

\begin{thebibliography}{10}
\providecommand{\url}[1]{#1}
\csname url@samestyle\endcsname
\providecommand{\newblock}{\relax}
\providecommand{\bibinfo}[2]{#2}
\providecommand{\BIBentrySTDinterwordspacing}{\spaceskip=0pt\relax}
\providecommand{\BIBentryALTinterwordstretchfactor}{4}
\providecommand{\BIBentryALTinterwordspacing}{\spaceskip=\fontdimen2\font plus
\BIBentryALTinterwordstretchfactor\fontdimen3\font minus
  \fontdimen4\font\relax}
\providecommand{\BIBforeignlanguage}[2]{{%
\expandafter\ifx\csname l@#1\endcsname\relax
\typeout{** WARNING: IEEEtran.bst: No hyphenation pattern has been}%
\typeout{** loaded for the language `#1'. Using the pattern for}%
\typeout{** the default language instead.}%
\else
\language=\csname l@#1\endcsname
\fi
#2}}
\providecommand{\BIBdecl}{\relax}
\BIBdecl

\bibitem{chatgpt:online}
``Chatgpt,'' \url{https://chat.openai.com/}, developed by OpenAI.

\bibitem{zhang2020dialogpt}
Y.~Zhang, S.~Sun, M.~Galley, Y.-C. Chen, C.~Brockett, X.~Gao, J.~Gao, J.~Liu,
  and W.~B. Dolan, ``Dialogpt: Large-scale generative pre-training for
  conversational response generation,'' in \emph{Proceedings of the 58th Annual
  Meeting of the Association for Computational Linguistics: System
  Demonstrations}, 2020, pp. 270--278.

\bibitem{shuster2020dialogue}
K.~Shuster, D.~Ju, S.~Roller, E.~Dinan, Y.-L. Boureau, and J.~Weston, ``The
  dialogue dodecathlon: Open-domain knowledge and image grounded conversational
  agents,'' in \emph{Proceedings of the 58th Annual Meeting of the Association
  for Computational Linguistics}, 2020, pp. 2453--2470.

\bibitem{roller2021recipes}
S.~Roller, E.~Dinan, N.~Goyal, D.~Ju, M.~Williamson, Y.~Liu, J.~Xu, M.~Ott,
  E.~M. Smith, Y.-L. Boureau \emph{et~al.}, ``Recipes for building an
  open-domain chatbot,'' in \emph{Proceedings of the 16th Conference of the
  European Chapter of the Association for Computational Linguistics: Main
  Volume}, 2021, pp. 300--325.

\bibitem{9TextCla71:online}
A.~Gramatzki, ``9 text classification examples in action,''
  \url{https://levity.ai/blog/9-text-classification-examples}, 2022.

\bibitem{AIDocume25:online}
K.~Ganesan, ``Ai document classification: 5 real-world examples | opinosis
  analytics,''
  \url{https://www.opinosis-analytics.com/blog/document-classification/}, 2021.

\bibitem{ren2019generating}
S.~Ren, Y.~Deng, K.~He, and W.~Che, ``Generating natural language adversarial
  examples through probability weighted word saliency,'' in \emph{Proceedings
  of the 57th annual meeting of the association for computational linguistics},
  2019, pp. 1085--1097.

\bibitem{maheshwary2021generating}
R.~Maheshwary, S.~Maheshwary, and V.~Pudi, ``Generating natural language
  attacks in a hard label black box setting,'' in \emph{Proceedings of the AAAI
  Conference on Artificial Intelligence}, vol.~35, no.~15, 2021, pp.
  13\,525--13\,533.

\bibitem{jin2020bert}
D.~Jin, Z.~Jin, J.~T. Zhou, and P.~Szolovits, ``Is bert really robust? a strong
  baseline for natural language attack on text classification and entailment,''
  in \emph{Proceedings of the AAAI conference on artificial intelligence},
  vol.~34, no.~05, 2020, pp. 8018--8025.

\bibitem{garg2020bae}
S.~Garg and G.~Ramakrishnan, ``Bae: Bert-based adversarial examples for text
  classification,'' in \emph{Proceedings of the 2020 Conference on Empirical
  Methods in Natural Language Processing (EMNLP)}, 2020, pp. 6174--6181.

\bibitem{lee2022query}
D.~Lee, S.~Moon, J.~Lee, and H.~O. Song, ``Query-efficient and scalable
  black-box adversarial attacks on discrete sequential data via bayesian
  optimization,'' in \emph{International Conference on Machine Learning}.\hskip
  1em plus 0.5em minus 0.4em\relax PMLR, 2022, pp. 12\,478--12\,497.

\bibitem{li2021contextualized}
D.~Li, Y.~Zhang, H.~Peng, L.~Chen, C.~Brockett, M.-T. Sun, and W.~B. Dolan,
  ``Contextualized perturbation for textual adversarial attack,'' in
  \emph{NAACL HLT}, 2021.

\bibitem{feng2018pathologies}
S.~Feng, E.~Wallace, A.~Grissom~II, P.~Rodriguez, M.~Iyyer, and J.~Boyd-Graber,
  ``Pathologies of neural models make interpretation difficult,'' in
  \emph{Proceedings of the 2018 Conference on Empirical Methods in Natural
  Language Processing}, 2018, pp. 3719--3728.

\bibitem{wu2019misinformation}
L.~Wu, F.~Morstatter, K.~M. Carley, and H.~Liu, ``Misinformation in social
  media: definition, manipulation, and detection,'' \emph{ACM SIGKDD
  explorations newsletter}, vol.~21, no.~2, pp. 80--90, 2019.

\bibitem{goodfellow2014explaining}
I.~J. Goodfellow, J.~Shlens, and C.~Szegedy, ``Explaining and harnessing
  adversarial examples,'' \emph{arXiv preprint arXiv:1412.6572}, 2014.

\bibitem{madrytowards}
A.~Madry, A.~Makelov, L.~Schmidt, D.~Tsipras, and A.~Vladu, ``Towards deep
  learning models resistant to adversarial attacks,'' in \emph{International
  Conference on Learning Representations}, 2018.

\bibitem{dongtowards}
X.~Dong, A.~T. Luu, R.~Ji, and H.~Liu, ``Towards robustness against natural
  language word substitutions,'' in \emph{International Conference on Learning
  Representations}, 2020.

\bibitem{yoo2022detection}
K.~Yoo, J.~Kim, J.~Jang, and N.~Kwak, ``Detection of adversarial examples in
  text classification: Benchmark and baseline via robust density estimation,''
  in \emph{Findings of the Association for Computational Linguistics: ACL
  2022}, 2022, pp. 3656--3672.

\bibitem{mozes2021frequency}
M.~Mozes, P.~Stenetorp, B.~Kleinberg, and L.~Griffin, ``Frequency-guided word
  substitutions for detecting textual adversarial examples,'' in
  \emph{Proceedings of the 16th Conference of the European Chapter of the
  Association for Computational Linguistics: Main Volume}, 2021, pp. 171--186.

\bibitem{wang2021natural}
X.~Wang, J.~Hao, Y.~Yang, and K.~He, ``Natural language adversarial defense
  through synonym encoding,'' in \emph{Uncertainty in Artificial
  Intelligence}.\hskip 1em plus 0.5em minus 0.4em\relax PMLR, 2021, pp.
  823--833.

\bibitem{yang2022robust}
Y.~Yang, X.~Wang, and K.~He, ``Robust textual embedding against word-level
  adversarial attacks,'' in \emph{Uncertainty in Artificial
  Intelligence}.\hskip 1em plus 0.5em minus 0.4em\relax PMLR, 2022, pp.
  2214--2224.

\bibitem{cohen2019certified}
J.~Cohen, E.~Rosenfeld, and Z.~Kolter, ``Certified adversarial robustness via
  randomized smoothing,'' in \emph{international conference on machine
  learning}.\hskip 1em plus 0.5em minus 0.4em\relax PMLR, 2019, pp. 1310--1320.

\bibitem{zhang2020black}
D.~Zhang, M.~Ye, C.~Gong, Z.~Zhu, and Q.~Liu, ``Black-box certification with
  randomized smoothing: A functional optimization based framework,''
  \emph{Advances in Neural Information Processing Systems}, vol.~33, pp.
  2316--2326, 2020.

\bibitem{li2022sok}
L.~Li, T.~Xie, and B.~Li, ``Sok: Certified robustness for deep neural
  networks,'' in \emph{2023 IEEE symposium on security and privacy (SP)}.\hskip
  1em plus 0.5em minus 0.4em\relax IEEE, 2023, pp. 1289--1310.

\bibitem{jia2019certified}
R.~Jia, A.~Raghunathan, K.~G{\"o}ksel, and P.~Liang, ``Certified robustness to
  adversarial word substitutions,'' in \emph{Proceedings of the 2019 Conference
  on Empirical Methods in Natural Language Processing and the 9th International
  Joint Conference on Natural Language Processing (EMNLP-IJCNLP)}, 2019, pp.
  4129--4142.

\bibitem{ko2019popqorn}
C.-Y. Ko, Z.~Lyu, L.~Weng, L.~Daniel, N.~Wong, and D.~Lin, ``Popqorn:
  Quantifying robustness of recurrent neural networks,'' in \emph{International
  Conference on Machine Learning}.\hskip 1em plus 0.5em minus 0.4em\relax PMLR,
  2019, pp. 3468--3477.

\bibitem{du2021cert}
T.~Du, S.~Ji, L.~Shen, Y.~Zhang, J.~Li, J.~Shi, C.~Fang, J.~Yin, R.~Beyah, and
  T.~Wang, ``Cert-rnn: Towards certifying the robustness of recurrent neural
  networks.'' \emph{CCS}, vol.~21, no. 2021, pp. 15--19, 2021.

\bibitem{bonaert2021fast}
G.~Bonaert, D.~I. Dimitrov, M.~Baader, and M.~Vechev, ``Fast and precise
  certification of transformers,'' in \emph{Proceedings of the 42nd ACM SIGPLAN
  International Conference on Programming Language Design and Implementation},
  2021, pp. 466--481.

\bibitem{ye2020safer}
M.~Ye, C.~Gong, and Q.~Liu, ``Safer: A structure-free approach for certified
  robustness to adversarial word substitutions,'' in \emph{Proceedings of the
  58th Annual Meeting of the Association for Computational Linguistics}, 2020,
  pp. 3465--3475.

\bibitem{zeng2021certified}
J.~Zeng, J.~Xu, X.~Zheng, and X.-J. Huang, ``Certified robustness to text
  adversarial attacks by randomized [mask],'' \emph{Computational Linguistics},
  vol.~49, no.~2, pp. 395--427, 2023.

\bibitem{zhao2022certified}
H.~Zhao, C.~Ma, X.~Dong, A.~T. Luu, Z.-H. Deng, and H.~Zhang, ``Certified
  robustness against natural language attacks by causal intervention,'' in
  \emph{International Conference on Machine Learning}.\hskip 1em plus 0.5em
  minus 0.4em\relax PMLR, 2022, pp. 26\,958--26\,970.

\bibitem{wang2021certified2}
W.~Wang, P.~Tang, J.~Lou, and L.~Xiong, ``Certified robustness to word
  substitution attack with differential privacy,'' in \emph{Proceedings of the
  2021 Conference of the North American Chapter of the Association for
  Computational Linguistics: Human Language Technologies}, 2021, pp.
  1102--1112.

\bibitem{fischer2020certified}
M.~Fischer, M.~Baader, and M.~Vechev, ``Certified defense to image
  transformations via randomized smoothing,'' \emph{Advances in Neural
  information processing systems}, vol.~33, pp. 8404--8417, 2020.

\bibitem{li2021tss}
L.~Li, M.~Weber, X.~Xu, L.~Rimanic, B.~Kailkhura, T.~Xie, C.~Zhang, and B.~Li,
  ``Tss: Transformation-specific smoothing for robustness certification,'' in
  \emph{Proceedings of the 2021 ACM SIGSAC Conference on Computer and
  Communications Security}, 2021, pp. 535--557.

\bibitem{alfarra2022deformrs}
M.~Alfarra, A.~Bibi, N.~Khan, P.~H. Torr, and B.~Ghanem, ``Deformrs: Certifying
  input deformations with randomized smoothing,'' in \emph{Proceedings of the
  AAAI Conference on Artificial Intelligence}, vol.~36, no.~6, 2022, pp.
  6001--6009.

\bibitem{hao2022gsmooth}
Z.~Hao, C.~Ying, Y.~Dong, H.~Su, J.~Song, and J.~Zhu, ``Gsmooth: Certified
  robustness against semantic transformations via generalized randomized
  smoothing,'' in \emph{International Conference on Machine Learning}.\hskip
  1em plus 0.5em minus 0.4em\relax PMLR, 2022, pp. 8465--8483.

\bibitem{liu2021pointguard}
H.~Liu, J.~Jia, and N.~Z. Gong, ``Pointguard: Provably robust 3d point cloud
  classification,'' in \emph{Proceedings of the IEEE/CVF conference on computer
  vision and pattern recognition}, 2021, pp. 6186--6195.

\bibitem{huang2022word}
P.~Huang, Y.~Yang, F.~Jia, M.~Liu, F.~Ma, and J.~Zhang, ``Word level robustness
  enhancement: Fight perturbation with perturbation,'' in \emph{AAAI}, 2022.

\bibitem{geng2015staircase}
Q.~Geng, P.~Kairouz, S.~Oh, and P.~Viswanath, ``The staircase mechanism in
  differential privacy,'' \emph{IEEE Journal of Selected Topics in Signal
  Processing}, vol.~9, no.~7, pp. 1176--1184, 2015.

\bibitem{pennington2014glove}
J.~Pennington, R.~Socher, and C.~D. Manning, ``Glove: Global vectors for word
  representation,'' in \emph{Proceedings of the 2014 conference on empirical
  methods in natural language processing (EMNLP)}, 2014, pp. 1532--1543.

\bibitem{devlin-etal-2019-bert}
J.~D. M.-W.~C. Kenton and L.~K. Toutanova, ``Bert: Pre-training of deep
  bidirectional transformers for language understanding,'' in \emph{Proceedings
  of NAACL-HLT}, 2019, pp. 4171--4186.

\bibitem{schuster1997bidirectional}
``Bidirectional recurrent neural networks,'' \emph{IEEE transactions on Signal
  Processing}, vol.~45, no.~11, pp. 2673--2681, 1997.

\bibitem{kim-2014-convolutional}
Y.~Chen, ``Convolutional neural network for sentence classification,'' Master's
  thesis, University of Waterloo, 2015.

\bibitem{vaswani2017attention}
A.~Vaswani, N.~Shazeer, N.~Parmar, J.~Uszkoreit, L.~Jones, A.~N. Gomez,
  {\L}.~Kaiser, and I.~Polosukhin, ``Attention is all you need,''
  \emph{Advances in neural information processing systems}, vol.~30, 2017.

\bibitem{li2018textbugger}
J.~Li, S.~Ji, T.~Du, B.~Li, and T.~Wang, ``Textbugger: Generating adversarial
  text against real-world applications,'' in \emph{26th Annual Network and
  Distributed System Security Symposium}, 2019.

\bibitem{ebrahimi2017hotflip}
J.~Ebrahimi, A.~Rao, D.~Lowd, and D.~Dou, ``Hotflip: White-box adversarial
  examples for text classification,'' in \emph{Proceedings of the 56th Annual
  Meeting of the Association for Computational Linguistics (Volume 2: Short
  Papers)}, 2018, pp. 31--36.

\bibitem{gao2018black}
J.~Gao, J.~Lanchantin, M.~L. Soffa, and Y.~Qi, ``Black-box generation of
  adversarial text sequences to evade deep learning classifiers,'' in
  \emph{2018 IEEE Security and Privacy Workshops (SPW)}.\hskip 1em plus 0.5em
  minus 0.4em\relax IEEE, 2018, pp. 50--56.

\bibitem{pruthi2019combating}
D.~Pruthi, B.~Dhingra, and Z.~C. Lipton, ``Combating adversarial misspellings
  with robust word recognition,'' in \emph{Proceedings of the 57th Annual
  Meeting of the Association for Computational Linguistics}, 2019, pp.
  5582--5591.

\bibitem{moradi2021evaluating}
M.~Moradi and M.~Samwald, ``Evaluating the robustness of neural language models
  to input perturbations,'' in \emph{Proceedings of the 2021 Conference on
  Empirical Methods in Natural Language Processing}, 2021, pp. 1558--1570.

\bibitem{morris2020textattack}
J.~Morris, E.~Lifland, J.~Y. Yoo, J.~Grigsby, D.~Jin, and Y.~Qi, ``Textattack:
  A framework for adversarial attacks, data augmentation, and adversarial
  training in nlp,'' in \emph{Proceedings of the 2020 Conference on Empirical
  Methods in Natural Language Processing: System Demonstrations}, 2020, pp.
  119--126.

\bibitem{wang2adversarial}
B.~Wang, C.~Xu, S.~Wang, Z.~Gan, Y.~Cheng, J.~Gao, A.~H. Awadallah, and B.~Li,
  ``Adversarial glue: A multi-task benchmark for robustness evaluation of
  language models,'' in \emph{Thirty-fifth Conference on Neural Information
  Processing Systems Datasets and Benchmarks Track (Round 2)}, 2021.

\bibitem{iyyer2018adversarial}
M.~Iyyer, J.~Wieting, K.~Gimpel, and L.~Zettlemoyer, ``Adversarial example
  generation with syntactically controlled paraphrase networks,'' in
  \emph{Proceedings of NAACL-HLT}, 2018, pp. 1875--1885.

\bibitem{ribeiro2020beyond}
M.~T. Ribeiro, T.~Wu, C.~Guestrin, and S.~Singh, ``Beyond accuracy: Behavioral
  testing of nlp models with checklist,'' in \emph{Proceedings of the 58th
  Annual Meeting of the Association for Computational Linguistics}.\hskip 1em
  plus 0.5em minus 0.4em\relax Association for Computational Linguistics, 2020.

\bibitem{lecuyer2019certified}
M.~Lecuyer, V.~Atlidakis, R.~Geambasu, D.~Hsu, and S.~Jana, ``Certified
  robustness to adversarial examples with differential privacy,'' in
  \emph{SP}.\hskip 1em plus 0.5em minus 0.4em\relax IEEE, 2019, pp. 656--672.

\bibitem{neyman1933ix}
J.~Neyman and E.~S. Pearson, ``Ix. on the problem of the most efficient tests
  of statistical hypotheses,'' \emph{Philosophical Transactions of the Royal
  Society of London. Series A, Containing Papers of a Mathematical or Physical
  Character}, vol. 231, no. 694-706, pp. 289--337, 1933.

\bibitem{hong2022certified}
H.~Hong and Y.~Hong, ``Certified adversarial robustness via anisotropic
  randomized smoothing,'' \emph{arXiv:2207.05327}, 2022.

\bibitem{zhang2015character}
X.~Zhang, J.~Zhao, and Y.~LeCun, ``Character-level convolutional networks for
  text classification,'' \emph{Advances in neural information processing
  systems}, vol.~28, 2015.

\bibitem{mcauley2013hidden}
J.~McAuley and J.~Leskovec, ``Hidden factors and hidden topics: understanding
  rating dimensions with review text,'' in \emph{Proceedings of the 7th ACM
  conference on Recommender systems}, 2013, pp. 165--172.

\bibitem{maas2011learning}
A.~Maas, R.~E. Daly, P.~T. Pham, D.~Huang, A.~Y. Ng, and C.~Potts, ``Learning
  word vectors for sentiment analysis,'' in \emph{Proceedings of the 49th
  annual meeting of the association for computational linguistics: Human
  language technologies}, 2011, pp. 142--150.

\bibitem{LSTM}
A.~Graves and A.~Graves, ``Long short-term memory,'' \emph{Supervised sequence
  labelling with recurrent neural networks}, pp. 37--45, 2012.

\bibitem{cer2018universal}
D.~Cer, Y.~Yang, S.-y. Kong, N.~Hua, N.~Limtiaco, R.~St.~John, N.~Constant,
  M.~Guajardo-Cespedes, S.~Yuan, C.~Tar, B.~Strope, and R.~Kurzweil,
  ``Universal sentence encoder for {E}nglish,'' in \emph{Proceedings of the
  2018 Conference on Empirical Methods in Natural Language Processing: System
  Demonstrations}, E.~Blanco and W.~Lu, Eds.\hskip 1em plus 0.5em minus
  0.4em\relax Association for Computational Linguistics, 2018, pp. 169--174.

\bibitem{alzantot2018generating}
M.~Alzantot, Y.~Sharma, A.~Elgohary, B.-J. Ho, M.~Srivastava, and K.-W. Chang,
  ``Generating natural language adversarial examples,'' in \emph{Proceedings of
  the 2018 Conference on Empirical Methods in Natural Language
  Processing}.\hskip 1em plus 0.5em minus 0.4em\relax Association for
  Computational Linguistics, 2018.

\bibitem{zang2020word}
Y.~Zang, F.~Qi, C.~Yang, Z.~Liu, M.~Zhang, Q.~Liu, and M.~Sun, ``Word-level
  textual adversarial attacking as combinatorial optimization,'' in
  \emph{Proceedings of the 58th Annual Meeting of the Association for
  Computational Linguistics}, 2020, pp. 6066--6080.

\bibitem{tan2020s}
S.~Tan, S.~Joty, M.-Y. Kan, and R.~Socher, ``It’s morphin’time! combating
  linguistic discrimination with inflectional perturbations,'' in
  \emph{Proceedings of the 58th Annual Meeting of the Association for
  Computational Linguistics}, 2020, pp. 2920--2935.

\bibitem{nie2019analyzing}
Y.~Nie, Y.~Wang, and M.~Bansal, ``Analyzing compositionality-sensitivity of nli
  models,'' in \emph{Proceedings of the AAAI conference on artificial
  intelligence}, vol.~33, no.~01, 2019, pp. 6867--6874.

\bibitem{yan2021consert}
Y.~Yan, R.~Li, S.~Wang, F.~Zhang, W.~Wu, and W.~Xu, ``Consert: A contrastive
  framework for self-supervised sentence representation transfer,'' in
  \emph{Proceedings of the 59th Annual Meeting of the Association for
  Computational Linguistics and the 11th International Joint Conference on
  Natural Language Processing (Volume 1: Long Papers)}, 2021, pp. 5065--5075.

\bibitem{lee2020slm}
H.~Lee, D.~A. Hudson, K.~Lee, and C.~D. Manning, ``Slm: Learning a discourse
  language representation with sentence unshuffling,'' in \emph{Proceedings of
  the 2020 Conference on Empirical Methods in Natural Language Processing
  (EMNLP)}, 2020, pp. 1551--1562.

\bibitem{behjati2019universal}
M.~Behjati, S.-M. Moosavi-Dezfooli, M.~S. Baghshah, and P.~Frossard,
  ``Universal adversarial attacks on text classifiers,'' in \emph{ICASSP
  2019-2019 IEEE International Conference on Acoustics, Speech and Signal
  Processing (ICASSP)}.\hskip 1em plus 0.5em minus 0.4em\relax IEEE, 2019, pp.
  7345--7349.

\bibitem{xie2022word}
Y.~Xie, D.~Wang, P.-Y. Chen, J.~Xiong, S.~Liu, and O.~Koyejo, ``A word is worth
  a thousand dollars: Adversarial attack on tweets fools stock prediction,'' in
  \emph{Proceedings of the 2022 Conference of the North American Chapter of the
  Association for Computational Linguistics: Human Language Technologies},
  2022, pp. 587--599.

\bibitem{huang2019achieving}
P.-S. Huang, R.~Stanforth, J.~Welbl, C.~Dyer, D.~Yogatama, S.~Gowal,
  K.~Dvijotham, and P.~Kohli, ``Achieving verified robustness to symbol
  substitutions via interval bound propagation,'' in \emph{Proceedings of the
  2019 Conference on Empirical Methods in Natural Language Processing and the
  9th International Joint Conference on Natural Language Processing
  (EMNLP-IJCNLP)}, 2019, pp. 4083--4093.

\bibitem{perez20223deformrs}
J.~C. P{\'e}rez, M.~Alfarra, S.~Giancola, B.~Ghanem \emph{et~al.}, ``3deformrs:
  Certifying spatial deformations on point clouds,'' in \emph{Proceedings of
  the IEEE/CVF Conference on Computer Vision and Pattern Recognition}, 2022,
  pp. 15\,169--15\,179.

\bibitem{bojchevski2020efficient}
A.~Bojchevski, J.~Gasteiger, and S.~G{\"u}nnemann, ``Efficient robustness
  certificates for discrete data: Sparsity-aware randomized smoothing for
  graphs, images and more,'' in \emph{International Conference on Machine
  Learning}.\hskip 1em plus 0.5em minus 0.4em\relax PMLR, 2020, pp. 1003--1013.

\bibitem{wang2021certified}
B.~Wang, J.~Jia, X.~Cao, and N.~Z. Gong, ``Certified robustness of graph neural
  networks against adversarial structural perturbation,'' in \emph{Proceedings
  of the 27th ACM SIGKDD Conference on Knowledge Discovery \& Data Mining},
  2021, pp. 1645--1653.

\bibitem{huang2023rs}
Z.~Huang, N.~G. Marchant, K.~Lucas, L.~Bauer, O.~Ohrimenko, and B.~Rubinstein,
  ``Rs-del: Edit distance robustness certificates for sequence classifiers via
  randomized deletion,'' \emph{Advances in Neural Information Processing
  Systems}, vol.~36, 2023.

\bibitem{shalev2012online}
S.~Shalev-Shwartz \emph{et~al.}, ``Online learning and online convex
  optimization,'' \emph{Foundations and Trends{\textregistered} in Machine
  Learning}, vol.~4, no.~2, pp. 107--194, 2012.

\bibitem{UpperBou50:online}
D.~E. Knuth, ``Upper bound for binomial coefficient - proofwiki,''
  \url{https://proofwiki.org/wiki/Upper_Bound_for_Binomial_Coefficient}.

\end{thebibliography}

\appendices



\section{Proofs}

\vspace{-3mm}
\subsection{Proof for Theorem~\ref{thm:wss_1}}  \label{proof:wss_1}
\vspace{-3mm}

We first introduce the special case of Neyman-Pearson Lemma under isotropic Staircase Mechanisms.

\begin{lemma}[Neyman-Pearson for Staircase Mechanism under Different Means]
\label{lem:np_staircase}
Let $W\sim \CS_\gamma^\epsilon(\tau, \Delta)$ and $V\sim \CS_\gamma^\epsilon(\tau+\delta, \Delta)$ be two random synonym indexes. Let $h: \BR^{n} \to \{0,1\}$ be any deterministic or random function. Then:
\begin{enumerate}[label=\arabic*)]
    \item If $Q=\{z\in\BR^{n}: \|z\|_1-\|z-\delta\|_1 \leq \beta\}$ for some $\beta$ and $\BP(h(W)=1) \geq \BP(W\in Q)$ then $\BP(h(V)=1) \geq \BP(V\in Q)$
    \item If $Q=\{z\in\BR^{n}: \|z\|_1-\|z-\delta\|_1 \geq \beta\}$ for some $\beta$ and $\BP(h(W)=1) \leq \BP(W\in Q)$ then $\BP(h(V)=1) \leq \BP(V\in Q)$
\end{enumerate}
\end{lemma}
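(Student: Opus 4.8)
The plan is to follow the template of Cohen et al.~\cite{cohen2019certified}: reduce both parts to the \emph{abstract} Neyman--Pearson lemma for an arbitrary pair of measures, and then supply the staircase-specific ingredient, namely that the region $Q$ is a likelihood-ratio region for the pair $(W,V)$. Without loss of generality I would take $\tau=0$, since the staircase density depends on its argument only through $\|\cdot-\mu\|_1$, so a global translation by $\tau$ maps $W$, $V$, and $Q$ simultaneously. Forming the ratio of the two densities from Definition~\ref{def:staircase}, the normalizer $a(\gamma)$ cancels and the log-likelihood ratio collapses to the integer-valued step function
\[
\log\frac{f_\gamma^\epsilon(z\mid\delta,\Delta)}{f_\gamma^\epsilon(z\mid 0,\Delta)}=\epsilon\bigl(l_\Delta(z\mid 0)-l_\Delta(z\mid\delta)\bigr),
\]
which is large exactly when $\|z\|_1$ is large while $\|z-\delta\|_1$ is small, i.e.\ when $z$ sits $\ell_1$-closer to $V$'s center than to $W$'s.

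The heart of the argument is to show that $l_\Delta(z\mid 0)-l_\Delta(z\mid\delta)$ is a \emph{nondecreasing} function of the displacement score $s(z):=\|z\|_1-\|z-\delta\|_1$, so that for every threshold $\beta$ there is a likelihood-ratio level $t=t(\beta)$ with
\[
\{\,z:\ s(z)\le\beta\,\}=\bigl\{\,z:\ f_\gamma^\epsilon(z\mid\delta,\Delta)/f_\gamma^\epsilon(z\mid 0,\Delta)\le t\,\bigr\}
\]
up to a boundary of measure zero, together with the mirror statement for $s(z)\ge\beta$. Granting this correspondence, part~1) is precisely the abstract Neyman--Pearson lemma applied to the \emph{lower} likelihood-ratio set with $X=W$ and $Y=V$: writing $\BP(h(V)=1)-\BP(V\in Q)=\int\bigl(h(z)-\mathbf{1}[z\in Q]\bigr)f_\gamma^\epsilon(z\mid\delta,\Delta)\,dz$ and using the pointwise bound $f_V\ge tf_W$ on $Q^c$ and $f_V\le tf_W$ on $Q$, this integral is lower-bounded by $t\bigl(\BP(h(W)=1)-\BP(W\in Q)\bigr)\ge 0$, which is the claimed inequality. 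Part~2) is the symmetric image, using the \emph{upper} level set and the reversed hypotheses.

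The step I expect to be the main obstacle is exactly this monotone correspondence between $s(z)$ and the step difference $l_\Delta(z\mid 0)-l_\Delta(z\mid\delta)$. Because $l_\Delta$ applies a single floor to the \emph{aggregate} $\ell_1$ norm rather than coordinatewise, two points with the same score $s(z)$ can in principle straddle different steps, so the equality of the two regions is not automatic and must exploit the discrete geometry of the mechanism---in particular that the relevant perturbations $\delta=\delta_S\cdot\Delta$ are integer multiples of the step width $\Delta$, which aligns the jump surfaces of $l_\Delta(\cdot\mid 0)$ and $l_\Delta(\cdot\mid\delta)$. I would handle this by tracking the score along the segment interpolating $0$ and $\delta$ and peeling off one step at a time, reducing the $n$-dimensional comparison to a one-dimensional count of crossed steps; this one-dimensional reduction is also what should ultimately produce the $\tfrac{1}{2\epsilon}\log(\underline{p_A}/\overline{p_B})$ and $-\tfrac{1}{\epsilon}\log(1-\underline{p_A}+\overline{p_B})$ forms of $\rad_S$ in Theorem~\ref{thm:wss_1}.
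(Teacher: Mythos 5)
Your overall architecture --- translating by $\tau$, computing the staircase likelihood ratio from Definition~\ref{def:staircase}, identifying $Q$ as a likelihood-ratio region, and closing with the standard Neyman--Pearson integral inequality --- is exactly the paper's route, and your NP integral step is correct as written. The gap is the step you yourself flag as the main obstacle: the monotone correspondence between the score $s(z)=\|z\|_1-\|z-\delta\|_1$ and the step difference $F(z)=l_\Delta(z\mid 0)-l_\Delta(z\mid\delta)$ is \emph{false} for continuous $z$, even when $\delta\in\Delta\BZ^n$, so no segment-tracking or ``peeling off one step at a time'' argument can establish it. Concretely, take $n=2$, $\Delta=1$, $\gamma=1$ (the paper's own experimental setting), $\delta=(1,0)$: at $z=(0.7,\,0.4)$ you get $s(z)=1.1-0.7=0.4$ and $F(z)=\lfloor 1.1\rfloor-\lfloor 0.7\rfloor=1$, while at $z=(0.9,\,0.05)$ you get $s(z)=0.95-0.15=0.8$ and $F(z)=\lfloor 0.95\rfloor-\lfloor 0.15\rfloor=0$. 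Thus the likelihood ratio $\exp(\epsilon F(z))$ is not monotone in $s(z)$, and any sublevel set $\{z: s(z)\le\beta\}$ with $\beta\in[0.4,0.8)$ contains a point of ratio $e^{\epsilon}$ while excluding one of ratio $1$ --- it is not a lower likelihood-ratio set, and for a continuously supported noise the offending set is not of measure zero. Your intuition that integrality of $\delta$ ``aligns the jump surfaces'' does not hold: the jump surfaces of $l_\Delta(\cdot\mid 0)$ and $l_\Delta(\cdot\mid\delta)$ are $\ell_1$-spheres about two \emph{different} centers, and shifting one center by a lattice vector does not make the two families coincide.

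What rescues the lemma in the paper is an additional discreteness assumption on the \emph{noise}, not only on the perturbation: the random synonym indexes satisfy $z=k_2\cdot\Delta$ with $k_2\in\BZ$ (natural here, since the noise is an index into the synonym table), so that $\|z\|_1/\Delta$ and $\|z-\delta\|_1/\Delta$ are nonnegative integers, the floors in $l_\Delta$ evaluate exactly, and the log-likelihood ratio collapses to the linear function $\frac{\epsilon}{\Delta}s(z)$ on the common support of $W$ and $V$. No monotonicity lemma is then needed: the correspondence is the explicit bijection $t=\exp(\epsilon\beta/\Delta)$ and $\beta=\frac{\Delta}{\epsilon}\log t$, after which your integral argument finishes both parts verbatim (my counterexample points lie off-lattice and carry zero probability). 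To repair your proof, drop the planned interpolation argument, state the lattice support of $W$ and $V$ explicitly, intersect $Q$ with that support, and invoke the exact linearity of the log-ratio there; this is also the correct source of the radius forms in Theorem~\ref{thm:wss_1}, which follow from the resulting bounds $\BP(V\in A)\ge e^{-\epsilon\|\delta\|_1/\Delta}\,\underline{p_A}$, $\BP(V\in A)\ge 1-e^{\epsilon\|\delta\|_1/\Delta}(1-\underline{p_A})$, and $\BP(V\in B)\le e^{\epsilon\|\delta\|_1/\Delta}\,\overline{p_B}$ rather than directly from a one-dimensional reduction.
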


\begin{proof}
In order to prove
\begin{align}
\small
\begin{aligned}
    &\{z: \|z\|_1-\|z-\delta\|_1 \leq \beta\} \Longleftrightarrow \{z: \frac{\mu_V}{\mu_W} \leq t\} \textrm{\ and\ } \\
    &\{z: \|z\|_1-\|z-\delta\|_1 \geq \beta\} \Longleftrightarrow \{z: \frac{\mu_V}{\mu_W} \geq t\},
\end{aligned}
\end{align}
we need to show that for any $\beta$, there is some $t>0$, and for any $t>0$, there is also some $\beta$. 

When $W$ and $V$ are under the isotropic Staircase probability distributions, the likelihood ratio turns out to be
\begin{small}
\begin{align}
    \label{eq:stair_ratio_1} \frac{\mu_V}{\mu_W}=&\ \frac{\exp({-l_\Delta(z_0\ |\  \tau+\delta)\epsilon})a(\gamma)}{\exp({-l_\Delta(z_0\ |\  \tau)\epsilon})a(\gamma)} \\
    \nonumber =&\ \exp([\lfloor\frac{\|z_0-\tau\|_1}{\Delta}+(1-\gamma)\rfloor \\
    \label{eq:stair_ratio_2} &\ \qquad\qquad  -\lfloor\frac{\|z_0-(\tau+\delta)\|_1}{\Delta}+(1-\gamma)\rfloor]\epsilon) \\
    \label{eq:stair_ratio_3} =&\ \exp(\frac{\epsilon}{\Delta}[\|z\|_1-\|z-\delta\|_1]),
\end{align}
\end{small}
where $z=z_0-\tau$. We assume that the perturbation ($\delta$) and the noise ($z$) are discrete, which means $\delta = k_1\cdot\Delta, k_1\in\BZ$ and $z=k_2\cdot\Delta, k_2\in \BZ$. Then we can derive Eq.(\ref{eq:stair_ratio_3}). 
Therefore, given any $\beta$, we can choose $t=\exp(\frac{\epsilon \beta}{\Delta})$, and derive that $\frac{\mu_V}{\mu_W}\leq t$. Similarly, given any $t>0$, we can choose $\beta=\frac{\Delta}{\epsilon} \log t$, and derive $\|z\|_1-\|z-\delta\|_1\leq \beta$. Note that we clip case where $\mu_W= 0$.
\end{proof}


Next, we prove the Theorem \ref{thm:wss_1}.

\begin{proof}
Denote $\delta_{{S}}=\{a_1, \cdots, a_n\}$. 
Let $\tau$ be the synonym indexes of the input $w$
and $W\sim \CS_\gamma^\epsilon(\tau, \Delta)$ and $V\sim \CS_\gamma^\epsilon(\tau+\delta_{{S}}, \Delta)$ be random synonym indexes, as defined by Lemma~\ref{lem:np_staircase}.
The assumption is
\begin{small}
\begin{equation}
    \BP(h(W) = y_A) \geq \underline{p_A} \geq \overline{p_B} \geq \BP(h(W) = y_B)
\nonumber
\end{equation}
\end{small}

By the definition of $g$, we need to show that
\begin{small}
\begin{equation}
    \BP(h(V) = y_A) \geq \BP(h(V) = y_B)
\nonumber
\end{equation} 
\end{small}

Denote $T(z) =\|z\|_1-\|z-\delta\|_1$ and use Triangle Inequality we can derive a bound for $T(x)$: 
\begin{small}
\begin{equation}
    -\|\delta\|_1 \leq T(z) \leq \|\delta\|_1
\end{equation}
\end{small}

We pick $\beta_1, \beta_2$ such that there exist the following $A, B$
\begin{equation}
\small
\begin{aligned}
    A:&=\{z: T(z)= \|z\|_1-\|z-\delta\|_1 \leq \beta_1\} \\
    B:&=\{z: T(z)= \|z\|_1-\|z-\delta\|_1 \geq \beta_2\}
\nonumber
\end{aligned}
\end{equation}
that satisfy conditions $\BP(W \in A) = \underline{p_A}$ and $\BP(W \in B) = \overline{p_B}$.
According to the assumption, we have
\begin{equation}
\small
\begin{aligned}
\BP(W \in A) = \underline{p_A} \leq p_A = \BP(h(W) = y_A) \\
\BP(W \in B) = \overline{p_B} \geq p_B = \BP(h(W) = y_B)
\nonumber
\end{aligned}
\end{equation}

Thus, by applying Lemma~\ref{lem:np_staircase}, we have
\begin{equation}
\small
\begin{aligned}
\BP(h(V) \!=\! y_A) \geq \BP(V \!\in\! A) \textrm{\ and\ }
\BP(h(V) \!=\! y_B) \leq \BP(V \!\in\! B).
\nonumber
\end{aligned}
\end{equation}

Based on our {\bf Claims} shown later, we have 
\begin{small}
\begin{align}
\label{eq:YinA1}    \BP(V \in A) &\geq \exp(-\frac{\|\delta\|_1}{\Delta}\epsilon) \underline{p_A} \quad \textrm{and} \\
\label{eq:YinA2}    \BP(V \in A) &\geq 1-\exp(\frac{\|\delta\|_1}{\Delta}\epsilon)(1-\underline{p_A}) \\
\label{eq:YinB} \BP(V \in B) &\leq \exp(\frac{\|\delta\|_1}{\Delta}\epsilon)\overline{p_B}
\end{align}
\end{small}

In order to obtain $\BP(V \in A) > \BP(V \in B)$, from Eq.(\ref{eq:YinA1}) and Eq.(\ref{eq:YinB}), we need 
$\rad_S=\frac{\|\delta\|_1}{\Delta} \leq \frac{1}{2\epsilon} \log(\underline{p_A}/\overline{p_B}).$
Similarly, from Eq.(\ref{eq:YinA2}) and Eq.(\ref{eq:YinB}), we need 
$\rad_S=\frac{\|\delta\|_1}{\Delta} \leq -\frac{1}{\epsilon}\log(1-\underline{p_A}+\overline{p_B}).$
\end{proof}

\begin{claim}
\label{claim1}
\begin{small}
$\BP(V \in A) \geq \exp(-\frac{\|\delta\|_1}{\Delta}\epsilon) \underline{p_A}$
\end{small}
\begin{proof}
Recall that $\int_A \exp(-\frac{\|z\|_1}{\Delta}\epsilon)a(\gamma) \rd z=\underline{p_A}$.
\begin{small}
\begin{equation}
\begin{aligned}
    \BP(V \in A)
    =& 
    \int_A [\exp({-\frac{\|z\|_1}{\Delta}\epsilon}) \exp({\frac{T(z)}{\Delta}\epsilon})]a(\gamma) \rd z \\
    \geq& \exp(-\frac{\|\delta\|_1}{\Delta}\epsilon) \int_A \exp(-\frac{\|z\|_1}{\Delta}\epsilon)a(\gamma) \rd z \\
    =& \exp(-\frac{\|\delta\|_1}{\Delta} \epsilon) \underline{p_A}
\nonumber
\end{aligned}
\end{equation}
\end{small}
\vspace{-3mm}
\end{proof}
\end{claim}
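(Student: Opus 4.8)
The plan is to compute $\BP(V\in A)$ as the integral of the Staircase density of $V$ over $A$ and then re-express that density through the likelihood ratio already obtained in Lemma~\ref{lem:np_staircase}. After the shift $z=z_0-\tau$, the (shifted) density of $W$ is $\exp(-\frac{\|z\|_1}{\Delta}\epsilon)\,a(\gamma)$, while Eq.(\ref{eq:stair_ratio_3}) gives $\mu_V/\mu_W=\exp(\frac{\epsilon}{\Delta}T(z))$ with $T(z)=\|z\|_1-\|z-\delta\|_1$. Multiplying these, I would write
\begin{small}
\begin{equation}
\BP(V\in A)=\int_A \exp\!\Big(-\frac{\|z\|_1}{\Delta}\epsilon\Big)\exp\!\Big(\frac{T(z)}{\Delta}\epsilon\Big)\,a(\gamma)\,\rd z ,
\nonumber
\end{equation}
\end{small}
so that the entire claim collapses to lower-bounding the single extra factor $\exp(\frac{T(z)}{\Delta}\epsilon)$ uniformly over the region $A$.

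Next I would invoke the triangle-inequality bound $T(z)\ge -\|\delta\|_1$ that is established immediately before the claim; it holds for every $z$, hence in particular throughout $A$. Because $\epsilon,\Delta>0$, this yields the pointwise estimate $\exp(\frac{T(z)}{\Delta}\epsilon)\ge \exp(-\frac{\|\delta\|_1}{\Delta}\epsilon)$, a constant that no longer depends on $z$. Pulling this constant outside the integral leaves exactly $\int_A \exp(-\frac{\|z\|_1}{\Delta}\epsilon)\,a(\gamma)\,\rd z=\BP(W\in A)$, which equals $\underline{p_A}$ by the choice of $A$. Chaining the two steps gives $\BP(V\in A)\ge \exp(-\frac{\|\delta\|_1}{\Delta}\epsilon)\,\underline{p_A}$, which is the claim. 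The two companion estimates would follow from the identical template: for Eq.(\ref{eq:YinB}) I would instead use $T(z)\le\|\delta\|_1$ on $B$ to bound the factor from above, and for Eq.(\ref{eq:YinA2}) I would apply the same lower bound to the complement of $A$ and pass to $\BP(V\in A)=1-\BP(V\in A^{c})$.

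The routine part here is the triangle inequality; the step I expect to require the most care is the opening identity, namely the justification that $\BP(V\in A)$ admits the clean representation with exponent $\frac{\epsilon}{\Delta}T(z)$. This depends on the floor-based segmentation $l_\Delta$ simplifying to $\|\cdot\|_1/\Delta$, which is only exact under the discretization assumption $\delta=k_1\Delta$, $z=k_2\Delta$ with $k_1,k_2\in\BZ$ used to derive Eq.(\ref{eq:stair_ratio_3}); I would carry that assumption explicitly and dispose of the degenerate case $\mu_W=0$ by clipping, exactly as the lemma's proof remarks. Once this likelihood-ratio representation is granted, the remainder is a one-line monotonicity argument.
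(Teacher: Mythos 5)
Your proposal is correct and follows essentially the same route as the paper's proof: both express the density of $V$ over $A$ as the density of $W$ multiplied by the likelihood ratio $\exp(\frac{\epsilon}{\Delta}T(z))$ from Lemma~\ref{lem:np_staircase}, bound $T(z)\geq -\|\delta\|_1$ pointwise via the triangle inequality, and pull the resulting constant out of the integral to recover $\underline{p_A}$. Your additional care about the discretization assumption $\delta=k_1\Delta$, $z=k_2\Delta$ underlying Eq.~(\ref{eq:stair_ratio_3}) and the clipping of the $\mu_W=0$ case is exactly the caveat the paper records in the lemma's proof.
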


\begin{claim}
\label{claim2}
\begin{small}
    $\BP(V \in A) \geq 1-\exp(\frac{\|\delta\|_1}{\Delta}\epsilon)(1-\underline{p_A})$
\end{small}
\begin{proof}
\begin{small}
\begin{equation}
\begin{aligned}
    \BP(V \in A)
    =& 1-\int_{\CW\backslash A} [\exp({-\frac{\|z\|_1}{\Delta}\epsilon}) \exp({\frac{T(z)}{\Delta}\epsilon})]a(\gamma) \rd z \\
    \geq& 1-\exp(\frac{\|\delta\|_1}{\Delta}\epsilon) \int_{\CW\backslash A} \exp(-\frac{\|z\|_1}{\Delta}\epsilon)a(\gamma) \rd z \\
    =& 1-\exp(\frac{\|\delta\|_1}{\Delta}\epsilon)(1-\underline{p_A})
\nonumber
\end{aligned}
\end{equation}
\end{small}
\vspace{-3mm}
\end{proof}
\end{claim}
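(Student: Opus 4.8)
The plan is to bound the probability mass that $V$ places \emph{outside} $A$ rather than inside it, since this is exactly what produces the $1-\exp(\cdot)(1-\underline{p_A})$ form of the claim. I would begin from the complementary identity $\BP(V \in A) = 1 - \int_{\CW \setminus A} \mu_V(z)\,\rd z$, where $\mu_V$ denotes the recentered Staircase density of $V$ (after the change of variable $z = z_0 - \tau$). The entire argument then reduces to controlling $\mu_V$ on the complement $\CW \setminus A$ through the likelihood ratio already computed in the lemma.

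Concretely, I would proceed in three steps. First, invoke the factorization from Lemma~\ref{lem:np_staircase}: under the discreteness assumption on $\delta$ and $z$, the density of $V$ splits as $\mu_V(z) = \exp(-\tfrac{\|z\|_1}{\Delta}\epsilon)\exp(\tfrac{T(z)}{\Delta}\epsilon)\,a(\gamma)$, where $T(z) = \|z\|_1 - \|z-\delta\|_1$ and the first factor is precisely $\mu_W(z)$; this is just Eq.(\ref{eq:stair_ratio_3}) rewritten as a product. Second, apply the upper half of the triangle-inequality bound $-\|\delta\|_1 \le T(z) \le \|\delta\|_1$ (established immediately before the claims) to replace $\exp(\tfrac{T(z)}{\Delta}\epsilon)$ by its maximal value $\exp(\tfrac{\|\delta\|_1}{\Delta}\epsilon)$, which is a constant and factors out of the integral. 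Third, recognize the leftover integral $\int_{\CW \setminus A} \exp(-\tfrac{\|z\|_1}{\Delta}\epsilon)\,a(\gamma)\,\rd z$ as $\BP(W \in \CW \setminus A) = 1 - \BP(W \in A) = 1 - \underline{p_A}$, using that the threshold $\beta_1$ defining $A$ was chosen so that $\BP(W \in A) = \underline{p_A}$. Combining these gives $\int_{\CW \setminus A}\mu_V \le \exp(\tfrac{\|\delta\|_1}{\Delta}\epsilon)(1-\underline{p_A})$, and subtracting from $1$ delivers the claim.

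The point that needs care — and the likely source of a sign error — is the \emph{direction} of the bound on $T(z)$. Because we integrate over the complement and then subtract, bounding $T(z)$ from above by its worst case is what correctly yields a \emph{lower} bound on $\BP(V \in A)$; using instead $T(z) \ge -\|\delta\|_1$ at this stage would reproduce the companion claim (the $\exp(-\tfrac{\|\delta\|_1}{\Delta}\epsilon)\underline{p_A}$ bound), not this one. The only genuinely Staircase-specific ingredient is the clean factorization $\mu_V/\mu_W = \exp(\tfrac{\epsilon}{\Delta}T(z))$, which relies on the discreteness of $\delta$ and $z$ so that the floor functions in $l_\Delta$ cancel exactly; I would cite Eq.(\ref{eq:stair_ratio_1})--(\ref{eq:stair_ratio_3}) for this rather than re-derive it, after which the remainder is a routine monotonicity-and-factor-out estimate.
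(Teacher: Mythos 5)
Your proposal is correct and follows exactly the paper's own proof: the complementary identity $\BP(V\in A)=1-\int_{\CW\setminus A}\mu_V(z)\,\rd z$, the factorization $\mu_V(z)=\mu_W(z)\exp(\tfrac{T(z)}{\Delta}\epsilon)$ from Eq.~(\ref{eq:stair_ratio_3}), the upper bound $T(z)\le\|\delta\|_1$ factored out of the integral, and the identification of the remaining mass with $1-\underline{p_A}$. Your attention to the direction of the bound on $T(z)$ is well placed (though, strictly, using $T(z)\ge-\|\delta\|_1$ on the complement would give an upper bound on $\BP(V\in A)$ rather than literally reproducing Claim~\ref{claim1}, which integrates over $A$ itself); this side remark does not affect the argument.
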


\begin{claim}
\label{claim3}
\begin{small}
    $\BP(V \in B) \leq \exp(\frac{\|\delta\|_1}{\Delta}\epsilon )\overline{p_B}$
\end{small}
\begin{proof}
Recall that $\int_B \exp(-\frac{\|z\|_1}{\Delta}\epsilon)a(\gamma) \rd z = \overline{p_B}$.
\begin{small}
\begin{equation}
\begin{aligned}
    \BP(V \in B)
    =& \int_B [\exp({-\frac{\|z\|_1}{\Delta}\epsilon}) \exp({\frac{T(z)}{\Delta}\epsilon})]a(\gamma) \rd z \\
    \leq \exp(\frac{\|\delta\|_1}{\Delta}\epsilon) &\int_B \exp(-\frac{\|z\|_1}{\Delta}\epsilon)a(\gamma) \rd z 
    =\exp(\frac{\|\delta\|_1}{\Delta} \epsilon) \overline{p_B}
\nonumber
\end{aligned}
\end{equation}
\end{small}
\vspace{-3mm}
\end{proof}
\end{claim}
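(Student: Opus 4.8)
The plan is to evaluate $\BP(V\in B)$ directly as an integral of the density of $V$ over the region $B$, and to control that integrand by comparing the density of $V$ to that of $W$ through the likelihood ratio already established in Lemma~\ref{lem:np_staircase}. Centering the noise by writing $z=z_0-\tau$, the two Staircase densities are related by $\mu_V(z_0)=\mu_W(z_0)\exp(\frac{\epsilon}{\Delta}T(z))$, where $T(z)=\|z\|_1-\|z-\delta\|_1$ and $\mu_W(z_0)=\exp(-\frac{\|z\|_1}{\Delta}\epsilon)a(\gamma)$. Thus the whole integrand for $V$ factors as the fixed $W$-density times a multiplicative correction $\exp(\frac{\epsilon}{\Delta}T(z))$, which is exactly the object I want to bound uniformly on $B$.

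Concretely, I would first substitute this factorization into $\BP(V\in B)=\int_B \mu_V\,\rd z$. Next I would apply the reverse triangle inequality $T(z)=\|z\|_1-\|z-\delta\|_1\le\|\delta\|_1$, which holds pointwise and in particular everywhere on $B$, to replace the correction term by its uniform upper bound $\exp(\frac{\|\delta\|_1}{\Delta}\epsilon)$. Pulling this constant outside the integral leaves $\int_B \exp(-\frac{\|z\|_1}{\Delta}\epsilon)a(\gamma)\,\rd z$, which equals $\overline{p_B}$ by the very choice of the threshold $\beta_2$ defining $B$. Combining the two steps yields $\BP(V\in B)\le\exp(\frac{\|\delta\|_1}{\Delta}\epsilon)\,\overline{p_B}$, as claimed. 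The companion estimates for $\BP(V\in A)$ in Claims~\ref{claim1} and~\ref{claim2} would follow from the same factorization, using instead the lower bound $T(z)\ge-\|\delta\|_1$ and, for the complementary form, by integrating over $\CW\backslash A$.

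I expect the only delicate point to lie not in this final estimate but in the cleanliness of the likelihood-ratio identity it relies on. The factor $\exp(\frac{\epsilon}{\Delta}T(z))$ arises from cancelling the two floor functions inside $l_\Delta(\cdot)$, and that cancellation is exact only when the shift $\delta$ and the noise $z$ are integer multiples of the step width $\Delta$ --- the discretization assumption already built into Lemma~\ref{lem:np_staircase}. I would therefore be careful to carry this assumption through so that no leftover rounding term survives in the correction factor. Once that identity is secured, the rest is a single change of measure followed by the triangle inequality, so I anticipate no substantive obstacle beyond bookkeeping.
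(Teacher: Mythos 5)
Your proposal is correct and follows essentially the same route as the paper: factor $\mu_V=\mu_W\exp(\frac{\epsilon}{\Delta}T(z))$ via the likelihood ratio from Lemma~\ref{lem:np_staircase}, bound $T(z)\le\|\delta\|_1$ by the triangle inequality uniformly on $B$, pull out the constant, and use $\int_B \mu_W\,\rd z=\overline{p_B}$. Your caution about the discretization assumption ($\delta$ and $z$ being integer multiples of $\Delta$, so the floor terms cancel exactly) matches the assumption the paper already builds into the lemma's proof, so there is no gap.
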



\vspace{-3mm}
\subsection{Proof for Theorem~\ref{thm:wcr_1}} \label{proof:wcr_1}
\vspace{-3mm}

We first invoke the lemma 
that relates the Lipschitz constant $L$ and the norm of subgradients of $g$. 

\begin{lemma}[\cite{shalev2012online}] \label{lem:l_lip_norm}
Given a norm $\|\cdot\|$ and consider a differentiable function $g: \BR^n\to \BR$.  
If $\textrm{sup}_x\|\nabla g(x)\|_* \leq L, \forall x\in \BR^n$, where $\|\cdot\|_*$ is the dual norm of $\|\cdot\|$, then $g$ is $L$-Lipschitz over $\BR^n$ with respect to $\|\cdot\|$, that is $|g(w) - g(v)| \leq L\|w - v\|$.
\end{lemma}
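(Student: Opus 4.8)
The plan is to reduce the global Lipschitz estimate to a one-dimensional statement along the line segment joining the two evaluation points, and then convert the resulting directional derivative into the desired product of norms via the duality inequality between $\|\cdot\|$ and $\|\cdot\|_*$. Because $g$ is assumed differentiable on all of $\BR^n$, restricting it to a chord produces a genuinely differentiable function of one variable, to which the mean value theorem applies directly.

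Concretely, I would fix arbitrary $w, v \in \BR^n$, set $\gamma(s) = v + s(w-v)$ for $s \in [0,1]$, and define $\psi(s) = g(\gamma(s))$. The chain rule gives $\psi'(s) = \langle \nabla g(\gamma(s)), w-v\rangle$, and since $\psi$ is differentiable on $[0,1]$, the mean value theorem supplies some $\xi \in (0,1)$ with
\begin{equation}
g(w) - g(v) = \psi(1) - \psi(0) = \psi'(\xi) = \langle \nabla g(\gamma(\xi)), w-v\rangle.
\end{equation}
Taking absolute values and invoking the defining inequality of the dual norm, $|\langle a, b\rangle| \leq \|a\|_* \|b\|$, I would bound the right-hand side by $\|\nabla g(\gamma(\xi))\|_* \, \|w-v\|$. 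The hypothesis $\sup_x \|\nabla g(x)\|_* \leq L$ then caps the first factor by $L$, yielding $|g(w) - g(v)| \leq L\|w-v\|$; as $w, v$ were arbitrary, this is exactly the claimed $L$-Lipschitz bound with respect to $\|\cdot\|$.

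The argument is essentially routine, so I do not expect a genuine obstacle; the one step that deserves care is the pairing $|\langle a, b\rangle| \leq \|a\|_* \|b\|$, which is precisely the property that links the primal norm appearing in the conclusion to the dual norm appearing in the hypothesis, and must be cited as the characterization of $\|\cdot\|_*$ rather than treated as an elementary Cauchy--Schwarz bound. Using the mean value theorem rather than an integral representation of $g(w)-g(v)$ also sidesteps any integrability technicality for a merely differentiable (as opposed to continuously differentiable) $g$, so the proof remains valid under exactly the stated hypotheses.
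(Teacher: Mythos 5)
Your proof is correct, but note that the paper itself offers no proof of this lemma at all: it is quoted verbatim from the cited reference \cite{shalev2012online} and simply invoked before Proposition~\ref{porp:g_l_lip}, so there is no in-paper argument to match. Comparing with the source instead: Shalev-Shwartz proves the analogous statement (his Lemma~2.6) for \emph{convex} functions via the subgradient inequality, bounding $f(w)-f(u)\leq \langle z, w-u\rangle \leq \|z\|_*\|w-u\|\leq L\|w-u\|$ for $z\in\partial f(w)$ and symmetrizing, which needs no mean value theorem but does need convexity. Your route --- restrict $g$ to the chord $\gamma(s)=v+s(w-v)$, apply the one-dimensional mean value theorem to $\psi(s)=g(\gamma(s))$, and bound $|\langle \nabla g(\gamma(\xi)), w-v\rangle|\leq \|\nabla g(\gamma(\xi))\|_*\,\|w-v\|$ by the defining property of the dual norm --- is the standard argument for the lemma \emph{as stated here}, i.e., for a merely differentiable, not necessarily convex $g$, and so it actually covers the version the paper needs (the smoothed classifier $g$ in Proposition~\ref{porp:g_l_lip} is not claimed to be convex) where the cited convexity-based proof would not directly apply. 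Your two points of care are also well placed: the pairing inequality $|\langle a,b\rangle|\leq\|a\|_*\|b\|$ is exactly the generalized H\"older-type consequence of the dual-norm definition, not Euclidean Cauchy--Schwarz, and using the mean value theorem instead of writing $g(w)-g(v)=\int_0^1 \psi'(s)\,\mathrm{d}s$ correctly avoids assuming continuity (or even integrability) of $\nabla g$, which the hypotheses do not grant.
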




Following Lemma~\ref{lem:l_lip_norm}, we show that the smoothed classifier $g(u\cdot w)$ is $L$-Lipschitz in $u$ as it satisfies $\textrm{sup}_u\|\nabla g(u\cdot w)\|_\infty \leq L$, where each $u_i$ in $u$ is a variable that follows uniform distribution and $w$ is a constant matrix. 

\begin{proposition} \label{porp:g_l_lip}
Given a uniform distribution noise $\rho\sim \FU[-\lambda, \lambda]$, the smoothed classifier $g(u,w)=\BP[h(\theta_R(u,\rho) \cdot w)]$ is $1/2\lambda$-Lipschitz in $u$ under $\|\cdot\|$ norm. 
\end{proposition}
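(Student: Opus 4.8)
The plan is to invoke Lemma~\ref{lem:l_lip_norm} with $\|\cdot\|=\|\cdot\|_1$, whose dual norm is $\|\cdot\|_\infty$; it then suffices to establish the uniform gradient bound $\sup_u \|\nabla_u g(u\cdot w)\|_\infty \le \tfrac{1}{2\lambda}$. First I would rewrite the smoothed prediction as a moving average against the uniform kernel. Fixing the target class and letting $\Psi$ denote the (soft) output of $h$ viewed as a function of the perturbed word positions, the product uniform law $\rho\sim\FU[-\lambda,\lambda]$ over the $n$ coordinates gives
\[
g(u,w)=\BE_\rho[\Psi(u+\rho)]=\frac{1}{(2\lambda)^n}\int_{\prod_i[u_i-\lambda,\,u_i+\lambda]}\Psi(s)\,\rd s ,
\]
so that $g$ is the convolution of $\Psi$ with a box kernel of width $2\lambda$ in each coordinate, i.e.\ perturbing the base permutation $u_i$ simply translates the $i$-th integration window.

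Second I would differentiate this convolution coordinatewise. Because shifting $u_i$ only moves the window $[u_i-\lambda,u_i+\lambda]$, the fundamental theorem of calculus yields
\[
\frac{\partial g}{\partial u_i}(u,w)=\frac{1}{2\lambda}\,\BE_{\rho_{-i}}\!\big[\Psi(s)\big|_{s_i=u_i+\lambda}-\Psi(s)\big|_{s_i=u_i-\lambda}\big],
\]
that is, the partial derivative equals $\tfrac{1}{2\lambda}$ times the difference of the two boundary slices of the window, averaged over the remaining coordinates. Since $h$, and hence $\Psi$, takes values in $[0,1]$, each such difference lies in $[-1,1]$, so $|\partial g/\partial u_i|\le \tfrac{1}{2\lambda}$ for every $i$. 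Taking the maximum over coordinates gives $\|\nabla_u g(u\cdot w)\|_\infty\le\tfrac{1}{2\lambda}$ uniformly in $u$, and Lemma~\ref{lem:l_lip_norm} then delivers the claimed $\tfrac{1}{2\lambda}$-Lipschitz property.

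The main obstacle is rigorously justifying the convolution picture, since $\theta_I(u,\rho)$ acts on the discrete row-index space of the permutation matrix rather than on a continuous coordinate. I would resolve this by treating each word's position as a continuous variable on which the uniform noise acts as a translation of its center, so that perturbing the clean permutation by $\delta_u$ and shifting the noise mean induce the same displacement of each $u_i$; this identifies ``moving $u$'' with ``recentering the uniform window'' and licenses differentiation under the integral sign. A secondary subtlety is that $h$ may be only measurable rather than differentiable, so I would interpret $\nabla g$ as the almost-everywhere gradient of the smoothed (hence Lipschitz) map $g$, or—more robustly—argue directly from the overlap of two translated windows that $|g(u+\delta_u e_i,w)-g(u,w)|\le \tfrac{1}{2\lambda}|\delta_u|$, which yields the same constant while avoiding any differentiation of $h$.
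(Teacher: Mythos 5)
Your proposal matches the paper's proof essentially step for step: both reduce the claim via Lemma~\ref{lem:l_lip_norm} to the gradient bound $\sup_u\|\nabla_u g(u\cdot w)\|_\infty\le \tfrac{1}{2\lambda}$, rewrite $g$ as a box-kernel average so that each $u_i$ appears in the integration limits, apply the Leibniz rule to express $\partial g/\partial u_i$ as $\tfrac{1}{2\lambda}$ times a difference of boundary slices averaged over the remaining coordinates, and bound that difference by $1$ using the boundedness of $h$. Your closing remarks on differentiating a merely measurable $h$ (or arguing directly from window overlap) are a sound extra precaution but do not change the argument, which is the same as the paper's.
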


\begin{proof}
It suffices to show that $\|\nabla_u g(u\cdot w)\|_\infty\leq 1/2\lambda$ to complete the proof. Here, $\theta_R(u,\rho)=u+\rho$ denotes applying uniform noise $\rho$ on $u$, i.e., randomly shuffling each row vector $u_i$ in $u$. Without loss of generality, we analyze $\partial g/\partial u_1$. Since $w$ is a fixed embedding matrix and does not affect the proof, we ignore $w$ in the following proof.
Let $u=[u_1, \hat{u}]$, where $\hat{u}=[u_2, \cdots, u_n]$, and $\rho=[\rho_1, \hat{\rho}]$, then: 
%
\begin{equation}
\small
\begin{aligned}
    \frac{\partial g}{\partial u_1}
    = & \frac{1}{(2\lambda)^n} \frac{\partial}{\partial u_1} \int_\Lambda \int_{-\lambda}^\lambda h( u_1+\rho_1, \hat{u}+\hat{\rho})  \textrm{d}\rho_1 \textrm{d}^{(n-1)}\hat{\rho} \\
    = & \frac{1}{(2\lambda)^n} \int_\Lambda\frac{\partial}{\partial u_1} \int_{u_1-\lambda}^{u_1+ \lambda} h(q, \hat{u}+\hat{\rho}) \textrm{d}q \textrm{d}^{(n-1)}\hat{\rho} \\
    = & \frac{1}{(2\lambda)^n} \int_\Lambda\! \Big( h(u_1+\lambda, \hat{u}+ \hat{\rho}) -h(u_1-\lambda, \hat{u}+\hat{\rho})\Big) \textrm{d}^{(n-1)}\hat{\rho}
\nonumber
\end{aligned}
\end{equation}
%
%
where $\Lambda={[-\lambda,\lambda]^{n-1}}$. The second step follows the change of variable $q= u_1+\rho_1 \in [u_1-\lambda, u_1+\lambda]$. 
The last step follows the Leibniz rule. Let $H=\int h(q) \textrm{d}q$, then we have  $\frac{\partial H}{\partial u_1}=\frac{\partial H}{\partial q} \cdot \frac{\partial q}{\partial u_1}=h(q)$. Thus,
%
%
\begin{equation}
\small
\begin{aligned}
    \big\vert \frac{\partial g}{\partial u_1} \big\vert 
    \leq & \frac{1}{(2\lambda)^n} \int_\Lambda \Big\vert h( u_1+\lambda, \hat{u}+\hat{\rho}) -h(u_1-\lambda, \hat{u}+\hat{\rho})\Big\vert \textrm{d}^{(n-1)}\hat{\rho} \\
    \leq & \frac{1}{(2\lambda)^n}\cdot (2\lambda)^{n-1}=\frac{1}{2\lambda}
\nonumber
\end{aligned}
\end{equation}
%

Similarly, $\lvert \partial g/\partial u_i \rvert\leq 1/{2\lambda}$ for $\forall i\in \{2,\cdots, n\}$.
Hence $\| \nabla_u g(u \cdot w) \|_\infty = \max_i |\partial g/\partial u_i| \leq 1/{2\lambda}$. 
\end{proof}



Next, we show that the Lipschitz function is certifiable. 

\begin{theorem} \label{thm:l-lipschitz}
Given a classifier $h$, let the function $f^i:\BR^n\to \BR$, defined as $f^i(x)=\BP(h(x)=i)$, be L-Lipschitz continuous under the norm $\|\cdot\|, \forall i\in \mathcal{Y}=\{1, \cdots, C\}$. If $y_A=\argmax_i f^i(x)$, then, we have $\argmax_i f^i(x+\delta)=y_A$ for all $\delta$ satisfying:
\begin{small}
\begin{equation}
\|\delta\|\leq \frac{1}{2L}(f^{y_A}(x)-\max_i f^{i\neq y_A}(x)).
\label{eq:l_lip_func}
\end{equation}
\end{small}
\end{theorem}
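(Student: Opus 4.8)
The plan is to control how much each class-probability function $f^i$ can move under the perturbation $\delta$ by invoking the $L$-Lipschitz hypothesis, and then to argue that the margin between the top class $y_A$ and every competitor remains positive throughout the stated radius. This is the standard ``Lipschitz margin'' certification argument, specialized to the functions $f^i(x) = \BP(h(x)=i)$.

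First I would fix notation for the runner-up class at $x$: let $y_B = \argmax_{i\neq y_A} f^i(x)$, so that $\max_i f^{i\neq y_A}(x) = f^{y_B}(x)$ and the certified radius reads $R = \frac{1}{2L}\big(f^{y_A}(x) - f^{y_B}(x)\big)$. Next I would apply the Lipschitz estimate from Lemma~\ref{lem:l_lip_norm} to two separate quantities. Since $|f^i(x+\delta) - f^i(x)| \leq L\|\delta\|$ holds for every class $i$, I obtain a lower bound on the top score, $f^{y_A}(x+\delta) \geq f^{y_A}(x) - L\|\delta\|$, and, for any competing class $i\neq y_A$, an upper bound $f^i(x+\delta) \leq f^i(x) + L\|\delta\| \leq f^{y_B}(x) + L\|\delta\|$, where the final step uses that $y_B$ maximizes $f^i$ over $i\neq y_A$ at the point $x$.

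Finally I would chain the two bounds. The smoothed prediction remains $y_A$ precisely when $f^{y_A}(x+\delta) > f^i(x+\delta)$ for every $i\neq y_A$, and the two displayed estimates reduce this to the single scalar inequality $f^{y_A}(x) - L\|\delta\| \geq f^{y_B}(x) + L\|\delta\|$. Rearranging yields $2L\|\delta\| \leq f^{y_A}(x) - f^{y_B}(x)$, i.e. $\|\delta\| \leq \frac{1}{2L}\big(f^{y_A}(x) - \max_i f^{i\neq y_A}(x)\big) = R$, which is exactly the claimed condition (\ref{eq:l_lip_func}).

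I do not expect a genuine obstacle here. The only point requiring care is the competitor bound: one must pass to the class-wise maximum $f^{y_B}(x)$ \emph{before} applying the Lipschitz estimate, so that the single uniform bound $f^{y_B}(x) + L\|\delta\|$ dominates $f^i(x+\delta)$ for \emph{all} $i\neq y_A$ simultaneously rather than class by class. A secondary subtlety is the boundary case $\|\delta\| = R$, where a competitor score may tie the top score; this is absorbed by the $\argmax$ tie-breaking convention (equivalently, by reading the margin inequality as non-strict), consistent with the way the radius is stated in the theorem.
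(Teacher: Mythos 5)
Your proposal is correct and follows essentially the same route as the paper's proof: fix the runner-up $y_B=\argmax_{i\neq y_A}f^i(x)$, apply the Lipschitz bound to lower-bound $f^{y_A}(x+\delta)$ and upper-bound the competitors, and rearrange to recover the margin condition in Eq.~(\ref{eq:l_lip_func}). If anything, your write-up is slightly more careful than the paper's: you make explicit that the uniform bound $f^i(x+\delta)\leq f^{y_B}(x)+L\|\delta\|$ must hold for \emph{all} $i\neq y_A$ simultaneously (the paper only displays the bound for $y_B$), and your second display correctly reads $f^{y_B}(x+\delta)\leq f^{y_B}(x)+L\|\delta\|$, where the paper's corresponding line contains a sign typo ($-L\|\delta\|$ instead of $+L\|\delta\|$).
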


\begin{proof}
Take $y_B=\argmax_i h^{i\neq y_A}(x)$. Hence:
\begin{equation}
\small
\begin{aligned}
    & \big|f^{y_A}\!(x \!+\! \delta) \!-\! f^{y_A}\!(x)\big| \!\leq\! L \|\delta\| \Longrightarrow f^{y_A}\!(x \!+\! \delta) \!\geq\! f^{y_A}\!(x) \!-\! L \|\delta\| \\
    & \big|f^{y_B}\!(x \!+\! \delta) \!-\! f^{y_B}\!(x)\big| \!\leq\! L \|\delta\| \Longrightarrow f^{y_B}\!(x \!+\! \delta) \!\leq\! f^{y_B}\!(x) \!-\! L \|\delta\|
\nonumber
\end{aligned}
\end{equation}
By subtracting the inequalities and re-arranging terms, we have that as long as $f^{y_A}(x) -L\|\delta\| > f^{y_B}(x) -L\|\delta\|$, i.e., the bound in Eq.(\ref{eq:l_lip_func}), then $f^{y_A}(x+\delta) > f^{y_B}(x+\delta)$.
\end{proof}

We now prove our Theorem~\ref{thm:wcr_1} based on the above Proposition~\ref{porp:g_l_lip} and Theorem~\ref{thm:l-lipschitz}.



\begin{proof}
According to Proposition~\ref{porp:g_l_lip}, the uniform-based smoothed classifier $g_R$ is $1/2\lambda$-Lipschitz in $u$ under $\|\cdot\|$ norm. Combining Theorem~\ref{thm:l-lipschitz} and substituting $L=1/2\lambda$ in Eq.(\ref{eq:l_lip_func}), we have $\argmax_i \BP(g_R(\theta_R(u,\delta_R)\cdot w)=i)=y_A$ for all $\delta_R$ satisfying $\|\delta_R\|_1 \leq \& \lambda( \BP(g_R(\theta_R(u,\rho)\cdot w)={y_A}) \\
    \& \qquad -\max_{i,i\neq y_A} \BP(g_R(\theta_R(u,\rho)\cdot w)={i}))$, which holds in the case of $\|\delta_R\|_1 \leq \lambda(\underline{p_A}-\overline{p_B})$.
\end{proof}

\vspace{-3mm}
\subsection{Proof for Theorem~\ref{thm:combination}}  \label{proof:combination}
\vspace{-3mm}

\begin{proof}



Because $u$ is uniformly distributed over the whole permutation space, $\theta(u,\rho)$ also is constant at arbitrary $u$, i.e., $\theta(u,\rho)=\theta(\xi,\rho)$, where $\xi$ is an arbitrary permutation. Therefore, considering Eq. \ref{eq:w_delta}, we have:
%
%
\begin{small}
\begin{align*}
    \nonumber g(\theta(u,\rho)\cdot \phi(w, \varepsilon)) =&\ g(\theta(u,\rho)\cdot \phi(w+\delta_w, \varepsilon)) \\
    \label{eq:convert_4} =&\ g(\theta(\xi,\rho)\cdot \phi(w+\delta_w, \varepsilon)) \\
    =&\ g(\theta(u+\delta_u,\rho)\cdot \phi(w+\delta_w, \varepsilon)) 
\end{align*}
\end{small}
which is 
exactly the Eq.(\ref{eq:uw_delta}). 
\end{proof}

\vspace{-3mm}
\subsection{Proof for Theorem~\ref{thm:wd_1}} \label{proof:wd_1}
\vspace{-3mm}




We associate a binary variable to indicate the state (i.e., existed or deleted) of each word embedding. Initially, we have an all-ones state vector $x_0=\{1,1, \cdots, 1\}$ for all word emebeddings, indicating the existence of all words.     
The new embedding state of randomly deleting $\delta$ embeddings is $x_\delta=\{1, \cdots, 1, 0, \cdots, 0\}$, where the last $\delta$ states are set to be 0. 
By applying Bernoulli-based embedding deletion mechanism on $x_0$ and $x_\delta$, we obtain $x_{z_0}$ and $x_{z_\delta}$, respectively, as follows
\begin{small}
 \begin{equation}
\label{eq:x_z_delta}
    x_{z_0} \!=\! \underbrace{1,\cdots\!,1}_{n-z_0}, \underbrace{0,\cdots\!, 0}_{z_0\in[0, n]}; \
    x_{z_\delta} \!=\! \underbrace{1, \cdots\!, 1}_{n-\delta-z_\delta}, \! \underbrace{0, \cdots\!, 0}_{z_\delta\in[0, n-\delta]}, \! \underbrace{0,\cdots\!, 0}_{\delta}.
\end{equation}   
\end{small}

where the maximum of $z_0$ and $z_\delta$ is $n$ and $n-\delta$. \emph{For the sake of brevity, we place all ``0'' at the end of the text. Actually, ``0'' can occur anywhere in texts $x_\delta, x_{z_0}$, and $x_{z_\delta}$.} 


Next, we state the special case of Neyman-Pearson Lemma under isotropic Bernoulli Distributions.

\begin{lemma}[Neyman-Pearson for Bernoulli under Different Number of $1$]
\label{lem:np_bernoulli}
Let $W\sim \FB(n,p)$ and $V\sim \FB(n-\delta,p)$ be two random variables in $\{0,1\}^{n}$, denoting that at most $n$ and $n-\delta$ embedding states $b_i=1$ can be transformed into $b_i=0$ with probability $p$, respectively. Let $h: \{0,1\}^{n} \to \{0,1\}$ be any deterministic or random function. Then:
\begin{enumerate}[label=\arabic*)]
    \item If $Q=\{z\in\BZ: \tbinom{z}{\delta} \leq \beta\}$ for some $\beta$ and $\BP(h(W)=1) \geq \BP(W\in Q)$ then $\BP(h(V)=1) \geq \BP(V\in Q)$
    \item If $Q=\{z\in\BZ: \tbinom{z}{\delta} \geq \beta\}$ for some $\beta$ and $\BP(h(W)=1) \leq \BP(W\in Q)$ then $\BP(h(V)=1) \leq \BP(V\in Q)$
\end{enumerate}
\end{lemma}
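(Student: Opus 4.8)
The plan is to treat Lemma~\ref{lem:np_bernoulli} as the Bernoulli analogue of the Staircase lemma (Lemma~\ref{lem:np_staircase}): I would reduce both implications to the general discrete Neyman--Pearson lemma of~\cite{cohen2019certified} by showing that each region $Q$ is precisely a level set of the likelihood ratio $\mu_V/\mu_W$. First I would pin down the two laws in terms of the sufficient statistic $z$, the total number of deleted coordinates. Since $W\sim\FB(n,p)$ flips each of the $n$ coordinates independently to $0$ with probability $p$, the count $z$ has mass $\mu_W(z)=\binom{n}{z}p^{z}(1-p)^{n-z}$. Since $V\sim\FB(n-\delta,p)$ has $\delta$ coordinates already frozen at $0$ and flips the remaining $n-\delta$ independently with probability $p$, its total count $z$ (supported on $z\ge\delta$) has mass $\mu_V(z)=\binom{n-\delta}{z-\delta}p^{z-\delta}(1-p)^{n-z}$.

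Next I would compute the likelihood ratio as a function of $z$ and exhibit its monotonicity. Using the falling-factorial identity $\binom{n-\delta}{z-\delta}/\binom{n}{z}=\binom{z}{\delta}/\binom{n}{\delta}$, one obtains
\[
\frac{\mu_V(z)}{\mu_W(z)}=\frac{\binom{n-\delta}{z-\delta}}{\binom{n}{z}}\,p^{-\delta}=\frac{\binom{z}{\delta}}{\binom{n}{\delta}\,p^{\delta}},
\]
which is valid for all $z$ once we adopt the convention $\binom{z}{\delta}=0$ for $z<\delta$ (matching $\mu_V=0$ there). Because $\binom{n}{\delta}p^{\delta}$ is a fixed positive constant, the ratio is a strictly increasing function of $\binom{z}{\delta}$. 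Hence $\{z:\binom{z}{\delta}\le\beta\}$ coincides with the sublevel set $\{z:\mu_V(z)/\mu_W(z)\le t\}$ for $t=\beta/(\binom{n}{\delta}p^{\delta})$, and $\{z:\binom{z}{\delta}\ge\beta\}$ coincides with the corresponding superlevel set; conversely every threshold $t$ arises from some $\beta$. This is exactly the ``for any $\beta$ there is a $t$, and for any $t$ there is a $\beta$'' correspondence that drives the Staircase proof.

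With the regions identified as level sets, I would close by invoking the general Neyman--Pearson lemma with $X=W$ and $Y=V$: part~1 is the sublevel-set instance ($\binom{z}{\delta}\le\beta$, with the $\geq$ hypothesis and conclusion) and part~2 the superlevel-set instance ($\binom{z}{\delta}\ge\beta$, with the $\leq$ hypothesis and conclusion). The main obstacle I anticipate is \emph{not} the binomial algebra, which is routine, but justifying that the vector-valued classifier $h:\{0,1\}^n\to\{0,1\}$ may be compared through the scalar count $z$ at all: the likelihood ratio between \emph{individual} outcome vectors is degenerate (it equals either $0$ or $p^{-\delta}$), so the informative count-based ratio above is legitimate only after arguing that $z$ is a sufficient statistic for the comparison. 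I would obtain this from the exchangeability of the Bernoulli deletion across coordinates---formalizing the remark that the deleted entries ``can occur anywhere''---by symmetrizing $h$ over coordinate permutations, so that the worst-case acceptance probabilities depend on an outcome only through $z$. Handling the boundary regime $z<\delta$, where both $\mu_V(z)$ and $\binom{z}{\delta}$ vanish, is the only other point that would require care.
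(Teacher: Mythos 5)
Your proposal takes essentially the same route as the paper's proof: both reduce the two laws to the deletion count $z$, use the identity $\binom{n-\delta}{z-\delta}/\binom{n}{z}=\binom{z}{\delta}/\binom{n}{\delta}$ to write the likelihood ratio as a constant multiple of $\binom{z}{\delta}$, match the regions $Q$ to sub/superlevel sets of that ratio via the ``for any $\beta$ some $t$, for any $t$ some $\beta$'' correspondence, and then invoke the Neyman--Pearson lemma. If anything you are more careful than the paper on exactly the two points you flag: the paper's normalization factor $\Gamma(\delta,p)$ is vacuous (it equals $1$ under the convention $\binom{m}{k}=0$ for $k<0$, which is your boundary handling of $z<\delta$), and your sufficiency worry is genuine --- the per-vector likelihood ratio really is two-valued, and the count-level argument is legitimate only because the classifier in Theorem~\ref{thm:wd_1} is composed with the uniform permutation $\theta_D$, making the effective test permutation-invariant (the paper glosses this with its remark that the zeros ``can occur anywhere''); note, though, that symmetrizing $h$ by itself would not close this for a fixed frozen pattern, since the law of $V$ is not exchangeable, so the invariance must come from the smoothing pipeline rather than from averaging $h$.
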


\begin{proof} 
In order to prove:
\begin{small}
\begin{align}
\nonumber &\{z:\tbinom{z}{\delta} \leq \beta\} \Longleftrightarrow \{z:\frac{\mu_V}{\mu_W}\leq t\} \textrm{\ and\ } \\
\nonumber &\{z:\tbinom{z}{\delta} \geq \beta\} \Longleftrightarrow \{z:\frac{\mu_V}{\mu_W}\geq t\},
\end{align}    
\end{small}
we need to show that for any $\beta$, there is some $t$, and for any $t$ there is also some $\beta$. 
When $W$ and $V$ are under isotropic Bernoulli distributions, the likelihood ratio turns out to be
\begin{small}
\begin{equation}
\begin{aligned}
    \frac{\mu_V(x_z)}{\mu_W(x_z)} 
    =&\ \frac{\frac{\tbinom{n-\delta}{z-\delta}\cdot p^{z-\delta}\cdot (1-p)^{n-z}}{1-\sum_{z=0}^{\delta-1}\tbinom{n-\delta}{z-\delta}\cdot p^{z-\delta}\cdot (1-p)^{n-z}}}{\tbinom{n}{z}\cdot p^z\cdot (1-p)^{n-z}} \\
    =&\ \frac{\tbinom{n-\delta}{z-\delta}}{\tbinom{n}{z}}\cdot \frac{1}{p^\delta\cdot \Gamma(\delta, p)} = \frac{\tbinom{z}{\delta}}{\tbinom{n}{\delta} \cdot p^\delta\cdot \Gamma(\delta, p)}
\nonumber
\end{aligned}
\end{equation}
\end{small}%
where $\delta$ is the perturbation added on $x_0$, $z$ denotes the number of ``1'' transformed into ``0'', and $\Gamma(\delta, p)=1-\sum_{z=0}^{\delta-1}\tbinom{n-\delta}{z-\delta}\cdot p^{z-\delta}\cdot (1-p)^{n-z}\in(0,1)$ is a $z$-independent function. Here we divide the numerator by $\Gamma(\delta, p)$ because $z$ in $\mu_V(x_{z})$ requires $z \in [\delta, n]$, which means the number of ``0'' in $x_{z_\delta}$ in Eq.(\ref{eq:x_z_delta}) is $\geq \delta$.
Then for any $t$, we have $\beta=t/[\tbinom{n}{\delta} \cdot p^\delta \cdot \Gamma(\delta, p)]$. And for any $\beta$, we have $t=\tbinom{n}{\delta} \cdot p^\delta\cdot\Gamma(\delta, p) \cdot \beta$.
\end{proof}

Next, we prove Theorem \ref{thm:wd_1}.

\begin{proof} 
The assumption is 
\begin{small}
 $$
\BP(h(W) = y_A) \geq \underline{p_A} \geq \overline{p_B} \geq \BP(h(W) = y_B)
$$ 
\end{small}

By the definition of $g$, we need to show that
\begin{small}
\begin{equation}
    \BP(h(V) = y_A) \geq \BP(h(V) = y_B)
\nonumber
\end{equation} 
\end{small}

Denote $C(z)\!=\!\tbinom{z}{\delta}$ and according to \cite{UpperBou50:online}, we can derive: 
\begin{small}
\begin{equation}
    1 \leq (\frac{z}{\delta})^\delta \leq C(z)=\tbinom{z}{\delta} \leq (\frac{ez}{\delta})^\delta 
\end{equation}    
\end{small}

We pick $\beta_1, \beta_2$ such that there exist the following $A,B$
\begin{equation}
\small
\begin{aligned}
    A:&=\{z: C(z)=\tbinom{z}{\delta} \leq \beta_1\} \\
    B:&=\{z: C(z)=\tbinom{z}{\delta} \geq \beta_2\}
\nonumber
\end{aligned}
\end{equation}
that satisfy conditions $\BP(h(W) = y_A)=\underline{p_A}$ and $\BP(h(W) = y_B)=\overline{p_B}$.
According to the assumption, we have 
\begin{equation}
\small
\begin{aligned}
\BP(W \in A) = \underline{p_A} \leq p_A = \BP(h(W) = y_A) \\
\BP(W \in B) = \overline{p_B} \geq p_B = \BP(h(W) = y_B) 
\nonumber
\end{aligned}
\end{equation}

Thus, by applying Lemma~\ref{lem:np_staircase}, we have
\begin{equation}
\small
\begin{aligned}
\BP(h(V) \!=\! y_A) \geq \BP(V \!\in\! A) \textrm{\ and\ }
\BP(h(V) \!=\! y_B) \leq \BP(V \!\in\! B)
\nonumber
\end{aligned}
\end{equation}

Based on our {\bf Claims} shown later, we have
\begin{small}
\begin{align}
\label{eq:VinA} \BP(V \in A) &\geq \frac{1}{\tbinom{n}{\delta}\cdot p^\delta \cdot \Gamma(\delta, p)} \cdot \underline{p_A} \\
\label{eq:VinB} \BP(V \in B) &\leq \frac{1}{\tbinom{n}{\delta}\cdot p^\delta \cdot \Gamma(\delta, p) } \cdot \tbinom{z_{\max}}{\delta}\cdot \overline{p_B}
\end{align}    
\end{small}

where $z_{\max}=\argmax z$ s.t. $\tbinom{n}{z} p^z (1-p)^{(n-z)} \leq \overline{p_B}$.

In order to obtain $\BP(V \in A) > \BP(V \in B)$, from Eq.(\ref{eq:VinA}) and Eq.(\ref{eq:VinB}), we need $\rad_D\!=\!\argmax\delta$
\begin{small}
\begin{equation}
    \textrm{\ s.t.\ } \underline{p_A} \geq \tbinom{z_{\max}}{\delta} \cdot \overline{p_B} \Longleftrightarrow \tbinom{z_{\max}}{\delta} \leq \underline{p_A}/\overline{p_B}
\nonumber
\end{equation}    
\end{small}
where $z_{\max}=\argmax z$ s.t. $\tbinom{n}{z} p^z (1-p)^{(n-z)} \leq \overline{p_B}$.
%
\end{proof}

\begin{claim}
\begin{small}
$\BP(V \in A) \geq \frac{1}{\tbinom{n}{\delta}\cdot p^\delta \cdot \Gamma(\delta, p)} \cdot \underline{p_A}$    
\end{small}
\begin{proof}
Recall that $\sum_A \tbinom{n}{z} p^z (1-p)^{(n-z)}=\underline{p_A}$.
\begin{equation}
\small
\begin{aligned}
    \BP(V \in A)
    = & \sum_A \frac{1}{\Gamma(\delta, p)}\tbinom{n-\delta}{z-\delta}\cdot p^{z-\delta} (1-p)^{n-z} \\
    \geq & \frac{1}{\tbinom{n}{\delta}\cdot p^\delta \cdot \Gamma(\delta, p) } \sum_A \tbinom{n}{z} p^z (1-p)^{(n-z)} \\
    = & \frac{1}{\tbinom{n}{\delta}\cdot p^\delta \cdot \Gamma(\delta, p)} \cdot \underline{p_A} 
\nonumber
\end{aligned}
\end{equation}
\vspace{-3mm}
\end{proof}
\end{claim}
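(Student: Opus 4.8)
The plan is to reduce the claim to a pointwise comparison of probability mass functions via the exact likelihood ratio already computed in the proof of Lemma~\ref{lem:np_bernoulli}, namely $\frac{\mu_V(x_z)}{\mu_W(x_z)} = \frac{\tbinom{z}{\delta}}{\tbinom{n}{\delta}\cdot p^\delta \cdot \Gamma(\delta,p)}$. This identity is exactly what lets me convert a statement comparing two different Bernoulli probabilities over the \emph{same} event $A$ into a single uniform bound over the summation region, so no further probabilistic machinery (Neyman--Pearson, etc.) is needed beyond what the lemma supplies.

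First I would expand $\BP(V\in A)=\sum_{z\in A}\mu_V(x_z)$, where $\mu_V(x_z)=\frac{1}{\Gamma(\delta,p)}\tbinom{n-\delta}{z-\delta}\cdot p^{z-\delta}(1-p)^{n-z}$ is the (renormalized) mass of $V$ at the configuration with $z$ deletions, and $\mu_W(x_z)=\tbinom{n}{z}p^z(1-p)^{n-z}$. Substituting the likelihood ratio rewrites each summand as $\frac{\tbinom{z}{\delta}}{\tbinom{n}{\delta}\cdot p^\delta \cdot \Gamma(\delta,p)}\,\mu_W(x_z)$, which isolates the only quantity that varies with $z$ inside the sum, namely $\tbinom{z}{\delta}$.

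The key step is then a single monotone bound: since every index contributing to the support of $V$ satisfies $z\geq \delta$, the elementary inequality $\tbinom{z}{\delta}\geq (z/\delta)^\delta \geq 1$ (the same chain invoked earlier via~\cite{UpperBou50:online}) gives $\tbinom{z}{\delta}\geq 1$ throughout $A$. Replacing $\tbinom{z}{\delta}$ by $1$ only decreases each term, after which the $z$-independent factor $\frac{1}{\tbinom{n}{\delta}\cdot p^\delta \cdot \Gamma(\delta,p)}$ factors out of the sum. What remains is $\sum_{z\in A}\mu_W(x_z)=\BP(W\in A)=\underline{p_A}$ by the defining property of $A$, yielding precisely $\BP(V\in A)\geq \frac{1}{\tbinom{n}{\delta}\cdot p^\delta \cdot \Gamma(\delta,p)}\,\underline{p_A}$.

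The hard part will be the bookkeeping around the truncated support and the normalization constant $\Gamma(\delta,p)$, rather than any deep inequality. I would need to confirm that $A$ contains only indices $z\geq\delta$, so that $\tbinom{z}{\delta}\geq 1$ is legitimate and $\tbinom{n-\delta}{z-\delta}$ is well defined, and to carry the $\tfrac{1}{\Gamma(\delta,p)}$ factor inherited from $V$'s conditioned support through every line without dropping it. Once these are in place, the argument is a one-line lower bound that mirrors Claim~\ref{claim1} from the Staircase case verbatim in structure.
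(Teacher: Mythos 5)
Your proposal is correct and takes essentially the same route as the paper: the paper's single inequality step is precisely your likelihood-ratio rewriting $\mu_V(x_z)=\frac{\tbinom{z}{\delta}}{\tbinom{n}{\delta}\cdot p^\delta\cdot \Gamma(\delta,p)}\,\mu_W(x_z)$ combined with the bound $\tbinom{z}{\delta}\geq 1$, after which the $z$-independent constant factors out and $\sum_A \mu_W(x_z)=\underline{p_A}$ finishes the argument. Your bookkeeping caveats (the support of $V$ forcing $z\geq\delta$, and carrying the $\Gamma(\delta,p)$ normalization through every line) are exactly the implicit assumptions the paper's proof also relies on.
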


\begin{claim}
\begin{small}
 $\BP(V \in B) \leq \frac{1}{\tbinom{n}{\delta}\cdot p^\delta \cdot \Gamma(\delta, p) } \cdot \tbinom{z_{\max}}{\delta}\cdot \overline{p_B}$,\\ where $z_{\max}=\argmax z$ s.t. $\tbinom{n}{z} p^z (1-p)^{(n-z)} \leq \overline{p_B}$.   
\end{small}
\begin{proof}
Recall that $\sum_B \tbinom{n}{z} p^z (1-p)^{(n-z)}=\overline{p_B}$.
\begin{equation}
\small
\begin{aligned}
    \BP(V \in B)
    = & \sum_B \frac{1}{\Gamma(\delta, p)}\tbinom{n-\delta}{z-\delta} p^{z-\delta} (1-p)^{n-z} \\
    = & \frac{1}{\tbinom{n}{\delta}\cdot p^\delta \cdot \Gamma(\delta, p) } \sum_B \tbinom{z}{\delta}\cdot \tbinom{n}{z} p^z (1-p)^{(n-z)} \\
    \leq & \frac{1}{\tbinom{n}{\delta}\cdot p^\delta \cdot \Gamma(\delta, p) } \cdot \tbinom{z_{\max}}{\delta}\cdot \overline{p_B}
\nonumber
\end{aligned}
\end{equation}
where $z_{\max}=\argmax z$ s.t. $\tbinom{n}{z} p^z (1-p)^{(n-z)} \leq \overline{p_B}$.
\end{proof}
\end{claim}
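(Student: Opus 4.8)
The plan is to evaluate $\BP(V\in B)$ exactly from the law of $V$ and then reduce it to the known mass $\BP(W\in B)=\overline{p_B}$ via the likelihood-ratio identity of Lemma~\ref{lem:np_bernoulli}. First I would write $\BP(V\in B)=\sum_{z\in B}\mu_V(x_z)$, where, since $V\sim\FB(n-\delta,p)$ is the deletion variable whose $\delta$ forced zeros constrain the number of zero-embeddings to $z\ge\delta$, the renormalized mass is $\mu_V(x_z)=\tfrac{1}{\Gamma(\delta,p)}\tbinom{n-\delta}{z-\delta}\,p^{z-\delta}(1-p)^{n-z}$. The factor $1/\Gamma(\delta,p)$ is a positive constant independent of $z$, so it can be carried through the sum untouched.

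The algebraic heart is the binomial identity $\tbinom{n}{z}\tbinom{z}{\delta}=\tbinom{n}{\delta}\tbinom{n-\delta}{z-\delta}$. Using it to substitute $\tbinom{n-\delta}{z-\delta}=\tfrac{\tbinom{z}{\delta}}{\tbinom{n}{\delta}}\tbinom{n}{z}$ and writing $p^{z-\delta}=p^{z}/p^{\delta}$, each term becomes $\mu_V(x_z)=\tfrac{\tbinom{z}{\delta}}{\tbinom{n}{\delta}\,p^{\delta}\,\Gamma(\delta,p)}\,\tbinom{n}{z}p^{z}(1-p)^{n-z}$, i.e.\ exactly the likelihood ratio times the $W$-mass $\mu_W(x_z)=\tbinom{n}{z}p^{z}(1-p)^{n-z}$ already computed in Lemma~\ref{lem:np_bernoulli}. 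I would then pull the $z$-independent prefactor $\tfrac{1}{\tbinom{n}{\delta}\,p^{\delta}\,\Gamma(\delta,p)}$ out of the sum, leaving the single quantity $\sum_{z\in B}\tbinom{z}{\delta}\,\tbinom{n}{z}p^{z}(1-p)^{n-z}$ to control. This mirrors the change-of-measure bookkeeping used for the companion region $A$, except that there one bounds $\tbinom{z}{\delta}\ge1$ from below, whereas here the factor $\tbinom{z}{\delta}$ must be bounded from above.

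To finish I would bound $\tbinom{z}{\delta}$ uniformly over $B$. The combinatorial factor $\tbinom{z}{\delta}$ is nondecreasing in $z$, so it suffices to cap the largest index the region $B$ can contain. Here the definition of $z_{\max}$ enters: because the nonnegative masses over $B$ sum to exactly $\overline{p_B}$, no index $z\in B$ can carry $W$-mass $\tbinom{n}{z}p^{z}(1-p)^{n-z}>\overline{p_B}$ on its own, so every $z\in B$ lies in $\{z:\tbinom{n}{z}p^{z}(1-p)^{n-z}\le\overline{p_B}\}$ and hence satisfies $z\le z_{\max}$. Monotonicity then gives $\tbinom{z}{\delta}\le\tbinom{z_{\max}}{\delta}$ for all $z\in B$; factoring this maximum out and substituting $\sum_{z\in B}\tbinom{n}{z}p^{z}(1-p)^{n-z}=\overline{p_B}$ delivers the claimed inequality $\BP(V\in B)\le\tfrac{1}{\tbinom{n}{\delta}\,p^{\delta}\,\Gamma(\delta,p)}\,\tbinom{z_{\max}}{\delta}\,\overline{p_B}$.

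The step I would flag as the crux is precisely this uniform cap. Its justification is short but essential, and it is where the otherwise opaque definition of $z_{\max}$ becomes meaningful: everything preceding it (the PMF substitution, the binomial identity, and pulling out the constant prefactor) is routine, so the entire content of the claim is concentrated in the observation that a single index exceeding the total budget $\overline{p_B}$ cannot belong to $B$, which forces $z\le z_{\max}$ throughout the support of the worst-case region. I would therefore devote most care to stating this capping argument cleanly, since a sloppy treatment of the shape of $B$ is the only place the proof could go wrong.
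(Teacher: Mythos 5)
Your proof is correct and follows essentially the same route as the paper's: the same PMF substitution for $\mu_V(x_z)$, the same binomial identity $\tbinom{n}{z}\tbinom{z}{\delta}=\tbinom{n}{\delta}\tbinom{n-\delta}{z-\delta}$, extraction of the $z$-independent prefactor $\frac{1}{\tbinom{n}{\delta}\,p^{\delta}\,\Gamma(\delta,p)}$, and the uniform cap $\tbinom{z}{\delta}\le\tbinom{z_{\max}}{\delta}$ over $B$. Your explicit justification of that cap --- each $z\in B$ carries nonnegative $W$-mass bounded by the total $\overline{p_B}$, hence lies in the set whose maximum defines $z_{\max}$, so $z\le z_{\max}$ and monotonicity of $\tbinom{z}{\delta}$ in $z$ applies --- is a welcome addition, since the paper's proof asserts the final inequality without spelling out this step.
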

\vspace{-1mm}





\section{Certified Inference Algorithm} \label{appendix:alg}
\vspace{-6mm}

\begin{figure}[H]
    \centering
    \begin{subfigure}{0.49\columnwidth}
        \includegraphics[width=\textwidth]{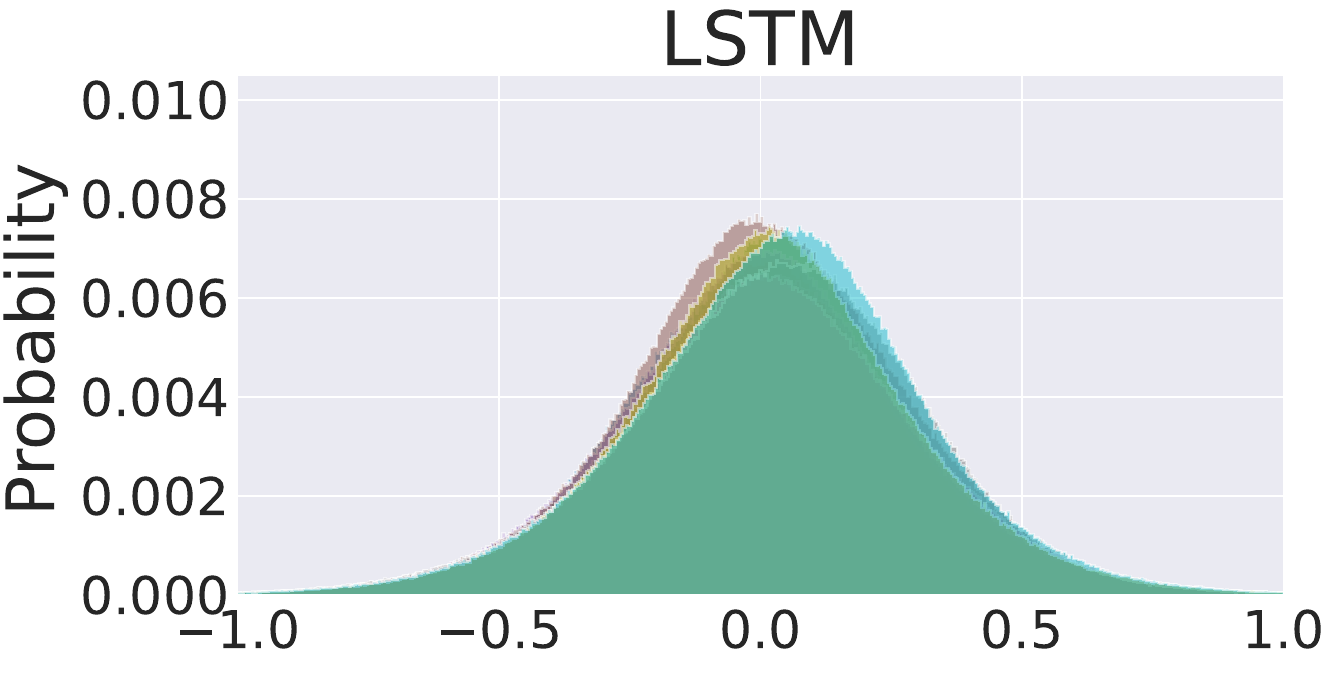}
    \end{subfigure}
    \begin{subfigure}{0.49\columnwidth}
        \includegraphics[width=\textwidth]{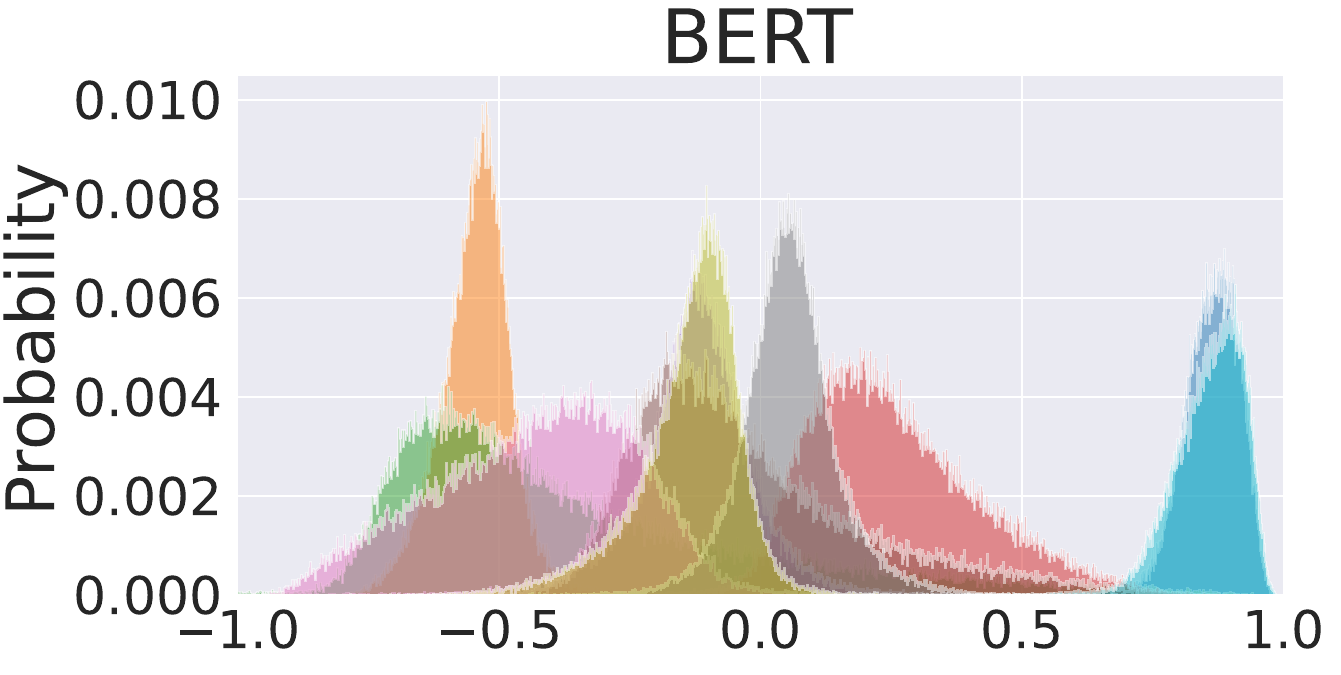}
    \end{subfigure}
    \vspace{-5mm}
    \caption{For LSTM and BERT, the distributions of the embedding elements for all words in ten randomly chosen dimensions. These distributions approximate Gaussian distributions with different means.}
    \label{fig:emb_values}
    \vspace{-4mm}
\end{figure}


\vspace{-1mm}
\begin{algorithm}[H]
\small
\caption{Certified inference algorithm} \label{alg:certify}
\begin{algorithmic}[1]
\Require Test sample $x$, $h$, $T$, $L_{emb}$, $\theta_T(\cdot, \rho)$, $\phi_T(\cdot, \varepsilon)$, $N$, $N_0$
\State $u\cdot w\gets L_{emb}(x)$
\State $\texttt{counts0} \gets \textsc{SampleUnderNoise}(h, \theta(u, \rho), \phi(w, \varepsilon), N_0)$
\State $y_A \gets$ top index in \texttt{counts0}
\State $\texttt{counts} \gets \textsc{SampleUnderNoise}(h, \theta(u, \rho), \phi(w, \varepsilon), N)$
\State $\underline{p_A} \gets \textsc{LowerConfBound}(\texttt{counts}[p_A], N, 1-\alpha)$
\State \textbf{if} $\underline{p_A} > \frac{1}{2}$ \textbf{return} prediction $y_A$ and radius $\rad_T$
\State \textbf{else} \textbf{return} ABSTAIN
\end{algorithmic}
\end{algorithm}


\section{Additional Experimental Results}
\vspace{-3mm}

Table~\ref{tab:clean_acc} shows that Text-CRS can train robust models with little sacrifice in performance. Table~\ref{tab:attack_acc} shows that all attacks result in a significant reduction in model accuracy.

\vspace{0.02in}
\noindent\textbf{Ablation Study for Enhanced Training Toolkit.}
We compare the accuracy of Text-CRS against word insertions, with and without the training toolkit. For LSTM, both OGN and ESR are added, improving the average clean accuracy from 79.2\% to 84.1\%, see Figure~\ref{fig:noise3_enhance_lstm_acc} (left). Figure~\ref{fig:noise3_enhance_lstm_acc} (right) shows a significant increase in certified accuracy against the SynonymInsert, particularly in Amazon. 
We also evaluate the certified accuracy under different radii, see Figure~\ref{fig:noise3_enhance_lstm_radius}.
The results show that the training toolkit lead to higher certified accuracy at the same radius. For BERT, acceptable accuracy is achieved for low and medium noise levels. 
However, the accuracy drops to 50\% for high noise levels, indicating a failure to identify the gradient update direction. Thus, we employ PLM to guide the training, increasing the accuracy to 84\%, see the last row, column 8 of Table~\ref{tab:certi_acc_benchmark}. 

\begin{figure}[!h] %
    \centering 
    \begin{subfigure}[b]{0.49\columnwidth}
        \centering
        \includegraphics[width=\textwidth]{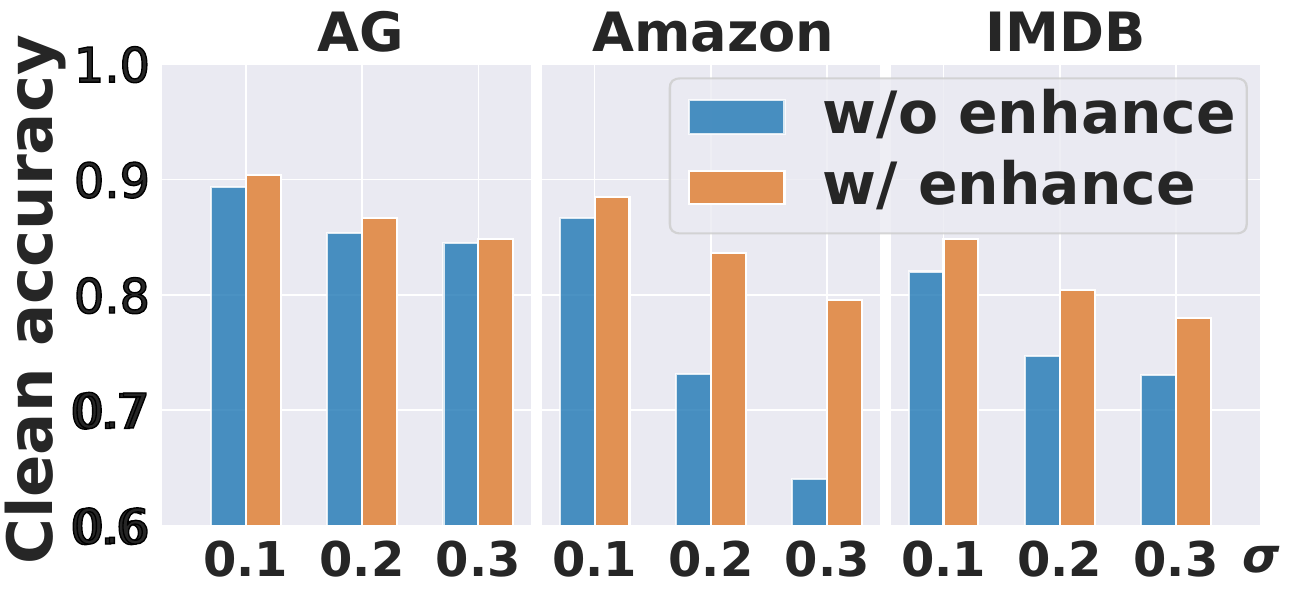}
    \end{subfigure}
    \begin{subfigure}[b]{0.49\columnwidth}
        \centering
        \includegraphics[width=\linewidth]{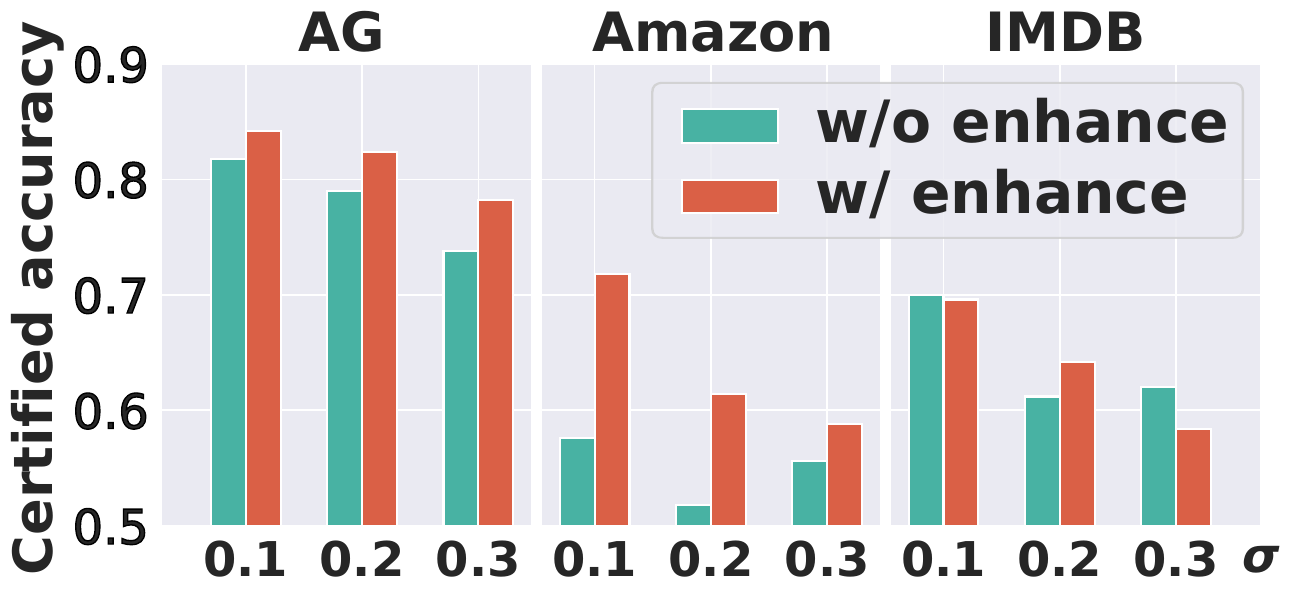}
    \end{subfigure}   
    \vspace{-5mm}
    \caption{Accuracy w/ and w/o enhanced training on different $\sigma$ under LSTM. Left: clean accuracy. Right: certified accuracy against the SynonymInsert attack.}\vspace{-0.2in}
    \label{fig:noise3_enhance_lstm_acc}
\end{figure}

\vspace{-1mm}
\begin{figure}[!h]
    \centering
    \begin{subfigure}[b]{0.32\columnwidth}
        \centering
        \includegraphics[width=\textwidth]{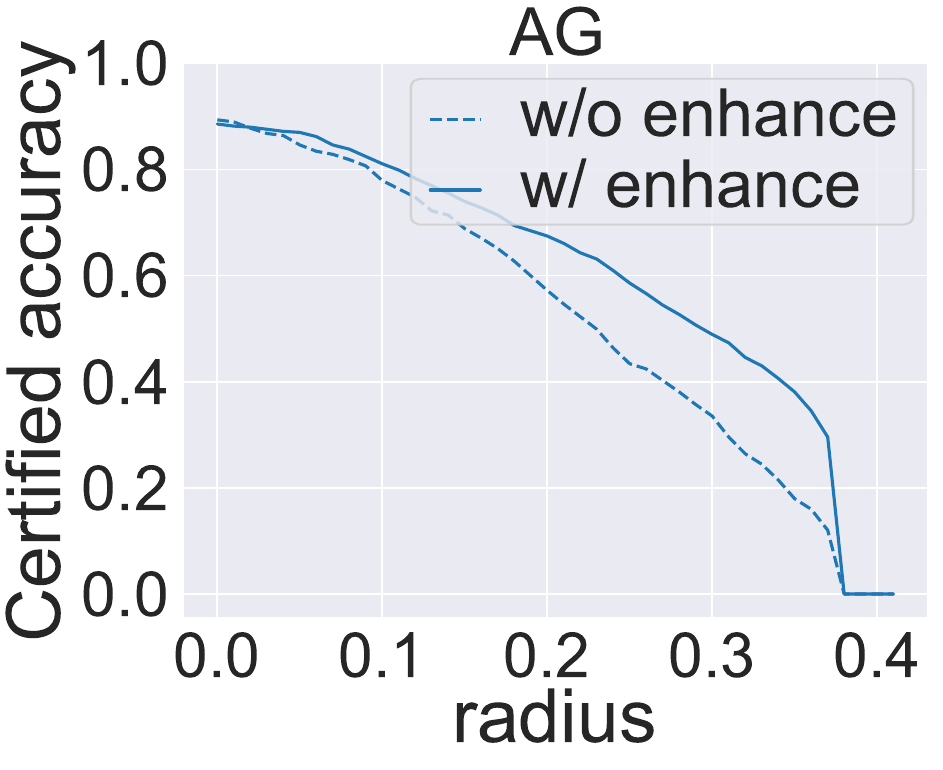}
    \end{subfigure}
    \begin{subfigure}[b]{0.32\columnwidth}
        \centering
        \includegraphics[width=\linewidth]{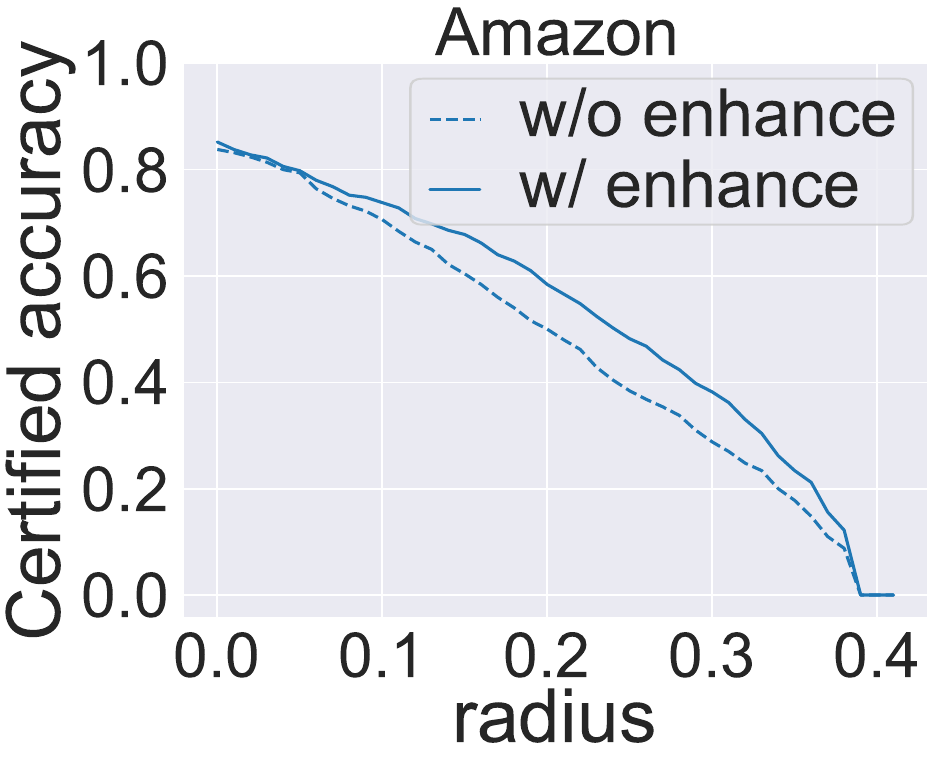}
    \end{subfigure}   
    \begin{subfigure}[b]{0.32\columnwidth}
        \centering
        \includegraphics[width=\linewidth]{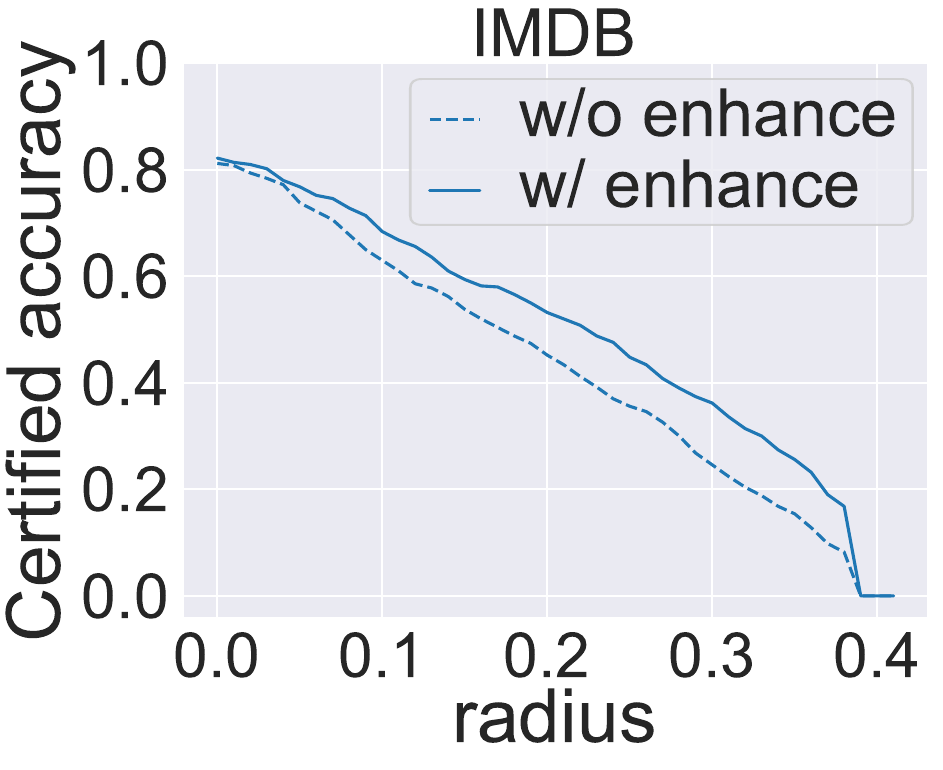}
    \end{subfigure}


    \vspace{-2mm}
    \caption{Certified accuracy at different radii w/ and w/o enhanced training on the noise level $\sigma=0.1$ under LSTM.}\vspace{-3mm} 
    \label{fig:noise3_enhance_lstm_radius}
\end{figure}

\begin{table}[!h]
\centering
\setlength\tabcolsep{4pt}
\scriptsize
\caption{Clean model accuracy under vanilla training (\emph{Clean vanilla}) and under robust training (\emph{Clean Acc.}).}
\vspace{-2mm}
\begin{tabular}{ccccccc}
\toprule
\multirow{2}{*}{\begin{tabular}[c]{@{}c@{}}Dataset\\ (Model) \end{tabular}} & \multirow{2}{*}{\begin{tabular}[c]{@{}c@{}} \emph{Clean}\\ \emph{vanilla} \end{tabular}} & \multirow{2}{*}{Noise} & \multicolumn{4}{c}{\emph{Clean Acc.}} \\
\cmidrule{4-7}
 &  &  & Substitution & Reordering & Insertion & Deletion \\
 \midrule
\multirow{3}{*}{\begin{tabular}[c]{@{}c@{}}AG\\      (LSTM)\end{tabular}} &  & Low & \textbf{90.12\%} & \textbf{91.67\%} & 90.38\% & \textbf{91.53\%} \\
 & 91.79\% & Med. & 89.59\% & 91.21\% & \textbf{86.66\%} & 91.01\% \\
 &  & High & 88.82\% & 91.40\% & 84.83\% & 90.38\% \\
  \midrule
\multirow{3}{*}{\begin{tabular}[c]{@{}c@{}}AG\\      (BERT)\end{tabular}} &  & Low & \textbf{93.24\%} & \textbf{93.62\%} & \textbf{93.43\%} & 93.70\% \\
 & 93.68\% & Med. & 92.75\% & 93.55\% & 93.05\% & \textbf{93.71\%} \\
 &  & High & 92.63\% & 93.43\% & 91.78\% & 93.51\% \\
  \midrule
\multirow{3}{*}{\begin{tabular}[c]{@{}c@{}}Amazon\\      (LSTM)\end{tabular}} &  & Low & \textbf{87.86\%} & \textbf{88.59\%} & \textbf{88.51\%} & \textbf{88.71\%} \\
 & 89.82\% & Med. & 87.29\% & 88.44\% & 83.61\% & 88.62\% \\
 &  & High & 86.23\% & 88.36\% & 79.53\% & 88.10\% \\
  \midrule
\multirow{3}{*}{\begin{tabular}[c]{@{}c@{}}Amazon\\      (BERT)\end{tabular}} &  & Low & \textbf{93.91\%} & 94.11\% & \textbf{94.64\%} & \textbf{94.73\%} \\
 & 94.35\% & Med. & 93.07\% & 94.27\% & 94.43\% & 94.49\% \\
 &  & High & 91.62\% & \textbf{94.30\%} & 92.89\% & 93.84\% \\
  \midrule
\multirow{3}{*}{\begin{tabular}[c]{@{}c@{}}IMDB\\      (LSTM)\end{tabular}} &  & Low & \textbf{83.39\%} & \textbf{86.85\%} & \textbf{84.86\%} & \textbf{86.33\%} \\
 & 86.17\% & Med. & 82.58\% & 86.08\% & 80.45\% & \textbf{86.32\%} \\
 &  & High & 81.07\% & 86.11\% & 77.96\% & 84.76\% \\
  \midrule
\multirow{3}{*}{\begin{tabular}[c]{@{}c@{}}IMDB\\      (BERT)\end{tabular}} &  & Low & \textbf{91.52\%} & \textbf{92.08\%} & \textbf{91.88\%} & \textbf{92.46\%} \\
 & 91.52\% & Med. & 90.42\% & 91.92\% & 91.68\% & 92.17\% \\
 &  & High & 88.68\% & 91.99\% & 87.49\% & 90.55\% \\
 \bottomrule
\end{tabular}
\label{tab:clean_acc}
\end{table}

\begin{table}[!h]
\centering
\setlength\tabcolsep{2pt}
\scriptsize
\caption{\emph{Attack accuracy} of different real-world attacks.}
\vspace{-2mm}
\begin{tabular}{cccccc}
\toprule
\begin{tabular}[c]{@{}c@{}}Dataset\\ (Model)\end{tabular} & \multicolumn{1}{c}{\begin{tabular}[c]{@{}c@{}}Text-\\ Fooler\cite{jin2020bert} \end{tabular}} & \multicolumn{1}{c}{\begin{tabular}[c]{@{}c@{}}Word\\      Reorder\cite{moradi2021evaluating}\end{tabular}} & \multicolumn{1}{c}{\begin{tabular}[c]{@{}c@{}} Synonym\\  Insert\cite{morris2020textattack}\end{tabular}} & \multicolumn{1}{c}{\begin{tabular}[c]{@{}c@{}}BAE-\\      Insert\cite{garg2020bae}\end{tabular}} & \multicolumn{1}{c}{\begin{tabular}[c]{@{}c@{}}Input\\      Reduction\cite{feng2018pathologies}\end{tabular}} \\
\midrule
AG (LSTM) & 2.46\% & 76.38\% & 70.55\% & - & 40.85\% \\
AG (BERT) & 6.67\% & 49.36\% & 75.07\% & 28.53\% & 56.59\% \\
Amazon(LSTM) & 0.09\% & 54.16\% & 61.25\% & - & 30.85\% \\
Amazon(BERT) & 15.38\% & 5.29\% & 66.39\% & 9.30\% & 41.68\% \\
IMDB (LSTM) & 0.00\% & 40.20\% & 58.12\% & - & 30.65\% \\
IMDB (BERT) & 21.81\% & 7.54\% & 57.03\% & 18.66\% & 36.03\% \\
\bottomrule
\end{tabular}
\vspace{-2mm}
\label{tab:attack_acc}
\end{table}

\newpage
\section{Meta-Review}

\vspace{-2mm}
\subsection{Summary}
\vspace{-2mm}
This paper presents the first generalized framework for certifying the robustness of text classification models against textual adversarial attacks. The authors specifically tackle the vulnerability of language models to such attacks and demonstrate the effectiveness of their framework on a range of models and attack scenarios.

\vspace{-2mm}
\subsection{Scientific Contributions}
\vspace{-2mm}
\begin{itemize}
\item Addresses a Long-Known Issue
\item Provides a Valuable Step Forward in an Established Field
\end{itemize}

\vspace{-2mm}
\subsection{Reasons for Acceptance}
\vspace{-2mm}
\begin{enumerate}
\item The paper addresses a long-known issue: The authors address that previous approaches to certified robustness against word-level attacks have been constrained to synonym substitution, while widely utilized attacks rely on word reordering, insertion, or deletion.

\item The paper provides a valuable step forward in an established field. The paper highlights the limitations of currently certified robustness and emphasizes the importance of provable robustness guarantees. To address these concerns, the authors propose a generalized framework that offers guarantees against all four classes of word-level textual adversarial attacks.

\item The paper creates a new tool to enable future science. The authors propose a training toolkit designed to enhance the robustness of language models, which has the potential to inspire and facilitate future research.
\end{enumerate}





\end{document}